\tikzstyle{every node}=[circle, draw, fill=black!50, inner sep=0pt, minimum width=4pt]
\tikzstyle{rouge}=[circle, draw, fill=red, inner sep=0pt, minimum width=6pt]
\tikzstyle{bleu}=[circle, draw, fill=blue, inner sep=0pt, minimum width=6pt]
\tikzstyle{petitrouge}=[circle, draw, fill=red, inner sep=0pt, minimum width=4pt]
\tikzstyle{petitbleu}=[circle, draw, fill=blue, inner sep=0pt, minimum width=4pt]
\tikzstyle{texte}=[draw=none, fill=none]
\tikzstyle{white}=[circle, draw, fill=black!0, inner sep=0pt, minimum width=4pt]
\definecolor{ffqqqq}{rgb}{1,0,0}
\definecolor{qqzzqq}{rgb}{0,0.6,0}
\definecolor{qqqqff}{rgb}{0,0,1}
\definecolor{uuuuuu}{rgb}{0.27,0.27,0.27}
\newtheorem{remark}{Remark}
\newtheorem{theorem}{Theorem}
\newtheorem{lemma}{Lemma}
\newenvironment{proof}{\medskip\noindent{\bf Proof.}}{\hfill$\Box$\vspace*{1mm}\medskip}
\newcommand{\ut}{\underline{t}}
\newcommand{\p}{\mathbf{p}}
\newcommand{\q}{\mathbf{q}}
\newcommand{\PP}{\mathcal{P}}
\newcommand{\QQ}{\mathcal{Q}}
\newcommand{\eps}{\varepsilon}
\renewcommand{\phi}{\varphi}
\renewcommand{\P}{\mathbb{P}}
\newcommand{\E}{\mathbb{E}}
\newcommand{\N}{\mathbb{N}}
\newcommand{\cA}{\mathcal{A}}
\newcommand{\cB}{\mathcal{B}}
\newcommand{\cP}{\mathcal{P}}
\newcommand{\cD}{\mathcal{D}}
\def\ds1{\mathds{1}}
\renewcommand{\epsilon}{\varepsilon}
\renewcommand{\tilde}{\widetilde}
\newlength{\minipagewidth}
\newcommand{\beq}{\begin{equation}}
\newcommand{\eeq}{\end{equation}}
\newcommand{\beqa}{\begin{eqnarray}}
\newcommand{\eeqa}{\end{eqnarray}}
\newcommand{\beqan}{\begin{eqnarray*}}
\newcommand{\eeqan}{\end{eqnarray*}}
\def\ba#1\ea{\begin{align*}#1\end{align*}} 
\def\banum#1\eanum{\begin{align}#1\end{align}} 
\begin{document}
\title{Coordination without communication: \\
optimal regret in two players multi-armed bandits}
\author{S\'ebastien Bubeck \\
Microsoft Research
\and Thomas Budzinski \thanks{This work was done while T. Budzinski was visiting Microsoft Research.} \\
University of British Columbia}
\maketitle

\begin{abstract}
We consider two agents playing simultaneously the same stochastic three-armed bandit problem. The two agents are cooperating but they cannot communicate. We propose a strategy with no collisions at all between the players (with very high probability), and with near-optimal regret $O(\sqrt{T \log(T)})$. We also argue that the extra logarithmic term $\sqrt{\log(T)}$ should be necessary by proving a lower bound for a full information variant of the problem.
\end{abstract}

\section{Introduction} \label{sec:intro}
We consider the (cooperative) multi-player version of the classical stochastic multi-armed bandit problem. We focus on the case of two players, Alice and Bob, and three actions. The problem can be defined as follows. The environment\footnote{We focus on $\{0,1\}$-valued losses. Note that it is easy to reduce $[0,1]$-valued losses to $\{0,1\}$.} is described by the mean losses $\p = (p_1, p_2, p_3) \in [0,1]^3$ for the three actions. The parameter $\p$ is unknown to the players. Denote $(\ell_t(i))_{1 \leq i \leq 3, 1 \leq t \leq T}$ for a sequence of independent random variables such that $\P(\ell_t(i) = 1) = p_i$ and $\P(\ell_t(i) = 0) = 1-p_i$. At each time step $t=1, \hdots, T$, Alice and Bob choose independently two actions $i_t^A \in \{1,2,3\}$ and $i_t^B \in \{1,2,3\}$. If they collide, i.e. $i_t^A = i_t^B$, then they both suffer the maximal loss of $1$. Otherwise they respectively suffer the losses $\ell_t(i_t^A)$ and $\ell_t(i_t^B)$. As is usual in bandit scenarios, each player receives only its own loss as feedback (in particular when a player receives a loss of $1$, they don't know if they have collided or if it came from the loss $\ell_t$). The goal of the players is to minimize their (combined) cumulative losses. To evaluate the performance of Alice and Bob we measure the regret $R_T$, defined as the (expected) difference between their cumulative losses and the best they could have done if they knew $\p$, namely $T \cdot \p^*$ where $\p^* = \min(p_1 + p_2, p_1 + p_3, p_2 + p_3)$. That is:
\begin{equation} \label{eq:regret}
R_T = \sum_{t=1}^T \bigg( 2 \cdot \mathbbm{1}_{i_t^A = i_t^B} + \mathbbm{1}_{i_t^A \neq i_t^B} (p_{i_t^A} + p_{i_t^B}) - \p^* \bigg) \,.
\end{equation}

\subsection{Main result and related works}
The above problem is motivated by cognitive radio applications, where players correspond to devices trying to communicate with a cell tower, and the actions correspond to different channels. The model was first introduced roughly at the same time in \citet{LJP08, LZ10, AMTS11}, and has been extensively studied since then \citep{AM14, RSS16, BBMKP17, LM18, BP18, ALK19, BLPS19}. Despite all this attention, at the moment the state of the art regret bound is $\tilde{O}(T^{3/4})$. The latter regret was obtained for two players in \cite{BLPS19} (in fact it holds in the more general non-stochastic case), and it can also be recovered from the bounds in \cite{LM18, BP18} as we explain in the end of Section \ref{sec:warmup}. On the other hand no non-trivial lower bound is known (i.e. only $\Omega(\sqrt{T})$ is known). A near-optimal regret of $\tilde{O}(\sqrt{T})$ has been obtained under various extra assumptions such as revealed collisions, or assuming that players can abstain from playing, or assuming that the mean losses are bounded away from $1$ \citep{LM18, BP18, BLPS19}.

Our main contribution is the first $\tilde{O}(\sqrt{T})$ algorithm for this problem, in the case where there are $3$ arms (Theorem \ref{thm:main} below assumes shared randomness between the players, and Theorem \ref{thm:add} gives a deterministic strategy with $\tilde{O}(\sqrt{T})$ regret):


\begin{theorem} \label{thm:main}
There exists a randomized strategy (with shared randomness) for Alice and Bob such that, for any $\p \in [0,1]^3$, we simultaneously have
\[
\E [R_T] \leq 2^{20} \sqrt{T \log(T)} \,
\]
and
\begin{equation} \label{eq:nocollisions}
\P \left( \forall t \in [T], i_t^A \neq i_t^B \right) \geq 1-\frac{1}{T} \,,
\end{equation}
where the expectation and the probability are with respect to both the loss sequence and the randomness in Alice and Bob's strategies\footnote{By our method, we can actually obtain a slightly stronger version where, with probability at least $1-1/T$ with respect to the i.i.d. loss sequence, we have both the expected regret bound and almost surely no collision (with respect to the players' randomness).}.
\end{theorem}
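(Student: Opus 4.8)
The plan is to build the strategy in two phases: a short exploration phase and a long exploitation phase, with the two players using their shared randomness to agree on a common partition of time and a common tie-breaking rule. During exploitation, each player must end up playing a \emph{fixed} arm, and the two fixed arms must be distinct and must together achieve (essentially) $\p^*$; the no-collision guarantee will then follow because after the exploration phase the players never move, and during exploration collisions happen only on a set of $O(\sqrt{T\log T})$ rounds which we accept paying for.

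First I would set up the \textbf{exploration phase}. Using shared randomness, Alice and Bob agree in advance on a schedule of length $m = \Theta(\sqrt{T\log T})$ in which, at each exploration round, the pair of arms they sample is a fixed (known to both) pair among the three possible pairs $\{1,2\},\{1,3\},\{2,3\}$, cycled so that each arm is sampled $\Theta(m)$ times by each player. Since each player knows which arm \emph{they} pulled and there are no collisions in this schedule by construction, each player accumulates $\Theta(m)$ i.i.d. samples of $\ell_t(i)$ for every $i\in\{1,2,3\}$. With $m=\Theta(\sqrt{T\log T})$ the empirical mean $\wh p_i$ of each arm concentrates to within $O(\sqrt{\log T / m}) = O((\log T / T)^{1/4})$ of $p_i$, uniformly over the three arms, with probability $\geq 1 - 1/T$. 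Crucially, because the randomness used to generate the schedule is \emph{shared}, and the loss sequence is the same for both players on the arms they both eventually need to reason about --- here I would be careful: Alice and Bob observe \emph{different} realizations since they pull arms at different times --- each player independently forms their own estimate $\wh\p^{A}, \wh\p^{B}$, and both are accurate on the same good event.

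Next is the \textbf{consensus/exploitation step}, which I expect to be the main obstacle. After exploration, each player computes the best pair $\wh{\p}^{*,A} = \argmin$ over the three pairs of $\wh p^A_i + \wh p^A_j$, and similarly Bob. The difficulty is that Alice's and Bob's estimates are different, so their argmins could disagree, and then in exploitation they might both choose the same arm (collision for $T - m$ rounds --- catastrophic) or choose a suboptimal disjoint pair. The estimation error $O((\log T/T)^{1/4})$ is far larger than what a naive union bound can rule out near ties. The fix is the following: the players do \emph{not} each independently pick an arm; instead they use shared randomness to pick, once and for all, a uniformly random bijection between $\{\text{Alice},\text{Bob}\}$ and the two arms of whatever pair is selected, AND --- to resolve disagreement --- they run a \emph{symmetry-broken} rule. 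Concretely I would have them compare the three empirical pair-sums and, when two pair-sums are within $O((\log T/T)^{1/4})$ of each other (so the identity of $\p^*$ is genuinely ambiguous), note that \emph{any} disjoint allocation among the two (near-)optimal pairs is near-optimal: the two pairs share exactly one arm, so the three arms involved can be split as (shared arm to one player, and the two private arms --- but those are two arms, need a third)... this is exactly where the $3$-arm structure must be exploited carefully. The real mechanism, which I would develop in detail, is: since there are only $3$ arms, the complement of the best pair is a \emph{single} arm $\wh k$; the strategy is for the players to jointly identify the arm to \emph{avoid}, and then split the remaining two arms by shared randomness. When Alice and Bob disagree on which arm to avoid, their two candidate ``avoid'' arms $\wh k^A \neq \wh k^B$ leave a common arm $j$ that \emph{both} consider good; one player takes $j$ and the other takes --- here again we need the third arm. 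So the genuinely delicate case is a near-three-way tie, and there I would argue that the regret per round is $O((\log T/T)^{1/4})$ anyway, so any fixed disjoint assignment (agreed by shared randomness) costs only $O((\log T/T)^{1/4}) \cdot T = O(T^{3/4}\sqrt[4]{\log T})$ --- which is too big. So the \emph{real} resolution must be finer: re-run a second, longer sub-exploration of length $\Theta(\sqrt{T\log T})$ restricted to the ambiguous arms to refine the estimates; iterate $O(\log\log T)$ times, doubling precision, until the gap is resolved or provably $O(1/\sqrt T)$. This adaptive, shared-randomness-driven elimination is the technical heart.

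Finally I would \textbf{assemble the regret bound and the no-collision bound}. On the good event (probability $\geq 1 - 1/T$): exploration and the $O(\log\log T)$ refinement rounds contribute $O(\sqrt{T\log T})$ total loss (collisions possible here, paying $\le 2$ per round, but only $O(\sqrt{T\log T})$ such rounds); exploitation contributes $0$ excess regret once the correct disjoint optimal pair is identified, and \emph{zero} collisions since the players sit on two distinct fixed arms for the remaining $T - O(\sqrt{T\log T})$ rounds. On the complementary event of probability $\le 1/T$, the regret is trivially $\le 2T$, contributing $\le 2$ to $\E[R_T]$. Combining, $\E[R_T] = O(\sqrt{T\log T})$, and $\P(\forall t, i_t^A\neq i_t^B) \geq 1 - 1/T$ because collisions only ever occur inside the (deterministically scheduled) exploration/refinement block, and even there the schedule is collision-free by construction on the good event; the $1/T$ slack absorbs the bad event. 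Tracking constants through to $2^{20}$ is routine. The one step I expect to fight with is making the adaptive elimination both communication-free (pure shared randomness plus each player's private observations, which must lead to the \emph{same} decision to continue or stop) and short enough; the trick is that the stopping decision is based on a statistic --- the empirical pair-gaps --- that, although computed from different samples by the two players, crosses the threshold $c\sqrt{\log T/(\text{current sample size})}$ for both simultaneously on the good event, since both empirical gaps are within that threshold of the true gap.
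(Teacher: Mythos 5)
There is a genuine gap, and it sits exactly where you flagged you would ``fight'': the consensus step. Your closing claim --- that the stopping/decision statistic ``crosses the threshold for both simultaneously on the good event, since both empirical gaps are within that threshold of the true gap'' --- is false. Being within $c\sqrt{\log T/m}$ of the true gap does not put the two players' empirical gaps on the \emph{same side} of a fixed threshold: if the adversary chooses $\p$ so that the true gap sits essentially at the threshold, then with constant probability Alice's statistic lands above it and Bob's below it. In that event either they sit on the same arm for the rest of the game (linear regret, and \eqref{eq:nocollisions} fails badly), or they enter different ``states'' of your adaptive elimination (one refines, the other exploits), which destroys the coordination the whole scheme relies on. Your iterative refinement does not cure this, because the decision to continue or stop refining is itself a threshold comparison made on different samples, so it reproduces the same disagreement problem at every scale; this is precisely the obstruction described in Section~\ref{sec:warmup} of the paper (``this is just pushing the problem to a different configuration of $\p$''). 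Also note that even setting the coordination issue aside, your own accounting of the ambiguous case gives $T\cdot(\log T/T)^{1/4}=T^{3/4}$, and the fix you sketch to beat that is the part that is missing.

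The paper's resolution is different in kind, not just in detail: the decision boundaries themselves are \emph{randomized} using the shared randomness (a random interface angle $\Theta$), and there is no explore-then-commit structure at all. Both players continuously map their running empirical estimate $\q_t^X$ through a common time-varying partition of $[0,1]^3$ (with safety margins of width $w_t\asymp\sqrt{\log T/t}$ and exploration regions that shrink over time), colored so that any two regions that estimates differing by less than $w_t$ can fall into prescribe non-colliding action pairs; the topological trick of routing the transition from $(2,1)$ to $(3,2)$ through $(1,2)$ across the random cut removes the fixed interface where a disagreement would force a collision. Disagreement near a boundary is then harmless for collisions by construction, and costly for regret only when the true $\p$ happens to lie within $O(w_t)$ of the \emph{random} interface, an event of probability $O(w_t/r_{\p})$ over $\Theta$ (Lemma~\ref{lem_proba_interface}), which sums to $O(\sqrt{T\log T})$. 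Without some mechanism of this type --- randomizing where the decision boundary falls rather than sharpening estimates around a fixed boundary --- your architecture cannot simultaneously achieve the stated regret and the no-collision guarantee for all $\p$.
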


The property \eqref{eq:nocollisions} is an important part of our result, and it points to a fundamental difference between our approach and all previous works on cooperative multi-player multi-armed bandits. Indeed, all previous works have proposed strategies that use collisions as a form of implicit communication between the players, since Alice can affect Bob's feedback by trying to force collisions. For example, assume as in \cite{LM18, BP18} that the mean-losses are bounded from above by $1-\mu$, i.e., $\|\p\|_{\infty} \leq 1-\mu$. Then if Bob plays an action for $\Omega(1/\mu)$ rounds and does not observe a single $0$ loss, he knows that with high probability Alice must have been playing that action too, effectively making communication possible. Leveraging this implicit communication device, \cite{LM18, BP18} obtain a strategy with regret $\tilde{O}(\sqrt{T} + 1/\mu)$ (we explain at the end of Section \ref{sec:warmup} how to use this result to obtain an algorithm with $\tilde{O}(T^{3/4})$ regret without any assumption). In \cite{BLPS19} another $\tilde{O}(T^{3/4})$ strategy is proposed. It is epoch-based, with Alice playing a fixed action in an epoch, and Bob playing a sleeping-bandit strategy where arms awaken as losses with value $0$ are observed (i.e., an arm is awake for Bob when he can guarantee that Alice is not there for this epoch). Thus we see that both methods heavily rely on collisions for implicit communication. The approach presented in this paper is fundamentally different, in that with very high probability the two players {\em do not collide at all}. Thus we achieve one of the key properties required by the underlying cognitive radio application, namely that the two agents {\em do not communicate in any way} once the game has started. 

\subsection{Replacing shared randomness by few collisions}
The strategy we build to prove Theorem \ref{thm:main} crucially relies on having shared randomness for Alice and Bob. We do not know whether this assumption can be relaxed, while still maintaining both $O(\sqrt{T \log(T)})$ regret and the no-collision property \eqref{eq:nocollisions}. We show however that if one is willing to give up on the no-collision property, and exploit the ``implicit communication" allowed by the extra losses of $1$ due to collisions, then one can in fact obtain $O(\sqrt{T \log(T)})$-regret with deterministic strategies:

\begin{theorem} \label{thm:add}
There exists a deterministic strategy for Alice and Bob such that, with probability at least $1- \Omega(1/T)$,
\[
R_T = O(\sqrt{T \log(T)}) \,.
\]
\end{theorem}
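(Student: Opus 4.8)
The plan is to bootstrap Theorem~\ref{thm:main} by simulating the required shared randomness through a short, deliberate ``handshake'' phase of collisions at the start of the game, and then run the randomized strategy of Theorem~\ref{thm:main} using the shared random seed that both players now hold. Concretely, the shared randomness used in Theorem~\ref{thm:main} can be taken (after discretization) to be a string of at most $m = O(\log T)$ bits, since the strategy only ever needs to agree on polynomially many random choices to precision $1/\mathrm{poly}(T)$. So it suffices for Alice to transmit $m$ bits to Bob before the main game begins; both players then feed those bits into the Theorem~\ref{thm:main} strategy and proceed.

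\textbf{Transmitting one bit via collisions.} Fix a reference action, say action $1$, as the ``communication channel'' and have both players commit to the convention: in a communication block of length $L = C \log T$ rounds, Bob plays action $1$ throughout, while Alice plays action $1$ for the whole block if her bit is $1$, and plays action $2$ for the whole block if her bit is $0$. Bob declares he received a $1$ iff he never sees a loss of $0$ during the block. The only failure modes are (i) Alice's bit is $1$ (so they collide every round, Bob sees all $1$'s, correct) --- this is always correct; or (ii) Alice's bit is $0$, Bob plays action $1$ alone, but the losses $\ell_t(1)$ happen to all equal $1$, which occurs with probability $p_1^L \le (1 - 1/L)^L$ only when $p_1$ is very close to $1$ --- and this is at most $e^{-\Theta(\log T)} = T^{-\Theta(1)}$ provided $p_1 \le 1 - 1/\log T$, say. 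When $p_1 > 1 - 1/\log T$, action $1$ is essentially always a loss of $1$ and must (nearly) never be used by the optimal pair, so the players should instead agree to use a different reference action for the handshake; to make this robust one runs the handshake three times in parallel (or sequentially), once per candidate reference action, and takes a majority vote --- at least two of the three reference actions have mean loss bounded away from $1$ as soon as $\p^* < 2 - \Theta(1/\log T)$, and the degenerate case $\p^*$ close to $2$ incurs only $O(T/\log T)$ total regret trivially, which is absorbed. Repeating the whole protocol $O(1)$ times and majority-voting drives the per-bit error probability down to $T^{-3}$, so by a union bound over the $m = O(\log T)$ bits the entire handshake succeeds with probability $1 - O(\log T / T^3) = 1 - O(1/T)$.

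\textbf{Accounting for the regret.} The handshake phase lasts $O(m \cdot L) = O(\log^2 T)$ rounds (or a constant factor more for the repetitions), each contributing at most $2$ to the regret, so it costs $O(\log^2 T) = O(\sqrt{T \log T})$, which is within budget. Conditioned on the handshake succeeding, the players are in exactly the setting of Theorem~\ref{thm:main} with genuine shared randomness, so the remaining $T - O(\log^2 T)$ rounds incur expected regret $O(\sqrt{T \log T})$; note Theorem~\ref{thm:main}'s footnote gives the stronger ``with probability $1 - 1/T$ over the losses, the regret bound and no-collision hold surely over the players' randomness'' form, so after the handshake succeeds we get a high-probability (not merely in-expectation) regret bound, which is what Theorem~\ref{thm:add} asks for. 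Conditioned on handshake failure (probability $O(1/T)$), we make no claim --- this is consistent with the ``$1 - \Omega(1/T)$'' in the statement. Finally, since the two players now run identical deterministic code on identical inputs (the handshake is deterministic, and the ``shared randomness'' has been replaced by a fixed transmitted bitstring), the overall strategy is deterministic.

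\textbf{Main obstacle.} The delicate point is the degenerate regime where \emph{every} action has mean loss close to $1$, so that no reference action gives a reliable silent channel; handling this requires either the majority-vote-over-reference-actions trick above together with the observation that the near-degenerate case is cheap anyway, or a slightly more careful case analysis showing that when all of $p_1, p_2, p_3$ exceed $1 - 1/\log T$ the benchmark $T\p^*$ is already within $O(\sqrt{T\log T})$ of the trivial ``always collide'' cost $2T$. The second obstacle is bookkeeping: one must verify that the Theorem~\ref{thm:main} strategy genuinely needs only $O(\log T)$ bits of shared randomness at the required precision --- this should follow from inspecting its construction (it makes $\mathrm{poly}(T)$ random decisions, each discretizable to $O(\log T)$ bits, but typically a single shared seed of $O(\log T)$ bits suffices to drive a shared pseudorandom generator, or one simply observes the strategy only branches on $O(1)$ shared coins per epoch over $O(\sqrt T)$ epochs, giving $O(\sqrt T \log T)$ bits, which is still fine since the handshake would then cost $O(\sqrt T \log^2 T)$ --- too much --- so one does need the $O(\log T)$-bit seed version).
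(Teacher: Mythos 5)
Your plan does not go through as stated, and the central obstacle is the one you wave away in the last paragraph of your "accounting" step: where do the transmitted bits come from? If the overall strategy is to be deterministic, the $m$ bits Alice "transmits" are a fixed string, hence the angle $\Theta$ fed into the Theorem~\ref{thm:main} strategy is a fixed, publicly known value. But the regret bound of Theorem~\ref{thm:main} is an expectation over a genuinely random $\Theta$ (Lemma~\ref{lem_proba_interface} bounds $\P_\Theta(d(\p,\cP)\le 2w_t)$); for a fixed cut the adversary may choose $\p$ sitting exactly on it, e.g.\ $\p=(0,0,1)$ when the cut is at $\theta=2\pi/3$, in which case both players' estimates stay within $w_t$ of $\cP$ forever, they keep playing arm $3$ in the padded regions, and the regret is linear. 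If instead the bits are Alice's private coins, the strategy is not deterministic, which is what Theorem~\ref{thm:add} asserts. The only deterministic escape would be to extract randomness from Alice's observed losses, which you do not propose and which fails in general: for $\p=(0,0,1)$ the losses are deterministic, so there is no entropy to extract precisely at a dangerous instance. So the bootstrap of Theorem~\ref{thm:main} cannot yield a deterministic strategy without a substantially new idea.

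The handshake itself also has quantitative and structural holes. With your threshold $p_1\le 1-1/\log T$ and block length $L=C\log T$, the all-ones misread probability is $(1-1/\log T)^{C\log T}\approx e^{-C}$, a constant, not $T^{-\Theta(1)}$. The "degenerate case is cheap" claim is off budget: if all means exceed $1-1/\log T$ the trivial regret bound is only $O(T/\log T)$, which is much larger than $\sqrt{T\log T}$; pushing the threshold up to $1-\sqrt{\log T/T}$ (which is what the paper's own strategy uses) fixes that but then each reliable bit needs a block of length $\Omega(\sqrt{T\log T})$, and $\Theta(\log T)$ such blocks already exceed the regret budget. Finally, the majority vote over the three reference actions is simply wrong: for $\p=(0,1,1)$ one has $\p^*=1$ (nowhere near $2$), yet the channels on arms $2$ and $3$ always read all-ones, so the vote errs with constant probability whenever the true bit is $0$, and mismatched seeds can then produce persistent collisions and linear regret. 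For comparison, the paper's proof (Section~\ref{sec:add}) does not try to simulate shared randomness at all: Alice explores arms $1,2$ and fixates on the better one, and Bob only ever needs to detect the single event "Alice has fixated on arm $i$", which he does via one run of $40\sqrt{T\log T}$ consecutive losses of $1$ on a \emph{valid} arm, where arms that already looked like all-ones during an initialization phase are excluded from the valid set --- exactly the safeguard your scheme is missing --- after which he restarts a single-player bandit on the remaining two arms.
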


We also note that for a toy variant of the problem (described next) we do give a $\tilde{O}(\sqrt{T})$-regret no-collision strategy without shared randomness. This result is based on a certain derandomization technique which seems hard to apply in the case of Theorem \ref{thm:main}.
%

We prove Theorem \ref{thm:add} in Section \ref{sec:add}.

\subsection{A toy problem}
In order to motivate our new strategy with no collisions (Theorem \ref{thm:main}), it will be useful to first consider a different model which contains the essence of the difficulty of {\em coordination without communication}, but without the usual {\em exploration or exploitation} dilemma. The first modification that we propose is to assume that, even under collisions, a ``real'' loss is revealed. Precisely, if both players play the same action $i$ at round $t$, then we assume that they both observe independent samples from $\mathrm{Ber}(p_i)$ (rather than observing $1$ in the original model). This modification completely removes the possibility for implicit communication, since Alice's feedback is now completely unaffected by the presence of Bob (and vice versa). Concretely we denote
$(\ell_t^X(i))_{1 \leq i \leq 3, 1 \leq t \leq T, X \in \{A,B\}}$ for a sequence of independent random variables such that $\P(\ell_t^X(i) = 1) = p_i$ and $\P(\ell_t^X(i) = 0) = 1-p_i$. When player $X \in \{A,B\}$ plays action $i$, they observe the loss $\ell_t^X(i)$ (irrespective of the other player's action). Note that in this model we still assume that the players suffer a loss of $1$ if they collide, they simply don't observe their actual suffered loss (to put it differently, we are still concerned with the regret \eqref{eq:regret}). The problem now looks significantly more difficult for the players\footnote{It is not strictly speaking more difficult, since always receiving the feedback $\ell^X_t(i_t)$ means that the players have a slightly more accurate estimate of $\p$.}, and it is not clear a priori that any non-trivial guarantee can be obtained. In fact it is non-trivial even with {\em full information}: that is at the end of round $t$, player $X \in \{A,B\}$ observes $(\ell_t^X(1), \ell_t^X(2), \ell_t^X(3))$. For this modified model we assume such a full information feedback. The reason why we have chosen to have two different, independent loss sequences $\ell^A$ and $\ell^B$ is that if we had $\ell^A=\ell^B$, then $A$ and $B$ would have exactly the same information, in which case it is very easy to avoid collisions.
\newline

Our first task will be to give a strategy with regret $O(\sqrt{T \log(T)})$ for the full-information toy model, which we do in Section \ref{sec:toyupper}. The extension to the bandit scenario is then done in Section \ref{sec:banditupper}. An interesting property of the toy model is that it is amenable to lower bound arguments, since we avoid the difficulty created by implicit communication. In particular we prove the first non-trivial lower bound for multi-player online learning, by showing that the extra factor $\sqrt{\log(T)}$ is necessary:

\begin{theorem} \label{thm:lower}
There exists a universal constant $c>0$ and a distribution over $\p$ such that, for any strategy in the full-information toy model, one has:
\[
\E_{\p} R_T \geq c \sqrt{T \log(T)} \,.
\]
\end{theorem}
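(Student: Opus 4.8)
The plan is to exhibit a prior on $\p$ on which coordination alone — with no exploration/exploitation dilemma left — already forces regret $\Omega(\sqrt{T\log T})$, by showing that a $\Theta(T)$‑length initial window contributes $\Omega(\delta)$ expected regret per round with $\delta\asymp\sqrt{\log(T)/T}$. Concretely, I would take $\mu$ to be: draw $J$ uniform in $\{1,2,3\}$, put $p_J=\tfrac12+\delta$ and $p_i=\tfrac12$ otherwise, with $\delta=\sqrt{\log(T)/T}$ (the statement being vacuous for small $T$ after shrinking $c$). On every instance in $\mathrm{supp}(\mu)$ one has $\p^*=1$, so a collision costs exactly $2-\p^*=1$ and a non‑colliding round with exactly one player on arm $J$ costs exactly $\delta$; hence the \emph{exact} identity $\E_\mu R_T=\sum_{t=1}^T(c_t+\delta u_t)$, where $c_t=\P_\mu(\text{collision at }t)$ and $u_t=\P_\mu(\text{exactly one player on arm }J\text{ at }t,\text{ no collision})$. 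The theorem then reduces to a per‑round claim: there is an absolute $c_1>0$ with $c_t+\delta u_t\gtrsim\delta$ for all $t\le c_1T$; summing gives $\E_\mu R_T\ge c_1T\cdot\Omega(\delta)=\Omega(\sqrt{T\log T})$.

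To prove the per‑round claim for a fixed $t\le c_1T$, I would argue by contradiction, conditioning on the shared randomness $R=r$ and using that given $(J=j,R=r)$ the two players' round‑$t$ actions are independent, with laws $\alpha^r_j,\beta^r_j$ on $\{1,2,3\}$ (full information matters here: Alice's data is the whole array $(\ell^A_s(i))_{i\le 3,\,s<t}$, a product of Bernoullis). If $c_t+\delta u_t$ were $\ll\delta$, then by Markov there is a set $G$ of shared‑randomness values of probability $\ge 9/10$ on which, for every $j$, $\P(\text{collision}\mid J=j,r)\le\eps_0\delta$ and $\P(\text{one on }J\mid J=j,r)\le\eps_0$, with $\eps_0$ a small absolute constant. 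Unwinding these two bounds through the product form of the round‑$t$ outcome should force, for $r\in G$ and every $j$: both $\alpha^r_j,\beta^r_j$ put mass $\le O(\eps_0)$ on arm $j$, and among the two good arms one of $\alpha^r_j,\beta^r_j$ is $(1-O(\eps_0))$‑concentrated on a single good arm $a(j,r)$ and the other on the remaining good arm $b(j,r)$. Thus $j\mapsto a(j,r)$ and $j\mapsto b(j,r)$ are fixed‑point‑free self‑maps of $\{1,2,3\}$ with $\{a(j,r),b(j,r)\}=\{1,2,3\}\setminus\{j\}$.

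A small combinatorial fact then does the work: a fixed‑point‑free self‑map of a $3$‑element set has image of size $\ge 2$, so the images of $a(\cdot,r)$ and $b(\cdot,r)$ intersect; since $\{a(j,r),b(j,r)\}$ has two elements one gets $\xi\neq\eta$ and the third label $\zeta$ with $a(\xi,r)=b(\eta,r)=\zeta$, forcing $b(\xi,r)=\eta$ and $a(\eta,r)=\xi$. Hence under $J=\xi$ Alice plays $\zeta$ w.p. $\ge 1-O(\eps_0)$, and under $J=\eta$ Bob plays $\zeta$ w.p. $\ge 1-O(\eps_0)$. The event ``Bob plays $\zeta$'' is a fixed subset $D$ of Bob's data space with probability $\ge\tfrac12$ under the law $P_\eta$ of his data given $J=\eta$; since $\KL(P_\eta\|P_\xi)=(t-1)\big(\KL(\Ber(\tfrac12+\delta)\|\Ber(\tfrac12))+\KL(\Ber(\tfrac12)\|\Ber(\tfrac12+\delta))\big)\le C t\delta^2$ and $\log\frac{dP_\xi}{dP_\eta}$ concentrates around $-\KL(P_\eta\|P_\xi)$ with fluctuations $O(\sqrt{t\delta^2})$, a change of measure restricted to the likelihood‑ratio‑typical set gives $P_\xi(D)\ge\tfrac14 e^{-\KL(P_\eta\|P_\xi)-O(\sqrt{t\delta^2})}\ge e^{-C' t\delta^2}$. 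So Bob plays $\zeta$ with probability $\ge e^{-C' t\delta^2}$ \emph{even when $J=\xi$}; combined via conditional independence with Alice playing $\zeta$ w.p. $\ge 1-O(\eps_0)$ when $J=\xi$, this is a collision of probability $\ge\tfrac12 e^{-C' t\delta^2}$ conditionally on $J=\xi$. For $t\le c_1T$ with $c_1:=1/(4C')$ we have $e^{-C' t\delta^2}\ge T^{-1/4}$, which exceeds $2\eps_0\delta=2\eps_0\sqrt{\log(T)/T}$ once $T$ is large — contradicting the definition of $G$. This closes the per‑round claim.

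\noindent\textbf{Main obstacle.} The step I expect to be genuinely delicate — and the source of the extra $\sqrt{\log T}$ — is that last change‑of‑measure estimate: it must lower‑bound the \emph{rare} probability $P_\xi(D)$ by $e^{-\Theta(t\delta^2)}$ for $t$ all the way up to a constant fraction of $T$, i.e. in the regime where $\KL(P_\eta\|P_\xi)$ is of order $\log T$ and soft testing inequalities (Bretagnolle–Huber and the like) give nothing. It is precisely this exponential‑tail lower bound that makes the bad window $\{t\le c_1T\}$ have length $\Theta(T)$ rather than the naive $\Theta(T/\log T)$ one reads off from ``time to identify $J$'', so that $\sum_{t\le c_1T}\delta\asymp T\delta=\sqrt{T\log T}$. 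The remaining work — passing from ``$c_t,u_t$ small'' to the near‑deterministic structure of $\alpha^r_j,\beta^r_j$ uniformly over a $\ge 9/10$ fraction of the shared randomness, and bookkeeping all the absolute constants so that the window really is of length $\Omega(T)$ — is routine but needs care.
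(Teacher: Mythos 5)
Your proposal is correct in its essential content, but it takes a genuinely different route from the paper. The paper's hard instance is a continuum prior: $\p$ lies on a circle of radius $\sqrt{3/2}\,T^{-\eps}$ around the diagonal, with the angular law reinforced near the three ambiguous angles; its proof controls the conditional action probabilities $f_i^X(\theta)$ via an analytic likelihood-ratio regularity lemma, shows that collision-avoidance forces conflicting configurations $E(\theta)$ to be separated by angular distance $\gtrsim \eta T^{\eps}\sqrt{\log T/T}$, invokes a cyclic argument on the circle (its Lemma \ref{lem_cyclic_argument}) to force a $(1,2)\to(2,1)$ switch inside $I_{12}$, and then integrates a pointwise suboptimality of order $T^{-\eps}$ over the resulting angular window to get $\E[r_t(\Theta)]\gtrsim\sqrt{\log T/T}$ for every $t$; the small radius $T^{-\eps}$ is exactly what makes this windowing step work. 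You instead use a three-atom prior with gap $\delta=\sqrt{\log T/T}$, the exact decomposition $\E R_T=\sum_t(c_t+\delta u_t)$, and prove the per-round bound for $t\le c_1T$ by contradiction: Markov over the shared randomness, extraction of near-deterministic assignments $a(j,r),b(j,r)$ (your computation does pin down which player concentrates where, so the later asymmetric use of $a$ as Alice's arm and $b$ as Bob's is justified), a finite fixed-point-free pigeonhole in place of the paper's cyclic argument (I checked it: the six conflict conditions cannot all be avoided), and a typical-set change of measure with $\KL\le Ct\delta^2\le C c_1\log T$ giving a collision probability $\ge \tfrac14 e^{-O(t\delta^2)}\ge T^{-1/4}\gg\eps_0\delta$, which is the needed contradiction; here the atoms having mass $1/3$ replace the paper's integration over a $\theta$-window, so you never need the $T^{-\eps}$-radius device, and the $\sqrt{\log T}$ enters through the gap $\delta$ itself together with the $\Theta(T)$ length of the window in which switching is information-theoretically unsafe. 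What each approach buys: yours is more elementary and modular (a clean separation between the combinatorial obstruction and a standard rare-event change-of-measure bound, with absolute constants), and it isolates the per-round claim in a transparent form; the paper's construction is heavier but mirrors the cylindrical geometry underlying its upper bound (the random interface), and its regularity lemma yields pointwise-in-$\theta$ statements for all $t\le T$, which is what a continuum prior requires. The remaining work you flag (constants, the Markov/union bookkeeping, the Hoeffding/Chebyshev control of the log-likelihood ratio on the typical set) is indeed routine, so I see no gap.
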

Unfortunately, there does not seem to be a direct way to transfer this lower bound to the original bandit problem.
%

\section{Difficulties of coordination without communication} \label{sec:warmup}
Whether we consider the toy model, or strategies for the bandit scenario that do not exploit the extra $1$'s due to collisions, we face the same question: how can two agents with imperfect information coordinate without communicating? 
In this section we illustrate some of the difficulties of {\em coordination without communication}. We focus on the most basic bandit strategy, namely explore then exploit. We show how to appropriately modify it to obtain $T^{4/5}$ regret for the bandit scenario, using shared randomness. All the discussion applies similarly to the full-information toy model, and as we note at the end of the section it gives $T^{3/4}$ regret in that case.

\subsection{Explore then exploit}
Consider the following protocol:
\begin{enumerate}
\item Alice and Bob first explore in a round-robin way for $\Theta(T^b)$ rounds, where $b \in (0,1)$ is a fixed parameter. Denote $q^A(i)$ for the average loss observed by Alice on action $i$ (and similarly $q^B(i)$ for Bob).
\item Using these estimates, the players can order the arms in terms of expected performances. Denote $(A_1, A_2, A_3)$ (respectively $(B_1, B_2, B_3)$) for the order Alice (respectively Bob) obtains, in ascending order of average empirical loss (i.e., $q^A(A_1) \leq q^A(A_2) \leq q^A(A_3)$).
\item For the remaining rounds they want to exploit. Alice and Bob could have agreed that Alice will play the best action, and Bob the second best, thus for the remaining of the game Alice plays $A_1$ and Bob plays $B_2$.
\end{enumerate}
The problem with this naive implementation of explore/exploit is clear: there could be ambiguity on which action is the best, for example if $p_1 = p_2 \ll p_3$, in which case both $A_1$ and $B_2$ are independent and uniform in $\{1,2\}$.
Thus in this case there is a constant probability of collision, resulting in a linear regret. A natural fix is for Alice to build a set of ``potential top action'' $\cA$ and for Bob to build a set of ``potential second  best action'' $\cB$. To decide whether an action is ``potentially the top action'' we fix an ``ambiguity threshold'' $\tau$, and now replace step 3 above with:
\begin{enumerate}
\item[3'] If $q^A(A_1) \leq q^A(A_2) - \tau$ ($A_1$ is ``clearly'' the best) then let $\cA = \{A_1\}$ (in the same case for $B$ let $\cB = B_2$), if not but $q^A(A_2) \leq q^A(A_3) - \tau$ ($A_3$ is ``clearly'' worse than $A_1$ and $A_2$) then let $\cA=\{A_1, A_2\}$ (in the same case for $B$ let $\cB = \{B_1,B_2\}$), and if neither then let $\cA =\{1,2,3\}$ (same for $B$). To avoid collisions it makes sense for Alice to play $\min(\cA)$ and for Bob to play $\max(\cB)$.
\end{enumerate}
Unfortunately this is just pushing the problem to a different configuration of $\p$. Indeed consider for example $p_3 \gg p_1 > p_2 = p_1 - \tau$. With a constant probability Alice could end up with $\cA = \{2\}$ and Bob with $\cB = \{1,2\}$, in which case we have again a collision, and hence we get linear regret.

\subsection{The root of the problem}
Geometrically, the issues above come from the boundary regions of the ``decision map'' $\sigma : ([0,1]^3)^2 \rightarrow \{1,2,3\}^2$ from empirical estimates of the mean-losses to actions to be played in the exploitation phase. All our results will come from careful considerations of these boundaries. Moreover, most of the difficulties already arise for our proposed full-information toy model, hence the focus on the toy model first. We also note that the geometric considerations are much easier with two players and three actions, which is why we focus on this case in this paper. The ``high-dimensional'' version of the strategy proposed in Section \ref{sec:toyupper} probably requires different tools.
\newline

Before going into the geometric considerations, we can illustrate one of our insights in the simple case of the explore/exploit strategy above. Namely we propose to make the decision boundaries {\em random}. For the explore/exploit strategy this means taking the ambiguity threshold $\tau$ to be random. Say we take it random at scale $T^{-a}$ for some parameter $a \in [0,1]$. More precisely let $\tau = U / T^a$ with $U$ a uniform random variable in $[0,1]$. In particular, since we don't distinguish differences below the scale $T^{-a}$, we might suffer a regret of $T^{1-a}$. On the other hand, the only risk of collision is if Alice and Bob disagree on whether some gap $\Delta = |p_i - p_j|$ is smaller than $\tau$ or not. Since the fluctuations of the empirical means are of order $T^{-b/2}$, we have that a collision might happen if $|\tau - \Delta| = \tilde{O}(T^{-b/2})$. To put it differently, with high probability (over the observed losses during the exploration phase), collisions happen only if
\[
|U - T^{a} \Delta| = \tilde{O}(T^{a - b/2}) \,.
\]
Because we have taken $U$ uniform on $[0,1]$, the above event has probability (over the realization of $U$) at most $\tilde{O}(T^{a - b/2})$. Thus finally we get a regret of order:
\[
T \cdot T^{a - b/2} + T \cdot T^{-a} + T^b \,,
\]
which is optimized at $b=4/5$ and $a=1/5$, resulting in a $\tilde{O}(T^{4/5})$ regret. 

\subsection{Minor variants}
We note that the same argument applies to the full-information toy model, where we are effectively taking $b=1$, resulting in a $\tilde{O}(T^{3/4})$ regret. Furthermore the same technique can be used to estimate $\mu$ in \cite{LM18, BP18}, improving upon the above $T^{4/5}$ to give $T^{3/4}$ for the bandit case.

\section{Toy model upper bound} \label{sec:toyupper}
We prove here the following theorem:
\begin{theorem} \label{thm:toyupper}
There exists a deterministic strategy for Alice and Bob in the full-information toy model such that with probability at least $1-1/T$, one has both:
\begin{equation} \label{eq:toyregret}
R_T \leq 320 \sqrt{T \log(T)} \,,
\end{equation}
and $\forall t \in [T], i_t^A \neq i_t^B$.
\end{theorem}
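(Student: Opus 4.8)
The plan is to design a deterministic ``decision map'' $\sigma$ from the pair of empirical loss vectors to the pair of actions played during exploitation, and to make the map robust to the $O(T^{-b/2})$ fluctuations of the empirical means by building the ambiguity regions into the geometry itself rather than via a single scalar threshold $\tau$. First I would set up the exploration phase: play round-robin for a number of rounds of order $\sqrt{T\log T}$ so that, with probability at least $1-1/T$, every empirical mean $q^X(i)$ is within $\eps := c\sqrt{\log(T)/T}$ of the true $p_i$ (Hoeffding plus a union bound over the two players, three arms and $T$). On this good event the two empirical vectors $\q^A,\q^B$ both lie in the ball of radius $\eps$ around the true $\p$. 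The key object is then a partition of the parameter cube $[0,1]^3$ into regions, each region being labelled by an \emph{ordered pair} of distinct arms $(a,b)$ meaning ``Alice plays $a$, Bob plays $b$''; the requirement is (i) \textbf{consistency}: on any region whose label is $(a,b)$, the true optimum $\p^* = \min_{i\neq j}(p_i+p_j)$ is within $O(\eps)$ of $p_a+p_b$, so that using label $(a,b)$ costs only $O(\sqrt{T\log T})$ regret over the $T$ rounds; and (ii) \textbf{collision-freeness at scale $\eps$}: whenever two points of $[0,1]^3$ are within $2\eps$ of each other, their labels $(a,b)$ and $(a',b')$ satisfy $\{a,b\}\cap\{a',b'\}=\varnothing$ is \emph{not} required, rather we need $a\neq b'$ and $a'\neq b$, i.e. Alice's choice on one point never coincides with Bob's choice on a nearby point. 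If such a partition exists, then on the good event $\q^A$ and $\q^B$ fall into regions with labels that are compatible, Alice plays her coordinate, Bob plays his, and they never collide while paying only $O(\sqrt{T\log T})$ regret.

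The heart of the construction is the geometry of the region $\{p_1+p_2\approx p_1+p_3\approx p_2+p_3\}$, i.e. near $p_1=p_2=p_3$, and the three ``codimension-one'' walls where exactly two of the pairwise sums tie. Here I would exploit that there are three arms and two players: label the three sectors according to which arm is the \emph{largest} (that arm is the one nobody should play), giving three regions $S_1,S_2,S_3$ where $S_k$ means ``play the two arms other than $k$''. On $S_k$ Alice should take the smaller-indexed of the two remaining arms and Bob the larger, but this naive rule collides exactly on the walls $S_j\cap S_k$. The fix is to rotate the Alice/Bob assignment as one crosses each wall, using the third coordinate (the shared randomness is replaced here by the fact that the \emph{position} in parameter space, which both players estimate, already breaks the symmetry): one checks that around the triple point $p_1=p_2=p_3$ the three walls divide a small neighbourhood into six sectors, and one can assign the six ordered pairs $(a,b)$, $a\neq b$, to the six sectors so that adjacent sectors never reuse Alice's arm as Bob's arm or vice versa — this is the two-player, three-arm analogue of a proper edge-colouring of a hexagon, which exists. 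Away from this neighbourhood one of the pairwise-sum gaps is bounded below by a constant, there is no ambiguity, and the labelling is forced.

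Concretely the steps are: (1) fix the exploration length and establish the $\eps$-accuracy good event; (2) define the six-sector labelling in a ball of radius $C\eps$ around each ``triple point'' and each wall, verify consistency (ii) by the edge-colouring argument and (i) by the fact that all four relevant pairwise sums agree up to $O(\eps)$ there; (3) extend the labelling to the rest of $[0,1]^3$ where the optimal pair is unique up to $O(\eps)$, so consistency is automatic and collisions are impossible; (4) on the good event, read off both players' actions from their own empirical vector and conclude $R_T \le 320\sqrt{T\log T}$ with no collisions, the constant coming from the exploration cost plus the per-round $O(\eps)$ suboptimality times $T$. The main obstacle I expect is step (2): making the ``rotation of roles across walls'' genuinely global and consistent, i.e. checking that the local six-sector colourings around different triple points and along the walls patch together into a single well-defined map on all of $[0,1]^3$ without creating a new ambiguous configuration — this is where the restriction to three arms is essential, and where the careful ``boundary of the decision map'' analysis promised in Section~\ref{sec:warmup} has to be carried out in full.
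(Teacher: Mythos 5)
There is a genuine gap, and it is exactly at the step you flag as the main obstacle: the static labelling you need in step (2) does not exist. Your requirements (i) and (ii) together force the following. Away from an $O(\eps)$-neighbourhood of the walls, consistency forces each of the three macroscopic regions $R_k=\{p_k\ \text{largest}\}$ to be labelled by an ordering of its optimal pair (e.g.\ $R_3$ by $(1,2)$ or $(2,1)$), and the ordering must be constant on each $R_k$: if the two sub-sectors of $R_3$ carried $(1,2)$ and $(2,1)$, then at their common internal boundary Alice's arm on one side equals Bob's arm on the other, violating (ii). Now look at the three walls. Across the wall $\{p_2=p_3>p_1\}$ between $R_3$ (pair $\{1,2\}$) and $R_2$ (pair $\{1,3\}$), condition (ii) forces the common arm $1$ to be assigned to the \emph{same} player on both sides (otherwise Alice plays $1$ on one side while Bob plays $1$ on the other); similarly arm $2$ must keep its player across $R_3$--$R_1$, and arm $3$ across $R_2$--$R_1$. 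Starting from $R_3=(1,2)$ this propagates to $R_2=(1,3)$, then $R_1=(2,3)$, and then the $R_1$--$R_3$ wall demands that arm $2$ be played by the same player in $R_1$ (Alice) and in $R_3$ (Bob) --- a contradiction; the case $R_3=(2,1)$ fails symmetrically. So no deterministic, time-independent decision map can satisfy (i) and (ii) simultaneously: for $\p$ sitting on the offending boundary your strategy has a constant per-round collision probability and hence linear regret. This is precisely the ``topological difficulty'' described in Section~\ref{sec:topological}, and it is not an artifact of your construction: the same parity obstruction (Lemma~\ref{lem_cyclic_argument}) is what drives the paper's lower bound, Theorem~\ref{thm:lower}. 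Your appeal to a proper colouring of a hexagon ignores the consistency constraint; once labels deep inside each region are forced to use the two best arms, the colouring you need does not exist.

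The paper's proof gets around this with an idea absent from your proposal: it deliberately sacrifices consistency on a thin buffer. Inside the sector where arms $1,2$ are best it inserts a cut of width $w_t=16\sqrt{\log(T)/t}$ at angle $\Theta$, plays the \emph{suboptimal} configurations $(3,1)$ and $(3,2)$ on the two sides of the cut (so that the assignment can flip from $(2,1)$ to $(1,2)$ without any adjacent conflict), and controls the resulting regret because $\p$ is close to the cut only with probability $O(w_t/r_{\p})$ when $\Theta$ is random (Lemma~\ref{lem_proba_interface}), or for few rounds when $\Theta$ is replaced by the deterministic dyadically-sweeping angle $\theta_t$ of Section~\ref{sec:dynamic} --- the latter being what makes the strategy deterministic, as the theorem requires. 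Two further, more minor, points: in the full-information toy model no round-robin exploration phase is needed (both players see all three losses every round, so the paper simply uses the running estimates $q_t^X$ with a shrinking width $w_t$), and ``away from the walls the gap is bounded below by a constant'' is not true --- the ambiguity persists along entire walls far from the diagonal, which is why the buffer must be a half-plane through the axis $\cD$ rather than a bounded neighbourhood of the triple point.
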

For $2 \leq t \leq T$, $i \in \{1,2,3\}$ and $X \in \{A,B\}$, we write
\[q_t^X(i)=\frac{1}{t-1} \sum_{s=1}^{t-1} \ell_s^X(i), \]
with the convention $q_1^X(i)=0$. In other words $q_t^X$ is the estimate of the vector $\p$ by player $X$ at time $t$. Our strategy is based on a subtle partition of the cube $[0,1]^3$. Precisely we build a map $\sigma_t : [0,1]^3 \rightarrow \{1,2,3\} \times \{1,2,3\}$, with $\sigma_t=\left( \sigma_t^A, \sigma_t^B \right)$, such that Alice plays $i_t^A = \sigma^A_{t}(q_t^A)$ and Bob plays $i_t^B = \sigma_{t}^B(q_t^B)$. An interesting aspect of Theorem \ref{thm:toyupper} compared to Theorem \ref{thm:main} is that we do not require shared randomness for the full-information toy model. However it will be easier for us to first describe a shared randomness strategy, and then explain how to remove that assumption. More precisely, we first build a {\em random partition} $\sigma$, and we prove Theorem \ref{thm:toyupper} with \eqref{eq:toyregret} holding in expectation over this random partition. We explain how to derandomize in Section \ref{sec:dynamic} with a {\em dynamic} partition.
\newline

We denote $w_t = 16 \sqrt{\frac{\log(T)}{t}}$, and we fix the event 
\begin{equation} \label{eq:Omega}
\Omega = \left\{\forall t \in [T], i \in \{1,2,3\}, X\in\{A,B\}, |q_t^X(i) - p_i| < \frac{w_t}{4} \right\} \,.
\end{equation}
Applying Hoeffding's inequality and an union bound, one obtains
\[
\P(\Omega) \geq 1 - \frac{1}{T} \,.
\]
For the remainder of the section, we fix loss sequences for which $\Omega$ holds true. All probabilities will be taken with respect to the randomness of Alice and Bob. We note in particular that under $\Omega$ we have $\|q_t^X - p \|_{\infty} \leq \frac{w_t}{4}$ for $X \in \{A,B\}$, so we get 
\begin{equation} \label{eq:distance}
\|q_t^A - q_t^B\|_2 < w_t \,.
\end{equation}

\subsection{A random partition of the cube}

\subsubsection{Cylindrical coordinates} \label{sec:cylindrical}
To describe our partition, it will be more convenient to use cylindrical coordinates around the axis $\mathcal{D}=\{ \p | p_1=p_2=p_3 \}$. More precisely, for $\p=(p_1,p_2,p_3)$ we write
\[ 
m_{\p}=\frac{p_1+p_2+p_3}{3} \,,
\]
\[ 
r_{\p}=d(\p, \mathcal{D})=\sqrt{(p_1-m_{\p})^2+(p_2-m_{\p})^2+(p_3-m_{\p})^2} \,,
\]
and $\theta_{\p} \in [0,2\pi)$ for the angle between the line from $\p$ to its orthogonal projection $\left( m_{\p}, m_{\p}, m_{\p} \right)$ on the axis $\mathcal{D}$ and the half-line $\left\{ \left( m_{\p}-t, m_{\p}+2t, m_{\p}-t \right) | t \geq 0 \right\}$ (this angle is contained in the plane orthogonal to $\mathcal{D}$ passing through $\p$). 
We write $\p=(p_1,p_2,p_3)=[m_{\p}, r_{\p}, \theta_{\p}]$.
\newline

An equivalent way to describe these cylindrical coordinates is as follows. Let us denote $\mathbf{a} = \frac{1}{\sqrt{3}} (1,1,1)$ (the main axis direction), $\mathbf{b} = \sqrt{\frac{2}{3}} \left( - \frac12, 1, - \frac12 \right)$ (the direction of the half-line mentioned above), and $\mathbf{c} =  \sqrt{\frac{2}{3}} \left(\frac{\sqrt{3}}{2}, 0, - \frac{\sqrt{3}}{2} \right)$ (the direction so that $\{\mathbf{a},\mathbf{b},\mathbf{c}\}$ forms an orthonormal basis).
We have:
\begin{eqnarray*}
\p & = & \langle \p, \mathbf{a} \rangle \mathbf{a} + r_{\p} \cos(\theta_{\p}) \mathbf{b} + r_{\p} \sin(\theta_{\p}) \mathbf{c} \\
& = & \begin{pmatrix} m_{\p} \\ m_{\p} \\ m_{\p} \end{pmatrix} +  \sqrt{\frac{2}{3}} \cdot r_{\p} \cdot \begin{pmatrix} \cos\left(\theta_{\p} + \frac{2 \pi}{3} \right) \\ \cos(\theta_{\p}) \\ \cos\left(\theta_{\p} - \frac{2 \pi}{3} \right) \end{pmatrix}  \,,
\end{eqnarray*}
where the last equality comes from standard trigonometric identities. 
\newline

The basic partitioning of interest is into the three regions corresponding to different top two actions, namely $p_3 \geq p_{1}, p_{2}$ (players should play arms $1$ and $2$), $p_{1} \geq p_{2}, p_{3}$, and $p_{2} \geq p_{1}, p_{3}$. In cylindrical coordinates these regions are described respectively by $\theta \in \left[ \frac{\pi}{3}, \pi \right]$, $\theta \in \left[ \pi, \frac{5\pi}{3} \right]$, and $\theta \in \left[ \frac{5\pi}{3}, 2\pi \right] \cup \left[ 0, \frac{\pi}{3} \right]$.

\subsubsection{Topological difficulty} \label{sec:topological}
Intuitively, the ``topological'' difficulty of the problem is that, as $\theta$ varies continuously, the players will face a decision boundary with a collision. For example, say that in the region around $\theta=0$ (namely $\theta \in \left[ \frac{5\pi}{3}, 2\pi \right] \cup \left[ 0, \frac{\pi}{3} \right]$) we play $(i_t^A, i_t^B) = (3,1)$. As $\theta$ increases we enter the region where we should stop playing action $3$ and start playing action $2$, and thus it is natural to play $(i_t^A, i_t^B) = (2,1)$ in the region $\theta \in \left[ \frac{\pi}{3}, \pi \right]$ (i.e., only Alice is trying to figure out whether she plays action $2$ or $3$, while Bob stays constant on action $1$). On the other hand, as we decrease $\theta$ and enter the region $\theta \in \left[ \pi, \frac{5\pi}{3} \right]$, we want to play $(i_t^A, i_t^B) = (3,2)$ (i.e., it is now Bob who tries to figure out whether to play action $2$ or $1$). The problem with this construction is that at $\theta =\pi$ we go from configuration $(2,1)$ to configuration $(3,2)$, thus at this value of $\theta$ there is a constant chance of collisions! The same occurs if $(i_t^A, i_t^B) = (3,1)$. This observation is the core of our lower bound proof in Section \ref{sec:toylower}.

To fix this issue, we propose to replace this fixed interface between $(2,1)$ and $(3,2)$ by a random cut in the region $\theta \in \left[ \frac{\pi}{3}, \pi \right]$, where we will move from $(2,1)$ to $(1,2)$ (and thus at $\theta =\pi$ we move from $(1,2)$ to $(3,2)$ and there is no risk of collision). We explain this construction next (see also Figure \ref{fig_partition}).

\subsubsection{Random interface}
Let $\Theta$ be a uniform random variable in $\left[ \frac{\pi}{3}, \pi \right]$ (this is the only randomness needed by the players). We write $\mathcal{P}=\{ [m,r,\theta] | \theta=\Theta\}$, which is a (random) half-plane containing the axis $\mathcal{D}$ (this will be our ``random cut'', to be padded appropriately to move from $(2,1)$ to $(1,2)$). More precisely, we recall that $w_t=16\sqrt{\frac{\log T}{t}}$, and define the following regions:

\begin{itemize}
\item
$A_t=\{ \p=[m,r,\theta] | \frac{\pi}{3} \leq \theta < \Theta \mbox{ and } d(\p,\mathcal{P}) \geq w_t \}$,
\item
$B'_t=\{ \p=[m,r,\theta] | \frac{\pi}{3} \leq \theta < \Theta \mbox{ and } d(\p,\mathcal{P}) < w_t \} $,
\item
$C'_t=\{ \p=[m,r,\theta] | \Theta \leq \theta < \pi \mbox{ and } d(\p,\mathcal{P}) < w_t \} \setminus \cD $,
\item
$D_t=\{ \p=[m,r,\theta] | \Theta \leq \theta < \pi \mbox{ and } d(\p,\mathcal{P}) \geq w_t \}$,
\item
$B''_t=\{ \p=[m,r,\theta] | 0 \leq \theta < \frac{\pi}{3} \mbox{ or } \frac{5\pi}{3} \leq \theta < 2\pi \}$,
\item
$C''_t=\{ \p=[m,r,\theta] | \pi \leq \theta < \frac{5\pi}{3} \} \setminus \cD$.
\end{itemize}
We finally write $B_t=B'_t \cup B''_t$ and $C_t=C'_t \cup C''_t$. Note that the large or strict inequalities and the convention $\mathcal{D} \not\subset C_t$ were chosen so that $(A_t, B_t, C_t, D_t)$ is a partition of the cube $[0,1]^3$, but these choices do not really matter.

We illustrate on Figure \ref{fig_partition} the restriction of this partition to the plane of equation $p_1+p_2+p_3=\frac{3}{2}$. Note that the definition of $A_t, B_t, C_t, D_t$ does not depend on the coordinate $m$. This implies that the full partition is just obtained from Figure \ref{fig_partition} by adding one dimension orthogonally to the plane. More precisely, a point of $[0,1]^3$ belongs to a region of the partition if and only if its orthogonal projection on the plane of Figure \ref{fig_partition} belongs to that region. Note that $B''_t$ corresponds exactly to the region where the best two arms are $1$ and $3$, and $C''_t$ to the region where the best two arms are $2$ and $3$.

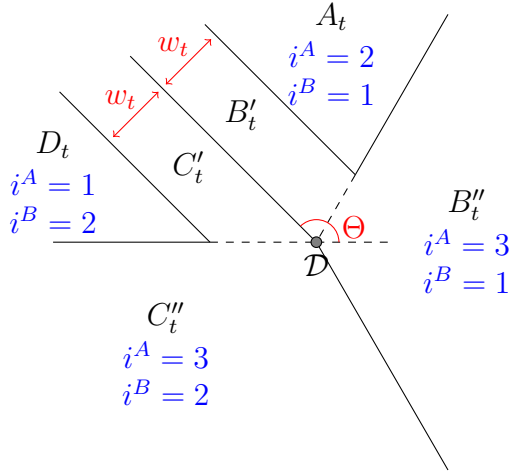
\begin{figure}
\begin{center}
\begin{tikzpicture}

\draw(-1.414,0)--(-3.5,0);
\draw(0,0)--(135:3.5);
\draw(0.518,0.897)--(60:3.5);
\draw(0,0)--(300:3.5);
\draw[dashed] (-1.414,0)--(0,0);
\draw[dashed] (0.518,0.897)--(0,0);
\draw(-1.414,0)--(-3.414,2);
\draw(0.518,0.897)--(-1.482,2.897);
\draw[dashed] (0,0)--(1,0);

\draw(0,0)node{};
\draw(0,-0.3)node[texte]{$\mathcal{D}$};

\draw(-3.5,1.3)node[texte]{$D_t$};
\draw[blue](-3.5,0.8)node[texte]{$i^A=1$};
\draw[blue](-3.5,0.3)node[texte]{$i^B=2$};
\draw(-1.7,1)node[texte]{$C'_t$};
\draw(-2,-1)node[texte]{$C''_t$};
\draw[blue](-2,-1.5)node[texte]{$i^A=3$};
\draw[blue](-2,-2)node[texte]{$i^B=2$};
\draw(-1,1.7)node[texte]{$B'_t$};
\draw(2,0.5)node[texte]{$B''_t$};
\draw[blue](2,0)node[texte]{$i^A=3$};
\draw[blue](2,-0.5)node[texte]{$i^B=1$};
\draw(0.2,3)node[texte]{$A_t$};
\draw[blue](0.2,2.5)node[texte]{$i^A=2$};
\draw[blue](0.2,2)node[texte]{$i^B=1$};

\draw[red, <->] (-2.1,2)--(-2.7,1.4);
\draw[red, <->] (-2,2.1)--(-1.4,2.7);
\draw[red] (-2.6,1.9) node[texte]{$w_t$};
\draw[red] (-1.9,2.6) node[texte]{$w_t$};
\draw[red](0.3,0) arc(0:135:0.3);
\draw[red] (0.5,0.2) node[texte]{$\Theta$};

\end{tikzpicture}
\end{center}
\caption{The restriction of our partition of the cube to the plane $\{ m_{\p}=\frac{1}{2} \}$. We recall that $B_t=B'_t \cup B''_t$ and $C_t=C'_t \cup C''_t$. The full partition is obtained from here by extending each region orthogonally to that plane. In blue, the arms played by each player in each region.}\label{fig_partition}
\end{figure}

\subsubsection{Coloring the partition}
We now define the map $\sigma_t : [0,1]^3 \rightarrow \{1,2,3\} \times \{1,2,3\}$ that the players use to select an action. It will be constant over the regions $A_t, B_t, C_t, D_t$. Precisely, as on Figure \ref{fig_partition}:
\[
\sigma_t(\q) := \begin{cases}
(2,1) & \mbox{if $\q \in A_t$,} \\
(3,1) & \mbox{if $\q \in B_t$,}\\
(3,2) & \mbox{if $\q \in C_t$,} \\
(1,2) & \mbox{if $\q \in D_t$.} \\
\end{cases}
\]
We denote by $\sigma_t^A$ and $\sigma_t^B$ the two coordinates of $\sigma_t$. For example, for $\q \in A_t$, we have $\sigma_t^A(\q)=2$ and $\sigma_t^B(\q)=1$. As explained above, the strategy is to set $i_t^A=\sigma_t^A(\q_t^A)$ and $i_t^B=\sigma_t^B(\q_t^B)$.

Roughly speaking, the reasons why this strategy works are as follows:
\begin{itemize}
\item[$\bullet$]
By \eqref{eq:distance} $q_t^A$ and $q_t^B$ are never too far away from each other, so they are either in the same region or in two neighbour regions of the partition, and the strategy ensures that there is no collision.
\item[$\bullet$]
Under the event $\Omega$ of \eqref{eq:Omega}, the players almost play the best two arms except in the region $B'_t \cup C'_t$. If $\p$ is close to the axis $\mathcal{D}$, this is not suboptimal by a lot. If $\p$ is far away from $\mathcal{D}$, then $\P \left( \p \in B'_t \cup C'_t \right)$ is small since $\Theta$ is randomized.
\end{itemize}

\subsection{Regret analysis}\label{sec:regrettoyupper}
We give here the proof of Theorem \ref{thm:toyupper}, with \eqref{eq:toyregret} holding in expectation over $\Theta$ (which is the only source of randomness in the players' strategy).

\subsubsection{No collision property}
First observe that the coloring $\sigma_t$ is such that there are no collisions for neighboring regions, i.e., if $U, V \in \{A_t, B_t, C_t, D_t\}$ are neighboring regions then $\sigma_{t}^A(U) \neq \sigma_{t}^B(V)$ and $\sigma_{t}^B(U) \neq \sigma_{t}^A(V)$. Next we note that two non-neighboring regions are well-separated.
\begin{lemma}\label{lem_separation}
In the partition $(A_t, B_t, C_t, D_t)$, the distance between any two non-neighboring regions is at least $w_t$.
\end{lemma}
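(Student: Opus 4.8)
The plan is to reduce the statement to a two-dimensional one by projecting out the coordinate $m$, and then, for each of the three non-neighbouring pairs, to exhibit either a lower bound of $\pi/2$ on the angular gap between the two points or an explicit point of $\mathcal{P}$ on the segment joining them; in both cases the Lipschitz behaviour of $d(\cdot,\mathcal{P})$ and $d(\cdot,\mathcal{D})$ closes the argument.

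First I would observe that all four regions, as well as $\mathcal{D}$ and $\mathcal{P}$, are invariant under translation along the axis $\mathcal{D}$, so the problem is really two-dimensional: writing $\bar\p=(r_\p\cos\theta_\p,r_\p\sin\theta_\p)$ in the frame $(\mathbf{b},\mathbf{c})$, one has $\|\p-\q\|\ge\|\bar\p-\bar\q\|$, $d(\p,\mathcal{P})$ equals the distance from $\bar\p$ to the half-line $\{\theta=\Theta\}$, and $d(\p,\mathcal{D})=r_\p$. Since $\mathcal{D}\subset\mathcal{P}$ this gives $r_\p\ge d(\p,\mathcal{P})$, and in particular $r_\p\ge w_t$ whenever $\p\in A_t\cup D_t$.

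The non-neighbouring pairs are $\{A_t,C_t\}$, $\{A_t,D_t\}$ and $\{B_t,D_t\}$, each containing $A_t$ or $D_t$. Fix $\p,\q$ in the two regions of such a pair and let $\phi\in[0,\pi]$ be the angle between $\bar\p$ and $\bar\q$ (the degenerate case $\p$ or $\q$ on $\mathcal{D}$ is immediate, since the other point then lies in $A_t\cup D_t$ and hence at distance $\ge w_t$ from $\mathcal{D}$). If $\phi\ge\pi/2$ then $\cos\phi\le0$ and
\[
\|\p-\q\|^2\ge\|\bar\p-\bar\q\|^2=r_\p^2+r_\q^2-2r_\p r_\q\cos\phi\ge r_\p^2+r_\q^2\ge w_t^2 \,,
\]
using $\max(r_\p,r_\q)\ge w_t$. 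If $\phi<\pi/2$ I would check, from the angular descriptions of the regions ($\theta_\p\in[\pi/3,\Theta)$ when $\p\in A_t\cup B'_t$; $\theta_\q\in[\Theta,5\pi/3)$ when $\q\in C_t$; $\theta_\q\in(\Theta,\pi)$ when $\q\in D_t$; $\theta_\p\in[0,\pi/3)\cup[5\pi/3,2\pi)$ when $\p\in B''_t$), that $\phi<\pi/2$ forces the short arc from $\theta_\p$ to $\theta_\q$ to contain $\Theta$ in every sub-case. Granting this, the segment $[\bar\p,\bar\q]$ crosses the half-line $\{\theta=\Theta\}$, so $[\p,\q]$ contains a point $r\in\mathcal{P}$, whence $\|\p-\q\|\ge\|\p-r\|\ge d(\p,\mathcal{P})\ge w_t$ if the selected region is $A_t$, and likewise $\|\p-\q\|\ge\|\q-r\|\ge d(\q,\mathcal{P})\ge w_t$ if it is $D_t$ (for $\{A_t,D_t\}$ either side works, for $\{A_t,C_t\}$ use the $A_t$ side, for $\{B_t,D_t\}$ the $D_t$ side).

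The only non-routine step is the arc claim for $\phi<\pi/2$: the lemma is tight — taking $\Theta\to\pi$, a point of $A_t$ at distance exactly $w_t$ from $\mathcal{P}$ and at maximal radius $\sqrt{2/3}$ lies at distance $\to w_t$ from the point $(1,0,1)\in C''_t\subset C_t$ — so one must go through the sub-cases according to which of $A_t,B'_t,B''_t,C'_t,C''_t,D_t$ contain $\p$ and $\q$, including the wrap-around angular configurations, and verify in each that $\phi<\pi/2$ makes $\theta_\p$ and $\theta_\q$ straddle $\Theta$ along the short arc. Everything else is a one-line estimate.
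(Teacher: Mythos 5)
Your argument is correct, and the one step you defer — that $\phi<\pi/2$ forces the short arc between $\theta_\p$ and $\theta_\q$ to contain $\Theta$ — does hold in every sub-case. For $(A_t,D_t)$, $(A_t,C_t)$ and $(B'_t,D_t)$-type configurations the direct arc is the short one (the wrap-around arc would exceed $\pi/2$ by the angular ranges of the regions) and it contains $\Theta$ because $\theta_\p<\Theta\le\theta_\q$; for $\p\in B''_t$ with $\theta_\p\in[0,\pi/3)$ the direct arc contains $\Theta$ without even using $\phi<\pi/2$; and in the only genuine wrap-around case, $\p\in B''_t$ with $\theta_\p\in[\frac{5\pi}{3},2\pi)$ and $\q\in D_t$, the condition $\phi<\pi/2$ forces $\Theta\le\theta_\q<\pi/2$, so the short arc through $0$ again contains $\Theta$. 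Your route does differ from the paper's in its packaging: the paper takes as separator the single set $\{\theta=\Theta\}\cup\{\theta=\frac{5\pi}{3}\}$, notes that each non-neighbouring pair lies on opposite sides of it and that $A_t$ and $D_t$ are at distance at least $w_t$ from it, so any connecting segment crosses the separator and the bound follows in one line. Your dichotomy on the angle $\phi$ replaces the half-plane $\{\theta=\frac{5\pi}{3}\}$ by the law-of-cosines bound when $\phi\ge\pi/2$, and uses only $\mathcal{P}$ as the crossing surface when $\phi<\pi/2$. The underlying facts are identical — on $A_t\cup D_t$ one has both $d(\cdot,\mathcal{P})\ge w_t$ and $r\ge w_t$, and the paper's assertion that these regions are $w_t$-far from $\{\theta=\frac{5\pi}{3}\}$ is exactly your observation that an angular gap above $\pi/2$ makes the distance to a half-plane through $\mathcal{D}$ equal to the radius. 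What the paper's formulation buys is that enlarging the separator by the second half-plane eliminates your sub-case analysis of which arcs straddle $\Theta$; what yours buys is an explicit, self-contained verification of the ``simple geometric considerations'' the paper leaves implicit.
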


\begin{proof}
The pairs of non-neighboring regions are $(A_t, D_t)$, $(A_t, C_t)$ and $(B_t, D_t)$. Any of these pairs has its two elements on different sides of the set $\left\{ \theta=\Theta \text{ or } \theta = \frac{5 \pi}{3} \right\}$. Moreover, simple geometric considerations show that $A_t$ and $D_t$ are both at distance $w_t$ from that set. Thus all these distances are at least $w_t$.
\end{proof}

Finally recall that on $\Omega$ the observations of Alice and Bob are close to each other (see \eqref{eq:distance}), so we can conclude that Alice and Bob never collide when $\Omega$ holds true.

\subsubsection{Controlling the regret from suboptimal decisions}
We denote by $B(x,r)$ the ball of radius $r$ around $x$ for the Euclidean distance. Given that there are no collisions on $\Omega$, we have:
\begin{align}\label{regret_sum_max}
R_T = \sum_{t=1}^T  (p_{i_t^A} + p_{i_t^B} - \p^* ) & = \sum_{t=1}^T  (p_{\sigma_{t}^A(q_t^A)} + p_{\sigma_{t}^B(q_t^B)} - \p^* ) \nonumber \\
 & \leq \sum_{t=1}^T  \max_{\q, \q' \in B(\p, w_t / 2)} (p_{\sigma_{t}^A(\q)} + p_{\sigma_{t}^B(\q')} - \p^* ) \nonumber \\
 & \leq 2 \sum_{t=1}^T \max_{\q \in B(\p, w_t / 2)} (p_{\sigma_{t}^A(\q)} + p_{\sigma_{t}^B(\q)} - \p^* ),
 \end{align}
where the second line uses that under $\Omega$ we have $q_t^A, q_t^B \in B(\p, w_t /2)$, and the last line uses the bound
\[ p_{\sigma_{t}^A(\q)} + p_{\sigma_{t}^B(\q')} - \p^* \leq \left( p_{\sigma_{t}^A(\q)} + p_{\sigma_{t}^B(\q)} -\p^* \right) + \left( p_{\sigma_{t}^A(\q')} + p_{\sigma_{t}^B(\q')} -\p^* \right).\]
 
To control the last quantity of \eqref{regret_sum_max}, let us first assume that $d(\p, \cP) > 2 w_t$. Then we know that for any $\q \in B(\p, w_t / 2)$, one has $\q \not\in B_t' \cup C_t'$. By construction, $q_{\sigma_{t}^A(\q)} + q_{\sigma_{t}^B(\q)} = \q^*$ for any $\q \not\in B_t' \cup C_t'$. Moreover the map $\q \mapsto \q^*$ is $2$-Lipschitz so we get that $p_{\sigma_{t}^A(\q)} + p_{\sigma_{t}^B(\q)} \leq w_t + q_{\sigma_{t}^A(\q)} + q_{\sigma_{t}^B(\q)} = w_t+\q^* \leq 2 w_t + \p^*$. In other words, so far we have proved that on $\Omega$ we have:
\[
R_T \leq 4 \sum_{t=1}^T w_t + 2 \sum_{t=1}^T  \mathbbm{1}_{d(\p, \cP) \leq 2 w_t} \max_{\q \in B(\p, w_t / 2)} (p_{\sigma_{t}^A(\q)} + p_{\sigma_{t}^B(\q)} - \p^* ) \,.
\]
Note that 
\[
p_{\sigma_{t}^A(\q)} + p_{\sigma_{t}^B(\q)} - \p^* \leq \max_{i \neq j} |p_i - p_j| \leq r_{\p} \,.
\]
Thus we get with the two above displays:
\begin{equation}\label{eqn_regret_with_proba_interface}
\E_{\Theta} R_T \leq 4 \sum_{t=1}^T w_t + 2 \sum_{t=1}^T r_{\p} \P_{\Theta}(d(\p, \cP) \leq 2 w_t) \,.
\end{equation}
The proof is now concluded with the following lemma, which implies $\E_{\Theta} R_T \leq 10 \sum_{t=1}^T w_t \leq 320 \sqrt{T \log T}$.
\begin{lemma}\label{lem_proba_interface}
For every $t$ and $\p$, we have
\begin{equation}\label{eqn_proba_interface}
\P \left( d(\p, \mathcal{P}) \leq 2 w_t \right) \leq 3 \frac{w_t}{r_{\p}}.
\end{equation}
\end{lemma}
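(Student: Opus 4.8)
The plan is to reduce the event $\{d(\p,\cP)\le 2w_t\}$ to an explicit condition on the uniform angle $\Theta$ and then bound the probability of that condition directly. First I would fix $\p$ with cylindrical coordinates $[m_\p, r_\p, \theta_\p]$. Since $\cP = \{[m,r,\theta]\mid \theta=\Theta\}$ is a half-plane through the axis $\cD$, and the definition of everything is independent of the coordinate $m$, I can work entirely in the plane orthogonal to $\cD$ (the plane of Figure~\ref{fig_partition}). In that plane, $\p$ projects to a point at distance $r_\p$ from the origin at polar angle $\theta_\p$, and $\cP$ projects to a ray from the origin at angle $\Theta$. The Euclidean distance from $\p$ to $\cP$ is exactly the distance from that projected point to the ray, which (when $\p$ is on the correct side, i.e. the perpendicular foot lands on the ray rather than past the origin) equals $r_\p \lvert\sin(\theta_\p-\Theta)\rvert$; and in any case $d(\p,\cP)\ge r_\p\lvert\sin(\theta_\p-\Theta)\rvert$ suffices — actually I only need a lower bound of the form $d(\p,\cP)\ge r_\p\cdot g(\theta_\p-\Theta)$ for a suitable $g$, or more simply the geometric fact that $d(\p,\cP)\le 2w_t$ forces the angular separation $\lvert\theta_\p-\Theta\rvert$ (mod the relevant range) to be small.

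Next I would make this quantitative: if $d(\p,\cP)\le 2w_t$ then $r_\p\lvert\sin(\theta_\p-\Theta)\rvert\le 2w_t$, hence $\lvert\sin(\theta_\p-\Theta)\rvert \le 2w_t/r_\p$. Using $\lvert\sin x\rvert \ge \tfrac{2}{\pi}\lvert x\rvert$ for $\lvert x\rvert\le \pi/2$ (or just $\lvert\sin x\rvert\ge \lvert x\rvert/2$ on a suitable range), this confines $\Theta$ to a union of at most two arcs around $\theta_\p$ and $\theta_\p+\pi$, each of angular length $O(w_t/r_\p)$; being a bit careful with constants, the total length is at most $\pi \cdot (2w_t/r_\p) \le 2\pi w_t / r_\p$ — and then I should check that against the desired bound $3w_t/r_\p$. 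Since $\Theta$ is uniform on an interval of length $\pi - \pi/3 = 2\pi/3$, the probability is at most $\frac{1}{2\pi/3}$ times the total arc length. If the naive constant comes out slightly too large I would tighten it: only one of the two arcs (angle $\approx\theta_\p$) can actually intersect $[\pi/3,\pi]$ together with being relevant, or I note that $\cP$ is a half-plane (a ray, not a line) so only the arc near $\theta_\p$ genuinely contributes the $\sin$-type bound while the antipodal direction gives distance $\ge r_\p$, which exceeds $2w_t$ unless $w_t$ is large — and in the regime $2w_t \ge r_\p/\pi$ the bound $3w_t/r_\p \ge 3/\pi > 1$ is trivially true, so I may assume $r_\p$ is not too small. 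Assembling: $\P(d(\p,\cP)\le 2w_t) \le \frac{3}{2\pi}\cdot\frac{2w_t}{r_\p}\cdot(\text{small factor}) \le 3w_t/r_\p$.

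The main obstacle I anticipate is bookkeeping the constants and the boundary cases cleanly: handling the regime where $2w_t$ is comparable to or larger than $r_\p$ (where the claimed bound is vacuous or near-vacuous and should just be dispatched directly, since $\P\le 1$), and making sure the half-plane-versus-full-plane distinction and the restriction of $\Theta$ to $[\pi/3,\pi]$ rather than all of $[0,2\pi)$ are accounted for without losing a factor. The genuinely geometric content — that $d(\p,\cP)$ is controlled below by $r_\p$ times the sine of the angular gap — is elementary once one works in the orthogonal plane; everything else is choosing constants so the final inequality reads exactly $3w_t/r_\p$.
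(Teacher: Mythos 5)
Your plan is essentially the paper's own proof: project onto the plane orthogonal to $\cD$ (the $m$-coordinate is irrelevant), dispatch the regime where the stated bound exceeds $1$, observe that $d(\p,\cP)\le 2w_t$ forces $r_\p\lvert\sin(\Theta-\theta_\p)\rvert\le 2w_t$ and hence $\lvert\Theta-\theta_\p\rvert\le\arcsin(2w_t/r_\p)\le \pi w_t/r_\p$, and multiply the resulting arc length by the density $\tfrac{3}{2\pi}$ of $\Theta$ uniform on $\left[\tfrac{\pi}{3},\pi\right]$ to land exactly on $3w_t/r_\p$. One small correction to your constant bookkeeping: $3/\pi<1$, so the trivial regime should be taken as $r_\p\le 2w_t$ (where $3w_t/r_\p\ge 3/2>1$), which is also precisely the case split needed to make the $\arcsin$ (equivalently your $\lvert\sin x\rvert\ge\tfrac{2}{\pi}\lvert x\rvert$ step) legitimate in the complementary regime.
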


\begin{proof}
We first note that, since the half-plane $\cP$ is orthogonal to the plane $\{ m_{\p}=\frac{1}{2} \}$ of Figure \ref{fig_partition}, both sides of \eqref{eqn_proba_interface} are unchanged if we replace $\p$ by its projection on $\{ m_{\p}=\frac{1}{2} \}$, so we can assume $\p \in \mathcal{P}$. Moreover, the distance between $\p$ and $\mathcal{P}$ is equal to the distance in $\{ m_{\p}=\frac{1}{2} \}$ between $\p$ and the half-line $\mathcal{P} \cap \{ m_{\p}=\frac{1}{2} \}$.

We also note the result is obviously true if $r_{\p} > 2 w_t$ (the right-hand side of \eqref{eqn_proba_interface} is larger than $1$), so we can assume  $r_{\p} \leq 2 w_t$. Then we have
\[ d(\p, \mathcal{P})=r_{\p} \sin \alpha \,,\]
where $\alpha$ is the angle between the line from the point $\left( \frac12, \frac12, \frac12 \right)$ to $\p$ and the half-line $\{ \theta=\Theta \}$, in the plane of Figure \ref{fig_partition}. We have $\alpha=| \Theta-\theta_{\p}|$, so the event of \eqref{eqn_proba_interface} is equivalent to
\[ \theta_{\p}-\arcsin \frac{2 w_t}{r_{\p}} \leq \Theta \leq \theta_{\p}+ \arcsin \frac{2 w_t}{r_{\p}} \,.\]
This has probability $\frac{3}{2\pi} \times 2 \arcsin \frac{2 w_t}{r_{\p}} \leq 3 \frac{w_t}{r_{\p}}$, which concludes the proof of the lemma.
\end{proof}

\subsection{Derandomization via a dynamic interface} \label{sec:dynamic}
The only place where we used the randomness in $\Theta$ is Lemma \ref{lem_proba_interface}. To derandomize the algorithm, we can replace the random angle $\Theta$ by a deterministic, time-dependent angle $\left( \theta_t \right)_{t \in [T]}$, with $\frac{\pi}{3} \leq \theta_t \leq \pi$. In this setting, all the proof is the same until \eqref{eqn_regret_with_proba_interface}, which becomes
\[ R_T \leq 4 \sum_{t=1}^T w_t + 2 r_{\p} \sum_{t=1}^T \mathbbm{1}_{d(\p, \cP_t) \leq 2 w_t},\]
where $\cP_t=\{ [m,r,\theta] | \theta=\theta_t\}$. For the same reason as in the proof of Lemma \ref{lem_proba_interface}, if $d(\p, \cP_t) \leq 2 w_t$, then $|\theta_{\p}-\theta_t| \leq \arcsin \frac{2w_t}{r_{\p}} \leq \pi \frac{w_t}{r_{\p}}$. Therefore, to obtain the analog of Lemma \ref{lem_proba_interface}, it is enough to find $(\theta_t)$ such that, for any $r$ and $\theta$, the number of $t$ such that $|\theta-\theta_t| \leq \pi \frac{w_t}{r}$ is at most $\frac{3}{r} \sum_{t=1}^T w_t$.

One way to do so is the following: for every $t$, let $k$ be such that $2^k \leq t < 2^{k+1}$, and take
\[\theta_t=\frac{\pi}{3}+\frac{2\pi}{3} \frac{t-2^k}{2^k}.\]
In that case, for every fixed $k$, $r$ and $\theta$, using that $w_t$ is decreasing in $t$, we have
\[ \sum_{t=2^k}^{2^{k+1}-1} \mathbbm{1}_{|\theta_t-\theta| \leq \frac{\pi w_t}{r}} \leq \sum_{t=2^k}^{2^{k+1}-1} \mathbbm{1}_{|\theta_t-\theta| \leq \frac{\pi w_{2^k}}{r}} \leq 1+\frac{\pi w_{2^k}/r}{2\pi/(3 \times 2^k)}=1+\frac{3}{2} \times 2^k \frac{w_{2^k}}{r} \leq 3 \sum_{t=2^k}^{2^{k+1}-1} \frac{w_t}{r},\]
and summing over $k$ yields the result.

\section{Bandit upper bound} \label{sec:banditupper}
We prove here Theorem \ref{thm:main}. The extra difficulty introduced by the bandit setting compared to the full-information toy model is that, in addition to coordinating for exploitation (which is the key point of the toy model), the players also have to coordinate their {\em exploration} of the arms. Moreover, there needs to be a {\em smooth} transition between exploration and exploitation, so that there are also no collisions if one player stops exploring before the other. To do so we introduce extra padding around the decision boundaries of the partition built in the previous section, and we give a carefully choreographed dynamic coloring of this new partition. An explicit algorithm is fully described below by combining the definition~\eqref{eqn_defn_qt}, the partition constructed in Section~\ref{subsec_partition} (and represented on Figure~\ref{fig_partition_bandits}) and the table on Figure~\ref{table_strategy_bandits}.
\newline

We denote $w_t = 2^{15} \sqrt{\frac{\log(T)}{t}}$. For $1 \leq t \leq T$, $i \in \{1,2,3\}$ and $X \in \{A,B\}$, we denote by $n_t^X(i)$ the number of times from $1$ to $t-1$ where player $X$ has played arm $i$. We also write
\begin{equation}\label{eqn_defn_qt}
q_t^X(i)=\frac{1}{n_t^X(i)} \sum_{\substack{i=1 \\ i_t^X=i}}^{t-1} \max\left(\ell_t(i), \mathbbm{1}_{i_t^A = i_t^B}\right) \,,
\end{equation}
with the convention $q_t^X(i)=0$ if $n_t^X(i)=0$. Then $\q_t^X=\left( q_t^X(1), q_t^X(2), q_t^X(3)\right)$ is an estimate at time $t$, according to player $X$, of $\p$. Note that this estimator is biased due to the potential collisions. This issue will be handled below (Lemma \ref{lem:nocollisionsbandit}).

We will prove the absence of collisions by induction on $t$, which means that we need to show that our estimators at time $t$ are not too bad if there has been no collision before. For this reason, we define the following event:
\begin{multline*}
\Omega = \Big\{\forall t \in [T], i \in \{1,2,3\}, X\in\{A,B\}, \mbox{ if there has been no collision} \\ \mbox{at times $1, \dots, t-1$, then } \left| q_t^X(i) - p_i \right| < \frac{w_{4 n_t^X(i) + 5}}{32} \Big\} \,.
\end{multline*}
If there has been no collision before time $t$, we have $q_t^X(i)=\frac{1}{n_t^X(i)} \sum_{i=1, i_t^X=i}^{t-1} \ell_t(i)$. Note that $\Omega$ depends on the $n_t^X(i)$, and therefore on the strategies used by the players. However, for any strategy, if we fix an arm $i$ and list the values $\ell_t(i)$ observed by a player $X \in \{A,B\}$, then these values are i.i.d. Bernoulli with parameter $p_i$. Therefore, the Hoeffding inequality and a union bound show that $\P \left( \Omega \right) \geq 1-\frac{1}{T}$ for any deterministic strategy of $A$ and $B$, and thereforee also for a random one. We will later prove the following result, which implies that the assumption of no collisions in $\Omega$ can be removed.

\begin{lemma} \label{lem:nocollisionsbandit}
On the event $\Omega$, our proposed bandit strategy satisfies $i_t^A \neq i_t^B$ for all $t \in [T]$.
\end{lemma}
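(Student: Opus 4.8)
The plan is to prove Lemma~\ref{lem:nocollisionsbandit} by induction on $t$, mirroring the no-collision argument of the full-information toy model but tracking the extra padding that separates the exploration and exploitation phases. Assume $\Omega$ holds, and suppose that there has been no collision at times $1,\dots,t-1$. Then, as noted just before the lemma, each $q_t^X(i)$ is an honest empirical mean of i.i.d.\ $\Ber(p_i)$ samples, and on $\Omega$ we have the concentration bound $|q_t^X(i)-p_i| < w_{4n_t^X(i)+5}/32$. The first step is to turn this into an $\ell_2$-closeness statement for the two estimate vectors: I want to deduce that $\|\q_t^A - \q_t^B\|_2$ is at most the relevant padding width at time $t$ (some multiple of $w_{\text{something}}$), uniformly over the arms, from $\|\q_t^X - \p\|_\infty$ being small for each player. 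The subtlety compared to \eqref{eq:distance} is that Alice and Bob may have pulled a given arm a different number of times, so $n_t^A(i) \neq n_t^B(i)$ in general; I expect the definition of the strategy (via the table on Figure~\ref{table_strategy_bandits}) to guarantee that when a player is still in a region where a given coordinate matters for the decision, that player has sampled that arm enough times that $w_{4n_t^X(i)+5}$ is controlled, so the two vectors land in a common small ball.

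The second step is the geometric/combinatorial core: show that the dynamic coloring $\sigma_t$ of the padded partition from Section~\ref{subsec_partition} has the property that any two points within the relevant distance of each other are either in the same cell or in adjacent cells, and that adjacent cells never produce a collision — i.e.\ if $\q, \q'$ satisfy $\|\q - \q'\|_2$ bounded by the padding width and $\sigma_t(\q) = (i^A, \cdot)$, $\sigma_t(\q') = (\cdot, i^B)$, then $i^A \neq i^B$. This is the analogue of Lemma~\ref{lem_separation} plus the "no collision for neighboring regions" observation, but now one also has to check it across the exploration/exploitation interface and across time steps, since the coloring is time-dependent (both through $w_t$ shrinking and through the moving angle $\theta_t$). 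I would handle this by a finite case analysis over the possible pairs of cells, using the fact that the padding was chosen precisely so that non-adjacent cells are separated by at least the width bounding $\|\q_t^A - \q_t^B\|_2$, and that the coloring was designed (the "carefully choreographed" coloring) so that no collision-pair of actions is ever assigned to a pair of cells that can be simultaneously occupied.

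The induction then closes immediately: if no collision occurred up to time $t-1$, the above two steps show $i_t^A = \sigma_t^A(\q_t^A) \neq \sigma_t^B(\q_t^B) = i_t^B$, so no collision occurs at time $t$ either; since the base case $t=1$ is trivial (or handled by the initial round-robin phase of the strategy), we get no collisions for all $t \in [T]$ on $\Omega$. I expect the main obstacle to be the second step in its full generality — verifying that the padded, time-dependent coloring really does avoid collisions at every boundary, including the exploration-to-exploitation transition and the region $B'_t \cup C'_t$ around the moving cut — because this is exactly where the extra design work of the bandit algorithm lives, and it requires carefully reading off from Figure~\ref{table_strategy_bandits} which action pairs are adjacent in space-time. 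A secondary technical point is making sure the concentration radius $w_{4n_t^X(i)+5}/32$ in the definition of $\Omega$ matches up with the padding widths used to define the cells, so that Step~1 genuinely feeds Step~2; I would double-check the constants ($2^{15}$, the factor $4n+5$, the $/32$) are consistent with the widths appearing in Section~\ref{subsec_partition}.
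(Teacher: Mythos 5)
There is a genuine gap, and it sits exactly where you flagged your own uncertainty: Step 1. Your plan is to show that $\|\q_t^A-\q_t^B\|_2$ is bounded by (a multiple of) $w_t$ at every time $t$, so that the argument reduces to a bandit analogue of \eqref{eq:distance} plus a neighbor-compatibility check. But this closeness claim is false once a player has entered the exploitation phase. After a player's estimate leaves the exploration regions $F_t\cup H_t\cup J_t$, that player stops pulling some arms, so $n_t^X(i)$ freezes for those arms and the accuracy guaranteed by $\Omega$, namely $w_{4 n_t^X(i)+5}/32$, stops improving; meanwhile $w_t$ keeps shrinking. For instance, if $\p$ is deep inside $E$, both players eventually stop sampling arm $3$, and their estimates of $p_3$ remain accurate only at the scale of $w_{\tau}$ for the exit time $\tau$, which for $t\gg\tau$ is much larger than $w_t$. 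Your proposed fix (``when a coordinate matters for the decision, the player has sampled that arm enough'') does not hold: in the exploitation regions the decision still depends on the full vector $\q_t^X$, yet the player no longer samples all arms. So the induction cannot be closed by uniform closeness of the two estimates.

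The paper's proof supplies the missing idea as a dichotomy (Lemma \ref{lem_close_or_same_region}), proved via the exit times $\tau^X$ of the exploration regions. While $t\lesssim\tau^X$, the table of Figure \ref{table_strategy_bandits} forces player $X$ to pull every arm at least once per block of four rounds, so $n_t^X(i)\geq(\min(t,\tau^X)-5)/4$ and $d(\q_t^X,\p)<w_{\min(t,\tau^X)}/16$; in particular $d(\q_t^X,\p)<w_t/4$ for $t\leq 16\tau^X$. For $t\geq 16\tau^X$, one instead argues that $\q_{\tau^X}^X$ being at distance at least $w_{\tau^X}/2$ from the boundary half-planes forces $\p$ to be at distance at least $\tfrac{7}{16}w_{\tau^X}$ from them, and since the exploration padding (of width $\tfrac32 w_t$) shrinks, the ball of radius $w_{\tau^X}/16$ around $\p$ is contained in $\p$'s exploitation region; the frozen-accuracy estimate $\q_t^X$ therefore lies in the \emph{same region} as $\p$, even though it need not be within $w_t$ of $\q_t^{Y}$. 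The no-collision conclusion then follows in either branch: same region as $\p$ gives a non-colliding pair directly from the table, and the close-to-$\p$ branch is handled, as you intended, by the compatibility graph together with the $w_t$-separation of non-compatible regions (the analogue of Lemma \ref{lem_separation}). Your Step 2 and the inductive wrapper match the paper; it is the per-player dichotomy replacing uniform closeness that your proposal is missing, and without it the argument fails for all $t$ beyond the start of exploitation.
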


Like in the full-information toy model, in the remainder of this section we fix loss sequences such that $\Omega$ holds true, and all probabilities are with respect to the random interface defined by $\Theta$ (see below).

\subsection{The bandit partition}\label{subsec_partition}

We recall that $w_t = 2^{15} \sqrt{\frac{\log(T)}{t}}$.
We denote by $\PP$ the half-plane $\{ \theta=\Theta \}$ and by $\QQ_1$ (resp. $\QQ_2$, $\QQ_3$) the half-plane $\{ \theta=\frac{\pi}{3} \}$ (resp. $\{ \theta=\pi \}$, $\{ \theta=\frac{5\pi}{3} \}$). We now define the following sets, that we will refer to as \emph{regions}:
\begin{itemize}
\item[$\bullet$]
$E_t=\left\{ \p | \frac{\pi}{3} \leq \theta_{\p} < \Theta \mbox{ and } d(\p, \QQ_1) \geq \frac{w_t}{2} \mbox{ and } d(\p, \PP) \geq \frac{3 w_t}{2} \right\}$,
\item[$\bullet$]
$G_t=\left\{ \p | \theta_{\p} \in \left[ 0, \frac{\pi}{3} \right) \cup \left[ \frac{5\pi}{3}, 2\pi \right) \mbox{ and } d(\p, \QQ_1) \geq \frac{w_t}{2} \mbox{ and } d(\p, \QQ_3) \geq \frac{w_t}{2} \right\}$,
\item[$\bullet$]
$H_t=\left\{ \p | d(\p, \PP \cup \QQ_3) < \frac{w_t}{2} \right\}$,
\item[$\bullet$]
$I_t=\left\{ \p | \theta_{\p} \in \left[ \pi, \frac{5\pi}{3} \right) \mbox{ and } d(\p, \QQ_2) \geq \frac{w_t}{2} \mbox{ and } d(\p, \QQ_3) \geq \frac{w_t}{2} \right\}$,
\item[$\bullet$]
$K_t=\left\{ \p | \Theta \leq \theta_{\p} < \pi \mbox{ and } d(\p, \QQ_2) \geq \frac{w_t}{2} \mbox{ and } d(\p, \PP) \geq \frac{3 w_t}{2} \right\}$,
\item[$\bullet$]
$F_t=\left\{ \p | \theta_{\p} \in \left[ 0, \Theta \right) \cup \left[ \frac{5\pi}{3}, 2\pi \right) \right\} \backslash (E_t \cup G_t \cup H_t)$,
\item[$\bullet$]
$J_t=\left\{ \p | \theta_{\p} \in \left[ \Theta, \frac{5\pi}{3} \right) \right\} \backslash (H_t \cup I_t \cup K_t)$.
\end{itemize}

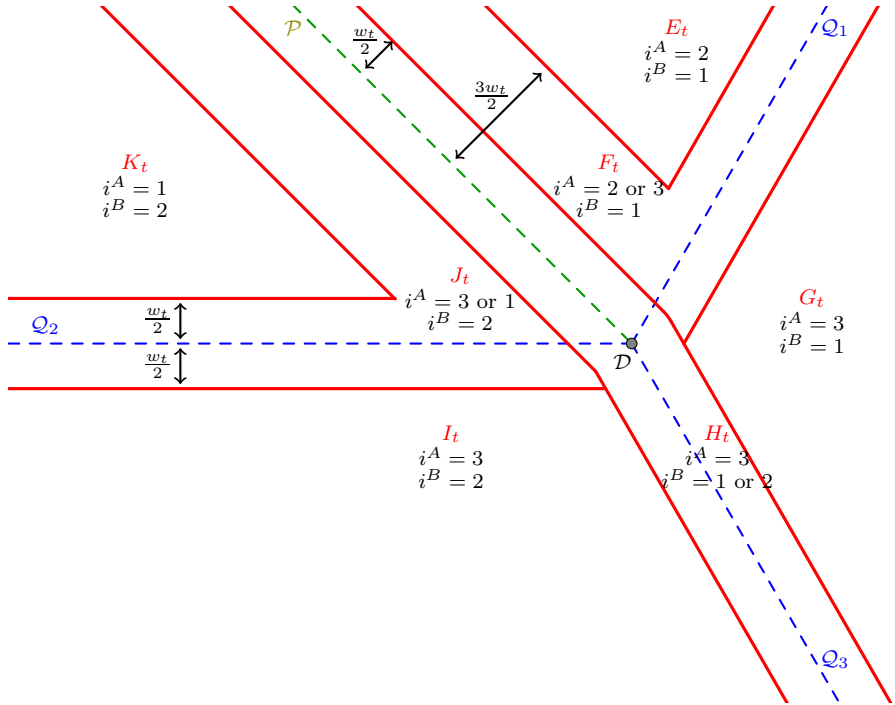
\begin{figure}
\begin{center}
\begin{tikzpicture}[line cap=round,line join=round,x=1.0cm,y=1.0cm, scale=0.6]
\clip(-13.8,-7.97) rectangle (13.2,7.47);
\draw [line width=0.8pt,dash pattern=on 4pt off 4pt,color=qqqqff,domain=0.0:13.203361792227296] plot(\x,{(-0--4.33*\x)/2.5});
\draw [line width=0.8pt,dash pattern=on 4pt off 4pt,color=qqqqff,domain=-13.804902670582594:0.0] plot(\x,{(-0-0*\x)/-5});
\draw [line width=0.8pt, dash pattern=on 4pt off 4pt,color=qqqqff,domain=0.0:13.203361792227296] plot(\x,{(-0-4.33*\x)/2.5});
\draw [line width=1.2pt,color=ffqqqq,domain=-13.804902670582594:-5.242640687119287] plot(\x,{(--8.17--1.92*\x)/-1.92});
\draw [line width=1.2pt,color=ffqqqq,domain=-13.804902670582594:-5.242640687119287] plot(\x,{(-2.51-0*\x)/-2.51});
\draw [line width=1.2pt,color=ffqqqq,domain=-13.804902670582594:-0.8001991549907305] plot(\x,{(--3.65--2.58*\x)/-2.58});
\draw [line width=1.2pt,color=ffqqqq,domain=-0.8001991549907305:13.203361792227296] plot(\x,{(-3.08-2.67*\x)/1.54});
\draw [line width=1.2pt,color=ffqqqq,domain=-13.804902670582594:-0.5773502691896253] plot(\x,{(--2.67-0*\x)/-2.67});
\draw [line width=1.2pt,color=ffqqqq,domain=-13.804902670582594:0.8208634630462484] plot(\x,{(-6.04--1.42*\x)/-1.42});
\draw [line width=1.2pt,color=ffqqqq,domain=0.8208634630462484:13.203361792227296] plot(\x,{(--1.65--1.43*\x)/0.82});
\draw [shift={(0,0)},line width=1.2pt,color=ffqqqq]  plot[domain=0.52:0.79,variable=\t]({1*1*cos(\t r)+0*1*sin(\t r)},{0*1*cos(\t r)+1*1*sin(\t r)});
\draw [line width=1.2pt,color=ffqqqq,domain=-13.804902670582594:0.7071067811865501] plot(\x,{(-2.52--1.78*\x)/-1.78});
\draw [line width=1.2pt,color=ffqqqq,domain=0.8660254037844376:13.203361792227296] plot(\x,{(--3.29-2.85*\x)/1.64});
\draw [line width=1.2pt,color=ffqqqq,domain=1.1547005383792501:13.203361792227296] plot(\x,{(-1.77--1.53*\x)/0.88});
\draw [line width=0.8pt, dash pattern=on 4pt off 4pt,color=qqzzqq,domain=-13.804902670582594:0.0] plot(\x,{(-0--2.65*\x)/-2.65});
\begin{scriptsize}
\fill [color=uuuuuu] (0,0) circle (1.5pt);
\fill [color=uuuuuu] (0,0) circle (1.5pt);
\fill [color=uuuuuu] (0,0) circle (1.5pt);
\fill [color=uuuuuu] (0,0) circle (1.5pt);
\fill [color=uuuuuu] (0,0) circle (1.5pt);
\draw[<->, thick] (-10,0.1)--(-10,0.9);
\draw[<->, thick] (-10,-0.1)--(-10,-0.9);
\draw[<->, thick] (-5.9,6.1)--(-5.3,6.7);
\draw[<->, thick] (-3.9,4.1)--(-2,6);

\draw(-10.5,0.5)node[texte]{$\frac{w_t}{2}$};
\draw(-10.5,-0.5)node[texte]{$\frac{w_t}{2}$};
\draw(-5.9,6.7)node[texte]{$\frac{w_t}{2}$};
\draw(-3.1,5.5)node[texte]{$\frac{3 w_t}{2}$};

\draw(1,7)[red]node[texte]{$E_t$};
\draw(1,6.5)node[texte]{$i^A=2$};
\draw(1,6)node[texte]{$i^B=1$};
\draw(-0.5,4)[red]node[texte]{$F_t$};
\draw(-0.5,3.5)node[texte]{$i^A=2 \mbox{ or } 3$};
\draw(-0.5,3)node[texte]{$i^B=1$};
\draw(4,1)[red]node[texte]{$G_t$};
\draw(4,0.5)node[texte]{$i^A=3$};
\draw(4,0)node[texte]{$i^B=1$};
\draw(1.9,-2)[red]node[texte]{$H_t$};
\draw(1.9,-2.5)node[texte]{$i^A=3$};
\draw(1.9,-3)node[texte]{$i^B=1 \mbox{ or } 2$};
\draw(-4,-2)[red]node[texte]{$I_t$};
\draw(-4,-2.5)node[texte]{$i^A=3$};
\draw(-4,-3)node[texte]{$i^B=2$};
\draw(-3.8,1.5)[red]node[texte]{$J_t$};
\draw(-3.8,1)node[texte]{$i^A=3\mbox{ or } 1$};
\draw(-3.8,0.5)node[texte]{$i^B=2$};
\draw(-11,4)[red]node[texte]{$K_t$};
\draw(-11,3.5)node[texte]{$i^A=1$};
\draw(-11,3)node[texte]{$i^B=2$};

\draw(0,0)node{};
\draw(-0.2,-0.4)node[texte]{$\mathcal{D}$};
\draw[blue](4.5,7)node[texte]{$\QQ_1$};
\draw[blue](-13,0.4)node[texte]{$\QQ_2$};
\draw[blue](4.5,-7)node[texte]{$\QQ_3$};
\draw[olive](-7.5,7)node[texte]{$\PP$};
\end{scriptsize}
\end{tikzpicture}
\end{center}
\caption{The intersection of our partition with the plane $\{p_1+p_2+p_3=\frac{3}{2}\}$. Below the names of the regions are the arms played by the players in the first two columns of the table, i.e. for $t \equiv 1$ or $2$ modulo $4$ (for $t \equiv 3$ or $0$ modulo $4$, the roles of players $A$ and $B$ are exchanged).}\label{fig_partition_bandits}
\end{figure}

As in the full information case, we have represented on Figure \ref{fig_partition_bandits} the restriction of this partition to the plane $\left\{ p_1+p_2+p_3=\frac{3}{2} \right\}$. Here again, since that plane is orthogonal to the half-planes $\PP$, $\QQ_1$, $\QQ_2$, $\QQ_3$, the full partition is obtained by extending Figure \ref{fig_partition_bandits} orthogonally to its plane.

\subsection{Dynamic coloring}
The strategy is now the following: for every $0 \leq t < \frac{T}{4}$, player $A$ will decide according to the region of $\q^A_{4t+1}$ where he plays at times $4t+1$, $4t+2$, $4t+3$, $4t+4$, and similarly for player $B$ according to the region of $\q_{4t+1}^B$. More precisely, player $A$ will play according to the table below. The way to read this table is as follows: if $2/1$ is written at the intersection of the row "$\q_{4t+1}^A / \q_{4t+1}^B \in E_{4t+1}$" and the column "$4t+2$", this means that if $\q_{4t+1}^A \in E_{4t+1}$, then player $A$ plays arm $2$ at time $4t+2$. If $\q_{4t+1}^B \in E_{4t+1}$, then player $B$ plays arm $1$ at time $4t+2$.

\begin{figure}[h!]
\begin{center}
\begin{tabular}{|c | c | c | c | c |} 
\hline
 & $4t+1$ & $4t+2$ & $4t+3$ & $4t+4$ \\
\hline
$\q_{4t+1}^A / \q_{4t+1}^B \in E_{4t+1}$ & 2 / 1 & 2 / 1 & 1 / 2 & 1 / 2 \\ 
\hline
$\q_{4t+1}^A / \q_{4t+1}^B \in F_{4t+1}$ & 2 / 1 & 3 / 1 & 1 / 2 & 1 / 3 \\
\hline
$\q_{4t+1}^A / \q_{4t+1}^B \in G_{4t+1}$ & 3 / 1 & 3 / 1 & 1 / 3 & 1 / 3 \\
\hline
$\q_{4t+1}^A / \q_{4t+1}^B \in H_{4t+1}$ & 3 / 1 & 3 / 2 & 1 / 3 & 2 / 3 \\
\hline
$\q_{4t+1}^A / \q_{4t+1}^B \in I_{4t+1}$ & 3 / 2 & 3 / 2 & 2 / 3 & 2 / 3 \\
\hline
$\q_{4t+1}^A / \q_{4t+1}^B \in J_{4t+1}$ & 3 / 2 & 1 / 2 & 2 / 3 & 2 / 1 \\
\hline
$\q_{4t+1}^A / \q_{4t+1}^B \in K_{4t+1}$ & 1 / 2 & 1 / 2 & 2 / 1 & 2 / 1 \\
\hline
\end{tabular}
\end{center}
\caption{The table describing the arms played by the players at time $4t+1, \dots, 4t+4$ according to $\q_{4t+1}^A$ and $\q_{4t+1}^B$.}
\label{table_strategy_bandits}
\end{figure}

Although it might seem quite complicated, this table is actually a natural adaptation of the full information strategy, where we have "smoothened" the boundaries between regions. Let us first focus on the first two columns: the regions $E$, $G$, $I$ and $K$ then correspond to the regions $A$, $B$, $C$, $D$ of the full information strategy. The difference here is that, if for example we are in the region where $p_2$ and $p_3$ are close but much larger than $p_1$, it is necessary to explore both arms $2$ and $3$ during a long time to find which is the best one. This is the role of region $F$, and regions $H$ and $J$ play a similar role.

Moreover, the last two columns are the same as the first two, where the roles of $A$ and $B$ have been exchanged. This is necessary to make sure that each of the players has information about all the arms. Of course, such a problem did not exist in the full information case.

\begin{remark}
It might have seemed more natural to choose the arm played at time $4t+2$ according to $\q^A_{4t+2}$ instead of $\q^A_{4t+1}$. The reason why we chose not to do so is to make sure that, as long as $\q^A$ belongs to $F_t \cup H_t \cup J_t$, player $A$ plays all the arms regularly, even if $\q^A$ "oscillates" for example between $F_t$ and $H_t$.
\end{remark}

\subsection{Exploration phase and no collision property}
The regions $F_t$, $H_t$ and $J_t$ can be considered as "exploration" regions, since they are regions where both players play the three arms. It is immediate from the definition of the regions that $E_t$, $G_t$, $I_t$ and $K_t$ are increasing in $t$, which means that $F_t \cup H_t \cup J_t$ is decreasing in $t$. Therefore, it is natural to expect that $\q_t^A$ will be in $F_t \cup H_t \cup J_t$ in the beginning ("exploration phase"), and in the complementary after some time ("exploitation phase"). We make this intuition precise in the proof of the next lemma.

\begin{lemma}\label{lem_close_or_same_region}
Under $\Omega$, for every $1 \leq t \leq T$, if there has been no collision before time $t$, then either $\p$, $\q_t^A$ and $\q_t^B$ are in the same region, or $\q_t^A$, $\q_t^B$ belong to the ball of radius $\frac{w_t}{4}$ around $\p$.
\end{lemma}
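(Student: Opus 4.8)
The plan is to argue that since there has been no collision before time $t$, the estimator defined in \eqref{eqn_defn_qt} reduces to the true empirical average $q_t^X(i) = \frac{1}{n_t^X(i)} \sum_{s<t, i_s^X = i} \ell_s(i)$, so the event $\Omega$ gives the deviation bound $|q_t^X(i) - p_i| < \frac{w_{4 n_t^X(i)+5}}{32}$ for each arm $i$ and each player $X$. The first step is therefore to convert this per-arm guarantee into a statement comparing $\q_t^X$ to $\p$ in a way that depends on how much each arm has been explored. Concretely, I would split into two cases according to whether there is an arm that player $X$ has explored only a few times: if $n_t^X(i)$ is small for some $i$, then $w_{4 n_t^X(i)+5}$ is large (at least of order $w_t$, since $n_t^X(i) \le t$ forces $4n_t^X(i)+5 \le 4t+5 \le 8t$, so $w_{4n_t^X(i)+5} \ge w_{8t} = w_t/\sqrt{8}$), whereas if every arm has been explored a number of times comparable to $t$, then $w_{4n_t^X(i)+5}$ is comparable to $w_t$ up to absolute constants. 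Either way one should get $\|\q_t^X - \p\|_\infty \le \frac{w_t}{16}$ (hence $\|\q_t^X - \p\|_2 \le \frac{w_t}{8}$), provided each arm has been played at least once; the factor $32$ in the definition of $\Omega$ versus the $16$ appearing in $w_t$ is clearly chosen with exactly this slack in mind.

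The second, more delicate step handles the possibility that some arm has \emph{not} been played at all by player $X$ before time $t$, in which case $q_t^X(i) = 0$ and $\q_t^X$ can be far from $\p$. Here I would use the structure of the strategy (Figure \ref{table_strategy_bandits}): in the ``exploitation'' regions $E_t, G_t, I_t, K_t$ a player only ever plays two fixed arms, so the third arm genuinely need not be explored — but then the claim to prove is that $\p$, $\q_t^A$, $\q_t^B$ lie in the \emph{same} region, and in those regions only the two coordinates that have been explored matter for membership. So the argument would be: show that $E_t, G_t, I_t, K_t$ are each defined by inequalities involving distances to the half-planes $\PP, \QQ_1, \QQ_2, \QQ_3$, and that moving a point within $\ell_\infty$-distance $\frac{w_t}{16}$ of $\p$ along the unexplored coordinate does not change which of these regions it lies in. This requires an induction hypothesis: if at time $t$ the estimate $\q_t^X$ sits in (say) $G_s$ for all $s \le t$, then from the table player $X$ has played only arms $1$ and $3$, so indeed arm $2$ is unexplored, and the region $G_t$'s defining inequalities ($d(\p,\QQ_1) \ge w_t/2$ and $d(\p,\QQ_3)\ge w_t/2$) are insensitive to the $p_2$-direction near $\p$ once $\p$ itself is safely inside; conversely if $\q_t^X$ has at some earlier point been in one of the exploration regions $F, H, J$, then by that time all three arms were being played, so $n_t^X(i)\ge 1$ for all $i$ and we are back in the first case. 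The monotonicity already noted in the text ($E_t,G_t,I_t,K_t$ increasing, $F_t\cup H_t\cup J_t$ decreasing) is what makes this dichotomy clean: a coordinate trajectory leaves the exploration regions once and never returns.

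Finally, I would combine the two steps: whenever all three arms have been played at least once, conclude $\q_t^A, \q_t^B \in B(\p, w_t/4)$ (the second alternative in the lemma); whenever some arm has not been played by a player, that player has been confined to one exploitation region $R \in \{E,G,I,K\}$, and the deviation bound on the two explored coordinates together with the fact that $\p \in R_t'$ for a slightly shrunk version $R_t'$ of $R_t$ (again using $\Omega$ and that $\p$ is genuinely inside, not on a boundary, else the small ball around it is still in $F_t\cup H_t\cup J_t$ and the arm would have been explored) forces $\q_t^A, \q_t^B \in R_t$ as well, i.e. the first alternative. The main obstacle I anticipate is bookkeeping the induction correctly: one must simultaneously carry ``no collision before $t$'' (which is what Lemma \ref{lem:nocollisionsbandit} will extract from this lemma via Lemma \ref{lem_separation}-style separation), ``$\q_s^X$ stays in a consistent region for $s\le t$'', and the deviation bounds, and verify that the $w_t$-scale paddings in the definitions of $E_t,\dots,K_t$ (the $w_t/2$ and $3w_t/2$ offsets) are exactly large enough to absorb the $w_t/8$ estimation error plus the $w_t/8$ gap between $\q_t^A$ and $\q_t^B$ — this is where the specific constants $2^{15}$ in $w_t$ and $32$ in $\Omega$ earn their keep, and getting an off-by-a-constant wrong would break the no-collision conclusion downstream.
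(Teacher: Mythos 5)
Your high-level picture (estimates are accurate while a player is still in the exploration regions, and once a player has left them $\p$ must sit safely inside an exploitation region) is the right one, but the two quantitative pivots of your argument do not hold, and they are exactly where the paper's proof does its work. First, the conclusion of your step 1 --- $\|\q_t^X-\p\|_\infty\le w_t/16$ ``provided each arm has been played at least once'' --- is false: the event $\Omega$ only gives $|q_t^X(i)-p_i|<w_{4n_t^X(i)+5}/32$, which is of order $w_t$ only when $n_t^X(i)$ is a constant fraction of $t$; your inequality $w_{4n_t^X(i)+5}\ge w_{8t}$ is a \emph{lower} bound on the permitted error and yields nothing. The missing ingredient is a counting argument tied to the exit time: letting $\tau^X$ be the first time $\q^X$ leaves $F\cup H\cup J$, the table forces player $X$ to play every arm once in each block of four rounds before $\tau^X$, so $n_t^X(i)\ge(\min(t,\tau^X)-5)/4$ and hence $d(\q_t^X,\p)<w_{\min(t,\tau^X)}/16$. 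This is the correct substitute for your step 1, and it gives the ball alternative only when $t$ is at most a constant multiple of $\tau^X$ (the paper uses $t\le 16\tau^X$).

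Second, you key the delicate case to ``some arm has never been played,'' which is essentially vacuous: $\q_1^X$ lies in $H_1$, so every arm is played within the first four rounds, and for such small $t$ the ball $B(\p,w_t/4)$ covers the whole cube anyway. The case that actually needs work is an arm \emph{dropped} at time $\tau^X$ with $t\gg\tau^X$: its count freezes near $\tau^X/4$, the accuracy of that coordinate stalls at scale $w_{\tau^X}\gg w_t$, so the ball alternative genuinely fails and one must prove the same-region alternative. Your sketch gestures at ``$\p$ is genuinely inside'' but never produces the quantitative margin nor compares it with the shrinking padding: one needs that at the exit time $d(\q_{\tau^X}^X,\PP\cup\QQ_1\cup\QQ_2\cup\QQ_3)\ge w_{\tau^X}/2$, hence $d(\p,\PP\cup\QQ_1\cup\QQ_2\cup\QQ_3)\ge\tfrac{7}{16}w_{\tau^X}$, and that for $t\ge 16\tau^X$ this exceeds $\tfrac32 w_t+\tfrac1{16}w_{\tau^X}$, so the ball $B(\p,w_{\tau^X}/16)$ --- which contains $\q_t^X$ --- avoids $F_t\cup H_t\cup J_t$ and lies inside $\p$'s exploitation region. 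Without this comparison of $w_{\tau^X}$ against $w_t$, your induction cannot close. Finally, your claim that membership in $E_t,G_t,I_t,K_t$ ``only depends on the explored coordinates'' is false: the regions are defined through the angular coordinate around $\cD$ and the distances to the half-planes, which involve all three coordinates (e.g.\ replacing $p_2$ by $0$ in a point of the $G$-sector moves it to a different sector); once the margin argument above is in place, this claim is also unnecessary.
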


\begin{proof}
For $X \in \{A,B\}$, we denote by $\tau^X$ the first time $t$ such that $q_t^X \notin F_t \cup H_t \cup J_t$, with the convention $\tau^X=+\infty$ if $q_t^X \in F_t \cup H_t \cup J_t$ for all $t \in [T]$. In particular, for any $s < \frac{\tau^X-5}{4}$ we have
$\q^X_{4s+1} \in F_t \cup H_t \cup J_t$, which means that each arm appears at least once among $i^X_{4s+1}, i^X_{4s+2}, i^X_{4s+3}, i^X_{4s+4}$. Therefore, we must have $n_t^X(i) \geq \frac{\min(t, \tau^X)-5}{4}$ for every arm $i$. Using the event $\Omega$, this implies $\left|q_t^X(i) - p_i \right| < \frac{w_{\min(t, \tau^X)}}{32}$ for all $i$, and thus
\begin{equation} \label{eq:banditdist}
d(\q_t^X, \p) < \frac{w_{\min(t, \tau^X)}}{16} \,.
\end{equation}
In particular, since any point at distance $\leq \frac{w_t}{2}$ from $\QQ_1 \cup \QQ_2 \cup \QQ_3 \cup \PP$ is in $F_t \cup H_t \cup J_t$, we have $d(\q_{\tau^X}^X, \QQ_1 \cup \QQ_2 \cup \QQ_3 \cup \PP) \geq \frac{w_t}{2}$. Hence $\p$ must be at distance at least $\frac{7}{16} w_{\tau^X}$ from $\QQ_1 \cup \QQ_2 \cup \QQ_3 \cup \PP$ (it is also immediate if $\tau_X=+\infty$). Next observe that for $t \geq 16 \tau^X$ one has $\frac{7}{16} w_{\tau^X} \geq \frac{3}{2} w_{t} + \frac1{16} w_{\tau^X}$. Since $F_t \cup H_t \cup J_t$ lie entirely at distance at most $\frac{3}{2} w_t$ from $\QQ_1 \cup \QQ_2 \cup \QQ_3 \cup \PP$, we deduce that $\p$ is at distance $\frac{w_{\tau^X}}{16}$ from $F_t \cup H_t \cup J_t$, so the ball of center $\p$ and radius $\frac{w_{\tau^X}}{16}$ is contained in the region of $\p$ (which may be $E_t$, $G_t$, $I_t$ or $K_t$). By \eqref{eq:banditdist}, this implies that $\q_t^X$ is in the same region as $\p$.

On the other hand, for $t \leq 16 \tau^X$, \eqref{eq:banditdist} gives
\[
d(\q_t^X, \p) < \frac{w_{t/16}}{16} = \frac{1}{4} w_t \, ,
\] 
which concludes the proof.
\end{proof}

We now prove the no collision property. Note that this will allow us to use the event $\Omega$ without having to assume that there has been no collision so far.

\begin{proof}[Proof of Lemma \ref{lem:nocollisionsbandit}.]
As explained earlier, we assume $\Omega$ and prove by induction on $t$ the absence of collisions until $t$. Assume there was no collision at times $1, \dots, t-1$. By Lemma \ref{lem_close_or_same_region}, we know that for every $t$, either $\q_t^A$ and $\q_t^B$ lie in the same region, or $d(\q_t^A, \q_t^B) < \frac{w_t}{2}$. In the first case, there is no collision.

In the second case, we call two regions \emph{compatible} if, whenever $A$ plays according to the first one and $B$ according to the second, we have $i_t^A \ne i_t^B$. By looking at the table of Figure \ref{table_strategy_bandits}, we find that compatible regions are given by the following graph, where two regions are linked by an edge if they are compatible.
\begin{figure}[!h]
\begin{center}
\begin{tikzpicture}
\draw(0,0)--(12,0);
\draw(0,0)to[bend left=45](4,0);
\draw(4,0)to[bend left=45](8,0);
\draw(8,0)to[bend left=45](12,0);
\draw(2,0)to[bend right=45](6,0);
\draw(6,0)to[bend right=45](10,0);

\draw(0,0)node[white]{$E_t$};
\draw(2,0)node[white]{$F_t$};
\draw(4,0)node[white]{$G_t$};
\draw(6,0)node[white]{$H_t$};
\draw(8,0)node[white]{$I_t$};
\draw(10,0)node[white]{$J_t$};
\draw(12,0)node[white]{$K_t$};
\end{tikzpicture}
\end{center}
\end{figure}

By the definitions of the regions, the distance between any two non-compatible regions is always at least $w_t$ (this is very similar to Lemma \ref{lem_separation} in the full information case, so we omit the detailed proof). Therefore, no collision can happen if the event of Lemma \ref{lem_close_or_same_region} occurs, which proves the lemma.
\end{proof}

\subsection{Concluding the proof of Theorem \ref{thm:main}}
For every $t$, we write $\ut=4 \lfloor \frac{t-1}{4} \rfloor +1$, so that $i_t^A$ is chosen according to the region of $\q_{\underline{t}}^A$. We denote by $\sigma_t=\left( \sigma_t^A, \sigma_t^B \right)$ the map prescribed by the table of Figure \ref{table_strategy_bandits}, so that $i_t^X=\sigma_t^X(\q_{\ut}^X)$. Using the fact that we have no collisions, we have
\[
R_T = \sum_{t=1}^T \left( p_{\sigma_t^A(\q_{\ut}^A)} + p_{\sigma_t^B(\q_{\ut}^B)} - \p^* \right) \,.
\]
Just like in the full information case (Section \ref{sec:regrettoyupper}) we decompose the sum into two terms, based on whether $d(\p, \cP) > 2 w_{\ut}$ or not. The case where $d(\p, \cP) \leq 2 w_{\ut}$ is dealt exactly as in the full information case, and gives a term $6 \sum_{t=1}^T w_{\ut}$ in expectation over $\Theta$. Now for the other term, we assume that $d(\p, \cP) > 2 w_{\ut}$ and we write, thanks to the dichotomy given by Lemma \ref{lem_close_or_same_region},
\begin{eqnarray*}
 p_{\sigma_t^A(\q_{\ut}^A)} + p_{\sigma_t^B(\q_{\ut}^B)} - \p^* & \leq & p_{\sigma_{t}^A(\p)} + p_{\sigma_{t}^B(\p)} - \p^* + 2 \max_{\q \in B(\p, w_{\ut} /4)} \left( p_{\sigma_{t}^A(\q)} + p_{\sigma_{t}^B(\q)} - \p^* \right) \\
 & \leq & 3 \max_{\q \in B(\p, w_{\ut} /4)} \left( q_{\sigma_{t}^A(\q)} + q_{\sigma_{t}^B(\q)} - \q^* \right) + 3 w_{\ut} \,,
\end{eqnarray*}
where the second inequality uses that $\q \mapsto \q^*$ is $2$-Lipschitz. Finally it only remains to observe that the construction of the bandit partition is such that for any $\q$ with $d(\q, \cP) \geq \frac{3 w_{\ut}}{2}$ one has
\[
q_{\sigma_t^A(\q)} + q_{\sigma_t^B(\q)} - \q^* \leq w_{\ut} \,.
\]
Thus we have proved that, $R_T \mathbbm{1}_{d(\p, \cP) > 2 w_{\ut}} \leq 6 \sum_{t=1}^T w_{\ut}$, and $\E_{\Theta} [ R_T \mathbbm{1}_{d(\p, \cP)) \leq 2 w_{\ut}} ] \leq 6 \sum_{t=1}^T w_{\ut}$. The expected regret is therefore bounded by $12 \sum_{t=1}^T w_{\ut}=O(\sqrt{T \log T})$, which concludes the proof of Theorem \ref{thm:main}.

\begin{remark}\label{rk_footnote_thm}
Let us finally justify the footnote in Theorem \ref{thm:main}. The only issue in our current proof is that the event $\Omega$ depends on the strategy of the players, and therefore on $\Theta$. One way to handle this is to take $\Theta$ to be a uniform variable among the multiples of $\frac{1}{T}$ in $\left[ \frac{\pi}{3}, \pi \right]$ instead of a uniform variable on the interval $\left[ \frac{\pi}{3}, \pi \right]$. The Hoeffding inequality and a union bound over $\Theta$ then guarantee that, with probability at least $1-1/T$, the event $\Omega$ holds simultaneously for all values of $\Theta$. The rest of the proof can be easily adapted, up to irrelevant rounding issues.
\end{remark}

\section{Toy model lower bound} \label{sec:toylower}
We prove here Theorem \ref{thm:lower}. The goal is essentially to exploit the topological obstruction we alluded to in Section \ref{sec:topological}. This topological obstruction is basically Lemma \ref{lem_cyclic_argument}.

\subsection{The hard instance}
We first describe the law of $(p_1,p_2,p_3)$. Let $\eps>0$ be small (it is actually enough to have $\eps<1/4$). Let $I$ be the following union of intervals:
\begin{align*}
I=\left[ \frac{\pi}{3}-2T^{-1/2+2\eps}, \frac{\pi}{3}+2T^{-1/2+2\eps} \right] & \cup \left[ \frac{3\pi}{3}-2T^{-1/2+2\eps}, \frac{3\pi}{3}+2T^{-1/2+2\eps} \right]\\
& \cup \left[ \frac{5\pi}{3}-2T^{-1/2+2\eps}, \frac{5\pi}{3}+2T^{-1/2+2\eps} \right],
\end{align*}
with total measure $12T^{-1/2+2\eps}$. We assume $T^{-1/2+2\eps} < \frac{\pi}{6}$ so that the definition makes sense. Let $\Theta$ be a random variable on $[0,2\pi]$ with distribution
\begin{equation}\label{law_theta}
\frac{1}{4\pi} \mathrm{d}\theta + \frac{\mathbbm{1}_{\theta \in I}}{24 T^{-1/2+2\eps}} \mathrm{d}\theta.
\end{equation}
In other words $\Theta$ is picked uniformly in $[0,2\pi]$ with probability $\frac{1}{2}$ and uniformly in $I$ with probability $\frac{1}{2}$.

Finally using the cylindrical coordinates $\p = [m_{\p}, r_{\p}, \theta_{\p}]$ from Section \ref{sec:cylindrical} we set $m_{\p}=1/2$, $r_{\p} = \sqrt{\frac{3}{2}} T^{-\epsilon}$, and $\theta_{\p} = \Theta$. We also denote by $\left( p_1(\Theta), p_2(\Theta), p_3(\Theta) \right)$ the Cartesian coordinates of $\p$, and write $p^*(\Theta)$ for the sum of the two smallest coordinates.

In particular $(p_1, p_2, p_3)$ is picked on a circle. Moreover, the "reinforcement" near $\frac{\pi}{3}$, $\pi$ and $\frac{5\pi}{3}$ of the law of $\Theta$ implies that the law of $(p_1, p_2, p_3)$ is reinforced at the places where two $p_i$ are almost equal, and much larger than the third.

\subsection{Proof skeleton}
From now on, we assume that $A$ and $B$ follow a fixed, deterministic strategy. We concentrate on the quantity:
\[
r_t(\theta) = \E \left[ 2 \cdot \mathbbm{1}_{i_t^A = i_t^B} + \mathbbm{1}_{i_t^A \neq i_t^B} (p_{i_t^A} + p_{i_t^B}) - \p^* \bigg) \big| \Theta = \theta \right] \,.
\]
It is easy to see (and the standard route for bandit lower bounds) that it is sufficient to prove that, for every $1 \leq t \leq T$, we have
\begin{equation}\label{regret_at_one_step}
\E \left[ r_t(\Theta) \right] \geq c \sqrt{\frac{\log T}{T}} \,.
\end{equation}
Therefore, we fix such a $t$ until the end of the proof. Key quantities of interest will be the following functions, defined for $i \in \{1,2,3\}$ and $X \in \{A,B\}$:
\[
f_{i}^{X}(\theta)=\P \left( i_t^X=i | \Theta=\theta \right) \,.
\]
Even if this depends on $t$, since $t$ is fixed until the end of the proof, we drop the $t$ in the notation. Since the loss vectors observed by $A$ and $B$ are independent conditionally on $ \Theta$, we can write
\begin{equation}\label{decomposition_rt}
r_t(\theta)=\sum_{i=1}^3 f_{i}^{A}(\theta)f_{i}^{B}(\theta) (2-p^*(\theta)) + \sum_{i \ne j} f_{i}^{A}(\theta)f_{j}^{B}(\theta) (p_i(\theta)+p_j(\theta)-p^*(\theta)) \geq 0 \,.
\end{equation}

The proof will now proceed by analyzing properties of the functions $f_i^X$, in particular the various constraints they must satisfy for the players to hope for a small regret.

\subsection{Constraints on the functions $f_i^X$}

In our proof, the fact that the players cannot have a very precise estimate of $\Theta$ will be encoded by the fact that the functions $f_i^A, f_i^B$ are smooth enough, so that the players cannot change drastically their choices when $\theta$ varies a little. Therefore, the first step is to prove an estimate on the regularity of the functions $f_i^A$, $f_i^B$.

\begin{lemma}\label{lem_regularity_f}
The functions $f_i^A$ and $f_i^B$ are analytic. Moreover, let $\delta>0$. Then there is a constant $c>0$ (depending on $\delta$ but not on $t$ or $T$) such that, for every $\theta, \theta'$, we have
\[f_i^A(\theta') \geq \left( f_i^A(\theta)-\delta \right) \exp \left( -c-c T^{1-2\eps} |\theta'-\theta|^2 \right),\]
and the same is true for $f_i^B$.
\end{lemma}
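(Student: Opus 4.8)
\textbf{Proof strategy for Lemma \ref{lem_regularity_f}.}

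The plan is to express $f_i^X(\theta)$ as an integral over the loss sequences against the product Bernoulli density, and then compare the densities at two parameter values $\theta$ and $\theta'$ using a likelihood-ratio (change of measure) argument. Fix $X = A$ (the case $X = B$ is identical). Since the strategy is deterministic, $f_i^A(\theta) = \P(i_t^A = i \mid \Theta = \theta)$ depends only on the law of the feedback sequence $(\ell_s^A(j))_{s < t, j \in \{1,2,3\}}$ received by Alice, which under $\Theta = \theta$ consists of independent Bernoullis with parameters $p_1(\theta), p_2(\theta), p_3(\theta)$. Write $N = 3(t-1) \le 3T$ for the number of these coordinates. The event $\{i_t^A = i\}$ is a measurable subset $S$ of $\{0,1\}^N$ determined by the strategy, so $f_i^A(\theta) = \sum_{\omega \in S} \mu_\theta(\omega)$ where $\mu_\theta$ is the product Bernoulli measure. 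Analyticity of $f_i^A$ is then immediate: it is a finite sum of monomials in the coordinates $p_j(\theta)$, and $\theta \mapsto p_j(\theta)$ is analytic by the explicit trigonometric formula from Section \ref{sec:cylindrical}.

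For the quantitative bound, the key is to control the likelihood ratio $\frac{\mu_{\theta'}(\omega)}{\mu_\theta(\omega)}$ uniformly in $\omega$, which however can be as bad as $0$ when some $p_j(\theta)$ is near $0$ or $1$; this is why the statement only asks for a bound up to the additive slack $\delta$. The plan is: first restrict to a high-probability event $\mathcal{G}$ under $\mu_\theta$ on which the empirical frequencies of the $\ell_s^A(j)$ are within, say, $C\sqrt{\log T / T}$ of $p_j(\theta)$ for all $j$; a Hoeffding plus union bound argument gives $\mu_\theta(\mathcal{G}^c) \le \delta$ once $C$ is large enough depending on $\delta$ (using $\eps < 1/4$ so that $r_\p = \sqrt{3/2}\, T^{-\eps}$ keeps the coordinates in a fixed sub-interval of $(0,1)$ bounded away from the endpoints — actually one must be slightly careful, as $T^{-\eps} \to 0$ but the coordinates $p_j(\theta)$ still vary over the full circle of radius $\sqrt{3/2}\,T^{-\eps}$ around $(1/2,1/2,1/2)$, hence stay in $[1/2 - 1, 1/2 + 1]$-type bounds, i.e. uniformly inside $(0,1)$ for $T$ large). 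On $\mathcal{G}$, write $\frac{\mu_{\theta'}(\omega)}{\mu_\theta(\omega)} = \prod_{s,j} \left(\frac{p_j(\theta')}{p_j(\theta)}\right)^{\ell_s^A(j)} \left(\frac{1-p_j(\theta')}{1-p_j(\theta)}\right)^{1-\ell_s^A(j)}$ and take logs; grouping by arm $j$, the exponent becomes $\sum_j \big[ n_j \log\frac{p_j(\theta')}{p_j(\theta)} + (m_j - n_j)\log\frac{1-p_j(\theta')}{1-p_j(\theta)}\big]$ where $m_j = t-1$ and $n_j$ is the number of observed $1$'s on arm $j$. Replacing $n_j$ by $m_j p_j(\theta)$ (the error is $O(m_j \sqrt{\log T/T}) = O(\sqrt{T\log T})$, times a factor $\log\frac{p_j(\theta')}{p_j(\theta)} = O(|\theta'-\theta|)$, which is absorbed into the $c + cT^{1-2\eps}|\theta'-\theta|^2$ budget via $ab \le \frac{a^2+b^2}{2}$), this exponent is $-\,m_j \,\mathrm{kl}(p_j(\theta), p_j(\theta')) + (\text{lower order})$ summed over $j$, i.e.\ essentially $-(t-1)$ times the KL divergence between the two product measures. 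Since $\p$ lies on the circle of radius $\sqrt{3/2}\,T^{-\eps}$, a Taylor expansion gives $\mathrm{kl}(p_j(\theta), p_j(\theta')) = O\big((p_j(\theta) - p_j(\theta'))^2\big) = O\big(r_\p^2 |\theta - \theta'|^2\big) = O\big(T^{-2\eps} |\theta-\theta'|^2\big)$, so $(t-1) \sum_j \mathrm{kl}(\cdots) = O\big(T^{1-2\eps}|\theta-\theta'|^2\big)$, which is exactly the term in the exponent. Putting it together: for $\omega \in \mathcal{G}$, $\mu_{\theta'}(\omega) \ge \mu_\theta(\omega)\exp(-c - cT^{1-2\eps}|\theta'-\theta|^2)$, hence
\[
f_i^A(\theta') \ge \mu_{\theta'}(S \cap \mathcal{G}) \ge e^{-c - cT^{1-2\eps}|\theta'-\theta|^2}\,\mu_\theta(S \cap \mathcal{G}) \ge e^{-c - cT^{1-2\eps}|\theta'-\theta|^2}\big(f_i^A(\theta) - \delta\big),
\]
as desired.

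The main obstacle, and the place needing the most care, is the passage from the pointwise likelihood-ratio bound to something uniform: the naive bound on $\log\frac{\mu_{\theta'}}{\mu_\theta}$ involves $n_j$, which fluctuates, and on the bad event the ratio genuinely collapses, so one cannot avoid the $\delta$-slack — the argument must be organized so that the bad event is charged to $\delta$ and the good event yields the clean Gaussian-type exponent. A secondary subtlety is ensuring all implicit constants depending on the distance of the $p_j(\theta)$ from $\{0,1\}$ are legitimately $T$-independent: this uses that $r_\p = \sqrt{3/2}\,T^{-\eps} \le \sqrt{3/2}$ and $\eps > 0$, so the coordinates stay in a compact sub-interval of $(0,1)$ for all $T$ (for small $T$ there is nothing to prove since $c$ may be taken large). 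Everything else — analyticity, the Taylor expansion of $\mathrm{kl}$, the Hoeffding bound — is routine.
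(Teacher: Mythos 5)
Your overall route is the same as the paper's: prove analyticity from the polynomial dependence on $(p_1,p_2,p_3)$, truncate to a high-probability event charged to the $\delta$-slack, and bound the log-likelihood ratio to second order using that the radius of the circle is of order $T^{-\eps}$. However, the quantitative choice of your good event $\mathcal{G}$ breaks the argument in two places. First, requiring the empirical frequencies to be within $C\sqrt{\log T/T}$ of $p_j(\theta)$ (i.e.\ count deviation $C(t-1)\sqrt{\log T/T}$) does \emph{not} give $\mu_\theta(\mathcal{G}^c)\leq \delta$ uniformly in $t$: for $t-1 \ll T/\log T$ the typical fluctuation of the count is $\sqrt{t-1}$, which is much larger than your allowance, so $\mu_\theta(\mathcal{G})$ is close to $0$ (e.g.\ $t=2$) and the step $\mu_\theta(S\cap\mathcal{G})\geq f_i^A(\theta)-\delta$ fails; the lemma must hold for every $1\leq t\leq T$ with $c$ independent of $t$. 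Second, even for large $t$ where your event does hold, the count deviation is of order $\sqrt{T\log T}$, so the linear term in the expansion of the log-ratio is of order $\sqrt{T\log T}\cdot T^{-\eps}|\theta-\theta'|$ (note also that $\log\tfrac{p_j(\theta')}{p_j(\theta)}=O(T^{-\eps}|\theta-\theta'|)$, not $O(|\theta-\theta'|)$ as you wrote); absorbing this via $ab\leq\tfrac{a^2+b^2}{2}$ against $T^{1-2\eps}|\theta-\theta'|^2$ necessarily leaves an additive term of order $\log T$, i.e.\ your constant $c$ would grow like $\log T$. That is not a cosmetic issue: the downstream use (Lemma \ref{lem_no_collision}) needs $\exp(-c)$ to be $T$-independent so that the loss $T^{-c\eta^2}$ can be made larger than $T^{-\eps/2}$ by tuning $\eta$ only in the quadratic part.

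The fix is exactly what the paper does, and it is the reason the $\delta$-slack is there: take the good event to be a \emph{constant-confidence} deviation in counts, $\left|\sum_{s<t}\ell_s^A(j)-(t-1)p_j(\theta)\right|\leq C\sqrt{T}$, with $C=C(\delta)$ chosen by Chebyshev or the CLT so that its probability is at least $1-\delta$ uniformly over $t\leq T$ (no $\log T$ factor is needed since only probability $1-\delta$, not $1-1/T$, is required). Then the linear term is $O\left(C\,T^{1/2-\eps}|\theta-\theta'|\right)\leq C'\left(1+T^{1-2\eps}|\theta-\theta'|^2\right)$ with $C'$ depending only on $\delta$, and combined with your (correct) second-order bound this yields the stated inequality with a $T$- and $t$-independent constant.
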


\begin{proof}
Both functions are polynomials in $(p_1, p_2, p_3)$, so they are analytic in $\delta$.

For the second point, we start by defining a "truncation" of the functions $f_i^A$. If $E$ is an event, we write
\[f_i^A(\theta,E)=\P \left( i_t^A=i \mbox{ and } E \mbox{ occurs} |\Theta=\theta \right).\]
We fix a constant $C$, and denote by $E_C(\theta)$ the event that $\left|\sum_{s=1}^{t-1}\ell_s^A(i)-(t-1) p_i(\theta) \right| \leq C \sqrt{T}$ for every $j \in \{1,2,3\}$. By the central limit theorem, if $C$ is chosen large enough (independently of $\theta$, $t$ and $T$), we have $\P \left( E_C(\theta) \right) \geq 1-\delta$, so
\[ f_i^A \left( \theta, E_C(\theta) \right) \geq f_i^A(\theta)-\delta.\]
On the other hand, we obviously have $f_i^A(\theta') \geq f_i^A \left( \theta', E_C(\theta) \right)$, so it is enough to prove
\begin{equation}\label{truncatd_continuity}
f_i^A \left( \theta', E_C(\theta) \right) \geq f_i^A \left( \theta, E_C(\theta) \right) \exp \left( -c-c T^{1-2\eps} |\theta'-\theta|^2 \right).
\end{equation}
For this, let $\ell= \left( \ell_s(i) \right)_{1 \leq s \leq t-1, 1 \leq i \leq 3} \in \left( \{0,1\}^3 \right)^{t-1}$ be a possible value of the loss vectors observed by $A$ until time $t-1$. For $j \in \{1,2,3\}$, we write $S(j)=\sum_{s=1}^{t-1} \ell_s(j)$. Then we have
\[ \log \frac{\P \left( \mbox{$A$ observes $\ell$} |\Theta=\theta' \right)}{\P \left( \mbox{$A$ observes $\ell$} |\Theta=\theta \right)} = \sum_{j=1}^3 \left( S(j) \log \frac{p_j(\theta')}{p_j(\theta)} + \left( t-1-S(j) \right) \log \frac{1-p_j(\theta')}{1-p_j(\theta)} \right).\]
The ratio $\frac{p_j(\theta')}{p_j(\theta)}$ is going to $1$ as $T \to +\infty$, uniformly in $\theta$, so we can use the inequality $\log (1+x) \geq x-x^2$ to bound the above quantity from below by
\begin{equation}\label{log_radon_nikodym}
\sum_{j=1}^3  \left( p_j(\theta')-p_j(\theta) \right) \left( \frac{S(j)}{p_j(\theta)} - \frac{t-1-S(j)}{1-p_j(\theta)} \right) - \sum_{j=1}^3 |p_j(\theta')-p_j(\theta)|^2 \left( \frac{S(j)}{p_j(\theta)^2} + \frac{t-1-S(j)}{\left( 1-p_j(\theta) \right)^2} \right).
\end{equation}
The second term is bounded from below by
\[- \sum_{j=1}^3 \left| p_j(\theta')-p_j(\theta) \right|^2 \times \frac{2t}{1/16} \geq -96 T^{1-2\eps} |\theta'-\theta|^2, \]
by using $\frac{1}{4} \leq p_j(\theta) \leq \frac{3}{4}$, and then $\left| \frac{\mathrm{d}p_j(\theta)}{\mathrm{d}\theta} \right| \leq T^{-\eps}$ and $t \leq T$.

On the other hand, since we work on the event $E_C(\theta)$, we have $\left| S(j) - (t-1)p_j(\theta) \right| \leq C \sqrt{T}$, so both $\frac{S(j)}{p_j(\theta)}$ and $\frac{t-1-S(j)}{1-p_j(\theta)}$ are close to $t-1$. More precisely, we can bound the absolute value of the first sum of \eqref{log_radon_nikodym} by
\[ \sum_{j=1}^3 \left| p_j(\theta')-p_j(\theta) \right| \times 2 \frac{C \sqrt{T}}{1/4} \leq 24C T^{1/2-\eps} |\theta'-\theta| \leq 12C \left( 1+T^{1-2\eps}|\theta'-\theta|^2 \right). \]
By combining our estimates on \eqref{log_radon_nikodym}, we obtain, for every $\ell$ compatible with $E_C(\theta)$:
\[ \P \left( \mbox{$A$ observes $\ell$} |\Theta=\theta' \right) \geq \P \left( \mbox{$A$ observes $\ell$} |\Theta=\theta \right) \exp \left( -c-c T^{1-2\eps} |\theta'-\theta|^2 \right), \]
with $c=12C+96$. This proves \eqref{truncatd_continuity} and the lemma.
\end{proof}

The next lemma expresses the risk of collision: if $f^A_i(\theta)$ and $f^B_i(\theta')$ are both large for $\theta'$ close to $\theta$, then there is a risk that both players pull the arm $i$ and a large loss occurs. In all the rest of the paper, we will write $x \succeq y$ if $x$ is larger than $y$ times an absolute constant, which does not depend on $t$ or $T$ or $\theta$, but which may vary from line to line.

\begin{lemma}\label{lem_no_collision}
There is an absolute constant $\eta$ such that the following holds. Assume that there is an arm $i$ and $\theta, \theta'$ with $|\theta'-\theta| \leq \eta T^{\eps} \sqrt{\frac{\log T}{T}}$, such that
\[ f_i^A(\theta) \geq \frac{1}{10} \mbox{ and } f_i^B(\theta') \geq \frac{1}{10}. \]
Then $r_t(\theta) \succeq T^{-\eps/2}$ and $\E \left[ r_t(\Theta) \right] \succeq T^{-\frac{1}{2}+\frac{\eps}{2}}$.
\end{lemma}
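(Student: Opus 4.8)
The plan is to exploit the fact that collision probability is governed by the product $f_i^A(\theta) f_i^B(\theta)$, together with the regularity provided by Lemma \ref{lem_regularity_f}. Suppose $f_i^A(\theta) \geq \tfrac{1}{10}$ and $f_i^B(\theta') \geq \tfrac{1}{10}$ with $|\theta'-\theta|$ of order $T^{\eps}\sqrt{\log T / T}$. First I would apply Lemma \ref{lem_regularity_f} with $\delta$ a small absolute constant (say $\delta = 1/100$) to propagate the lower bound on $f_i^B$ from $\theta'$ back to $\theta$: since $T^{1-2\eps}|\theta'-\theta|^2 \preceq \eta^2 T^{1-2\eps} T^{2\eps}\tfrac{\log T}{T} = \eta^2 \log T$, we get $f_i^B(\theta) \geq (f_i^B(\theta') - \delta)\exp(-c - c\eta^2 \log T) \succeq T^{-c\eta^2}$. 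Choosing $\eta$ small enough (depending only on the absolute constant $c$ from that lemma) ensures the exponent $c\eta^2$ is below $\eps/2$, so $f_i^B(\theta) \succeq T^{-\eps/2}$ — and in fact we may as well aim for an even smaller exponent to give room.

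Next, I would feed this into the decomposition \eqref{decomposition_rt} of $r_t(\theta)$. The term $i=j$ in that sum contributes $f_i^A(\theta) f_i^B(\theta)(2 - p^*(\theta))$; since $p^*(\theta) \leq 2$ and in fact $2 - p^*(\theta) \geq 1/2$ given $\p \in [1/4,3/4]^3$, and since $f_i^A(\theta) \geq 1/10$, we get $r_t(\theta) \geq \tfrac{1}{10} \cdot f_i^B(\theta) \cdot \tfrac12 \succeq T^{-\eps/2}$, which is the first claimed bound. For the second claim, I would integrate: $\E[r_t(\Theta)] = \int r_t(\theta)\, \mathrm{d}\mu(\theta)$ where $\mu$ is the law \eqref{law_theta}. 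The point is that $r_t$ is not just large at the single value $\theta$ but large on a whole neighborhood of $\theta$ of width comparable to $\eta T^{\eps}\sqrt{\log T/T}$: indeed, rerunning the Lemma \ref{lem_regularity_f} argument shows $f_i^A$ and $f_i^B$ both stay $\succeq T^{-\eps/2}$ (with a slightly worse but still sub-$T^{\eps/2}$ exponent) on an interval $J$ around $\theta$ of length $\asymp T^{-1/2+\eps}$ (possibly shrinking $\eta$ once more). On $J$ we then have $r_t(\theta'') \succeq T^{-\eps/2}$ for all $\theta'' \in J$ by the same $i=j$ term argument. Since $\mu$ has a density bounded below by $\tfrac{1}{4\pi}$ everywhere, $\mu(J) \succeq T^{-1/2+\eps}$, hence $\E[r_t(\Theta)] \succeq T^{-1/2+\eps} \cdot T^{-\eps/2} = T^{-1/2 + \eps/2}$, as required.

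The main obstacle, and the step deserving the most care, is bookkeeping the exponents so that all the $\exp(-c - cT^{1-2\eps}|\theta'-\theta|^2)$ losses from Lemma \ref{lem_regularity_f} — applied possibly twice (once to transfer from $\theta'$ to $\theta$, once to spread over the interval $J$) — together only cost a polynomial factor $T^{-\eps/2}$ rather than something larger, which is exactly what fixes the admissible choice of the absolute constant $\eta$ in terms of the absolute constant $c$. One has to be slightly careful that $\eta$ is chosen \emph{before} invoking the lemma (so that $c$ in the lemma does not secretly depend on $\eta$) — but inspecting Lemma \ref{lem_regularity_f}, its constant $c$ depends only on the fixed $\delta$, so this is fine: pick $\delta = 1/100$, read off $c$, then pick $\eta$ small in terms of $c$. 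A minor additional point is ensuring $\delta = 1/100 < 1/10 - (\text{something})$ so that $f_i^B(\theta') - \delta$ and later truncations stay bounded away from $0$; this is immediate with room to spare. Everything else — the Lipschitz/boundedness facts about $p_i(\theta)$, the lower bound $2 - p^*(\theta) \geq 1/2$, and the density lower bound for $\mu$ — is routine.
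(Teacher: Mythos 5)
Your proof is correct and follows essentially the same route as the paper: transfer the lower bound on $f_i^B$ from $\theta'$ to $\theta$ via Lemma \ref{lem_regularity_f}, bound $r_t(\theta)$ from below by the collision term $f_i^A(\theta) f_i^B(\theta)(2-p^*(\theta)) \geq \frac12 f_i^A(\theta) f_i^B(\theta)$ in \eqref{decomposition_rt}, and then spread the estimate over an interval of width $\asymp T^{\eps-1/2}$ around $\theta$, using the nonnegativity of $r_t$ and the everywhere-positive density $\frac{1}{4\pi}$ in \eqref{law_theta}. The only cosmetic point is that on the interval $J$ you need each of $f_i^A, f_i^B$ to be $\succeq T^{-\eps/4}$ (not merely $\succeq T^{-\eps/2}$) so that their product yields $r_t \succeq T^{-\eps/2}$, which your choice of sufficiently small $\eta$ indeed delivers, exactly as in the paper's $T^{-4\eta^2 c} \succeq T^{-\eps/4}$ step.
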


\begin{proof}
Since every term in \eqref{decomposition_rt} is nonnegative, if $T$ is large enough so that all the $p_i$ are at most $\frac{3}{4}$, we can write
\begin{align*}
r_t(\theta) & \geq f_i^A(\theta) f_i^B(\theta) \left(2-p^*(\theta) \right)\\
& \geq \frac{1}{2} f_i^A(\theta) \left( f_i^B(\theta')-\frac{1}{20} \right) \exp \left( -c- c |\theta'-\theta|^2 T^{1-2\eps} \right)\\
& \succeq T^{-c \eta^2}\\
& \succeq T^{-\eps/2},
\end{align*}
provided $\eta$ was chosen small enough compared to $\eps$. The second inequality uses Lemma \ref{lem_regularity_f} with $\delta=\frac{1}{20}$. For the second point of the lemma, assume without loss of generality $\theta<\theta'$. For every $\theta''$ in the interval
\begin{equation}\label{eqn_interval_collision}
\left[ \theta-T^{\eps-1/2}, \theta'+T^{\eps-1/2} \right],
\end{equation}
we have $|\theta''-\theta|, |\theta''-\theta'| \leq 2 \eta T^{\eps} \sqrt{\frac{\log T}{T}}$ (provided $T$ is large enough), so Lemma \ref{lem_regularity_f} gives
\[ f_i^A(\theta'') \succeq T^{-4\eta^2c} \succeq T^{-\eps/4} \mbox{ and } f_i^B(\theta'') \succeq T^{-4\eta^2c} \succeq T^{-\eps/4}\]
provided $\eta$ is small enough. Hence $r_t(\theta'') \succeq T^{-\eps/2}$. Moreover, we know from \eqref{decomposition_rt} that $r_t(\Theta) \geq 0$, so
\[\E \left[ r_t(\Theta) \right] \succeq T^{-\eps/2} \P \left( \theta-T^{\eps-1/2} \leq \Theta \leq \theta'+T^{\eps-1/2} \right) \geq T^{-\eps/2} \times \frac{2 T^{\eps-1/2}}{4\pi} \succeq T^{-1/2+\eps/2},\]
where in the end we used the law of $\Theta$ \eqref{law_theta}.
\end{proof}

\begin{remark}
This is the only place in the proof where it was necessary that the fluctuations of $(p_1, p_2, p_3)$ are of order $T^{-\eps}$ instead of $1$. If the fluctuations were constant, the interval of \eqref{eqn_interval_collision} would have size $T^{-1/2}$ instead of $T^{\eps-1/2}$.
\end{remark}

We now define several regions on the unit circle. Our goal will then be to show in a quantitative way that the players must make certain choices on each of these regions (Lemmas \ref{lem_pick_best_arm} and \ref{lem_somewhere_best_two_arms}). More precisely, we write:

\begin{itemize}
\item[$\bullet$]
$I_1=\left[ \frac{\pi}{3}-2T^{-1/2+2\eps}, \frac{\pi}{3}+2T^{-1/2+2\eps} \right]$,
\item[$\bullet$]
$I_2=\left[ \pi-2T^{-1/2+2\eps}, \pi+2T^{-1/2+2\eps} \right]$,
\item[$\bullet$]
$I_3=\left[ \frac{5\pi}{3}-2T^{-1/2+2\eps}, \frac{5\pi}{3}+2T^{-1/2+2\eps} \right]$,
\item[$\bullet$]
$I_{12}=\left[ \frac{\pi}{3}+T^{-1/2+2\eps}, \pi-T^{-1/2+2\eps} \right]$,
\item[$\bullet$]
$I_{23}=\left[ \pi+T^{-1/2+2\eps}, \frac{5\pi}{3}-T^{-1/2+2\eps} \right]$,
\item[$\bullet$]
$I_{31}=\left[ \frac{5\pi}{3}+T^{-1/2+2\eps}, 2\pi \right] \cup \left[ 0, \frac{\pi}{3}-T^{-1/2+2\eps} \right]$.
\end{itemize}
See also Figure \ref{fig_sets_I12_etc} to see what these intervals look like. Basically, $I_i$ is the region where the arm $i$ is way better than the two others but the two others are close to each other. $I_{i_1 i_2}$ is the region where the arms $i_1$ and $i_2$ are significantly better than the last one. Note also that $I_1 \cup I_2 \cup I_3$ is precisely the set $I$ of \eqref{law_theta} where the distribution of $\Theta$ is "reinforced".

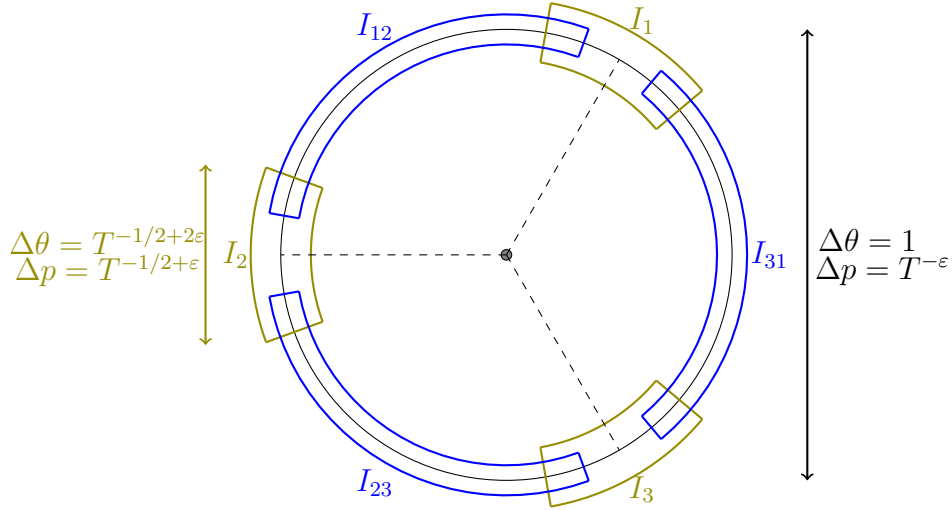
\begin{figure}
\begin{center}
\begin{tikzpicture}
\draw(0,0)circle(3cm);
\draw [olive,thick,domain=40:80] plot ({2.6*cos(\x)}, {2.6*sin(\x)});
\draw [olive,thick,domain=40:80] plot ({3.4*cos(\x)}, {3.4*sin(\x)});
\draw[olive, thick](40:2.6)--(40:3.4);
\draw[olive, thick](80:2.6)--(80:3.4);
\draw [olive,thick,domain=160:200] plot ({2.6*cos(\x)}, {2.6*sin(\x)});
\draw [olive,thick,domain=160:200] plot ({3.4*cos(\x)}, {3.4*sin(\x)});
\draw[olive, thick](160:2.6)--(160:3.4);
\draw[olive, thick](200:2.6)--(200:3.4);
\draw [olive,thick,domain=280:320] plot ({2.6*cos(\x)}, {2.6*sin(\x)});
\draw [olive,thick,domain=280:320] plot ({3.4*cos(\x)}, {3.4*sin(\x)});
\draw[olive, thick](280:2.6)--(280:3.4);
\draw[olive, thick](320:2.6)--(320:3.4);

\draw [blue,thick,domain=70:170] plot ({2.8*cos(\x)}, {2.8*sin(\x)});
\draw [blue,thick,domain=70:170] plot ({3.2*cos(\x)}, {3.2*sin(\x)});
\draw[blue, thick](70:2.8)--(70:3.2);
\draw[blue, thick](170:2.8)--(170:3.2);
\draw [blue,thick,domain=190:290] plot ({2.8*cos(\x)}, {2.8*sin(\x)});
\draw [blue,thick,domain=190:290] plot ({3.2*cos(\x)}, {3.2*sin(\x)});
\draw[blue, thick](190:2.8)--(190:3.2);
\draw[blue, thick](290:2.8)--(290:3.2);
\draw [blue,thick,domain=310:410] plot ({2.8*cos(\x)}, {2.8*sin(\x)});
\draw [blue,thick,domain=310:410] plot ({3.2*cos(\x)}, {3.2*sin(\x)});
\draw[blue, thick](310:2.8)--(310:3.2);
\draw[blue, thick](410:2.8)--(410:3.2);

\draw(0,0)node{};
\draw[dashed](0,0)--(60:3);
\draw[dashed](0,0)--(180:3);
\draw[dashed](0,0)--(300:3);

\draw[<->, olive, thick](-4,-1.2)--(-4,1.2);
\draw[olive](-5.3,0.2)node[texte]{$\Delta \theta = T^{-1/2+2\eps}$};
\draw[olive](-5.3,-0.2)node[texte]{$\Delta p = T^{-1/2+\eps}$};

\draw[<->, thick](4,-3)--(4,3);
\draw(4.8,0.2)node[texte]{$\Delta \theta = 1$};
\draw(5,-0.2)node[texte]{$\Delta p = T^{-\eps}$};

\draw[olive](180:3.6)node[texte]{$I_2$};
\draw[olive](60:3.6)node[texte]{$I_1$};
\draw[olive](300:3.6)node[texte]{$I_3$};
\draw[blue](120:3.5)node[texte]{$I_{12}$};
\draw[blue](0:3.5)node[texte]{$I_{31}$};
\draw[blue](240:3.5)node[texte]{$I_{23}$};
\end{tikzpicture}
\end{center}
\caption{The sets $I_i$ and $I_{i_1 i_2}$.}\label{fig_sets_I12_etc}
\end{figure}

The next lemma means that in the interval $I_i$, it is absolutely necessary that one of the players picks the arm $i$.

\begin{lemma}\label{lem_pick_best_arm}
Let $i_1,i_2,i_3$ be any permutation of the indices $1,2,3$. Assume that there is $\theta \in I_{i_1}$ such that
\[f_{i_2}^A(\theta) f_{i_3}^B(\theta) \geq \frac{1}{100}. \]
Then $r_t(\theta) \succeq T^{-\eps}$ and $\E \left[ r_t(\Theta) \right] \succeq T^{-2\eps}$.
\end{lemma}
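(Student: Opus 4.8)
The plan is to isolate, in the nonnegative decomposition~\eqref{decomposition_rt}, the single term indexed by the ordered pair $(i,j)=(i_2,i_3)$, i.e.\ the contribution of the event ``Alice plays $i_2$ and Bob plays $i_3$''. On $I_{i_1}$ this is the worst possible joint choice among the non-colliding ones, since arm $i_1$ is the genuinely best arm there and is avoided by both players. Concretely, dropping all other (nonnegative) summands gives $r_t(\theta)\ge f_{i_2}^A(\theta)f_{i_3}^B(\theta)\big(p_{i_2}(\theta)+p_{i_3}(\theta)-p^*(\theta)\big)$, and I would then lower-bound the two factors separately.

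First I would establish the geometric estimate that, for every $\theta'$ with $|\theta'-c_{i_1}|\le 3T^{-1/2+2\eps}$ (where $c_{i_1}\in\{\pi/3,\pi,5\pi/3\}$ is the center of $I_{i_1}$), arm $i_1$ is strictly the best and $p_{i_2}(\theta')+p_{i_3}(\theta')-p^*(\theta')=\max(p_{i_2}(\theta'),p_{i_3}(\theta'))-p_{i_1}(\theta')\ge T^{-\eps}$. This follows from the cylindrical parametrization of Section~\ref{sec:cylindrical}: at $c_{i_1}$ the two worst coordinates coincide and exceed the best one by exactly $\tfrac32\sqrt{2/3}\,r_{\p}=\tfrac32 T^{-\eps}$, while the bound $|\mathrm{d}p_j/\mathrm{d}\theta|\le T^{-\eps}$ shows that a displacement of size $O(T^{-1/2+2\eps})$ perturbs each $p_j$ by only $O(T^{-1/2+\eps})=o(T^{-\eps})$. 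Together with the hypothesis $f_{i_2}^A(\theta)f_{i_3}^B(\theta)\ge \tfrac1{100}$ (taking $\theta'=\theta\in I_{i_1}$), this already yields $r_t(\theta)\ge \tfrac1{100}T^{-\eps}\succeq T^{-\eps}$.

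For the bound on $\E[r_t(\Theta)]$ I would mimic the spreading argument in the second part of Lemma~\ref{lem_no_collision}. Since $f_{i_2}^A(\theta),f_{i_3}^B(\theta)\ge\tfrac1{100}$, applying Lemma~\ref{lem_regularity_f} with $\delta=\tfrac1{200}$ keeps $f_{i_2}^A(\theta''),f_{i_3}^B(\theta'')\ge c'$ for an absolute constant $c'>0$ on the window $J=[\theta-T^{-1/2+\eps},\theta+T^{-1/2+\eps}]$ (there the exponent $cT^{1-2\eps}|\theta''-\theta|^2$ is bounded by the absolute constant $c$); on this window the geometric estimate of the previous step still applies, so $r_t(\theta'')\succeq T^{-\eps}$ throughout $J$. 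Finally, because $\theta\in I_{i_1}\subset I$ and $I_{i_1}$ has half-width $2T^{-1/2+2\eps}$, one of the two halves of $J$ lies entirely in $I_{i_1}\subset I$; on $I$ the density of $\Theta$ from~\eqref{law_theta} is at least $\tfrac1{24}T^{1/2-2\eps}$, so $\P(\Theta\in J)\ge \tfrac1{24}T^{-\eps}$. Using $r_t\ge0$ everywhere, $\E[r_t(\Theta)]\ge \P(\Theta\in J)\min_{\theta''\in J}r_t(\theta'')\succeq T^{-2\eps}$.

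The proof involves no single delicate step, only a consistent tracking of scales; the one place where the constants must line up is that the regularity lemma only controls $f_{i_2}^A,f_{i_3}^B$ on a window of width $\sim T^{-1/2+\eps}$, which must be dominated by the width $T^{-1/2+2\eps}$ of $I_{i_1}$ so that a constant fraction of that window still falls inside the reinforced set $I$ --- this is exactly why the intervals $I_i$ were defined with the extra $T^{-1/2+2\eps}$ padding and why the assumption $\eps<1/4$ (which gives $T^{-1/2+\eps}=o(T^{-\eps})$) is used.
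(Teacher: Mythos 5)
Your proof is correct and follows essentially the same route as the paper's: isolate the nonnegative term $f_{i_2}^A f_{i_3}^B\,(p_{i_2}+p_{i_3}-p^*)$ in \eqref{decomposition_rt}, bound the gap $\max(p_{i_2},p_{i_3})-p_{i_1}\succeq T^{-\eps}$ on $I_{i_1}$, then spread the bound over a window of width $T^{\eps-1/2}$ via Lemma \ref{lem_regularity_f} and use the reinforced density of $\Theta$ on $I$ to get $\E[r_t(\Theta)]\succeq T^{-2\eps}$. You simply make explicit the geometric and overlap estimates that the paper leaves implicit.
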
 

\begin{proof}
Without loss of generality, assume $i_1=1, i_2=2, i_3=3$. Since each term in \eqref{decomposition_rt} is nonnegative, we have
\[r_t(\theta) \geq f_{2}^A(\theta) f_{3}^B(\theta) \left( p_2(\theta)+p_3(\theta)-p^*(\theta) \right) \geq \frac{1}{100} \left( \max(p_2(\theta), p_3(\theta)) -p_1(\theta) \right) \succeq T^{-\eps}, \]
by the definition of $I_1$.

Similarly, for every $\theta'$ with $|\theta'-\theta| \leq T^{\eps-1/2}$, by Lemma \ref{lem_regularity_f}, we have $r_t(\theta') \succeq r_t(\theta) \succeq T^{-\eps}$. Therefore:
\[ \E \left[ r_t(\Theta) \right] \succeq T^{-\eps} \P \left( |\Theta-\theta| \leq T^{\eps-1/2} \right) \succeq T^{-\eps} \frac{T^{\eps-1/2}}{T^{2\eps-1/2}},\]
where the last inequality follows from the law of $\Theta$ \eqref{law_theta}, and more precisely the fact that it is "reinforced" on $I_1 \cup I_2 \cup I_3$.
\end{proof}

After Lemmas \ref{lem_no_collision} and \ref{lem_pick_best_arm}, we now state a third constraint on the strategy of the players. This one states that a suboptimal choice cannot be made on a too large region, and in particular not on the whole region $I_1 \cap I_{12}$.

\begin{lemma}\label{lem_somewhere_best_two_arms}
Let $i_1,i_2,i_3$ be any permutation of the indices $1,2,3$.
\begin{itemize}
\item[$\bullet$]
Let $\theta \in I_{i_1} \cap I_{i_1 i_2}$. If
\[f_{i_1}^A(\theta)f_{i_3}^B(\theta) \geq \frac{1}{100} \mbox{ or } f_{i_3}^A(\theta)f_{i_1}^B(\theta) \geq \frac{1}{100},\]
then $r_t(\theta) \succeq T^{-1/2+\eps}$.
\item[$\bullet$]
If
\[f_{i_1}^A(\theta)f_{i_3}^B(\theta) + f_{i_3}^A(\theta)f_{i_1}^B(\theta) \geq \frac{2}{100}\]
for all $\theta \in I_{i_1} \cap I_{i_1 i_2}$, then $\E \left[ r_t(\Theta) \right] \succeq T^{-1/2+\eps}$.
\end{itemize}
\end{lemma}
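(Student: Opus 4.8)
The plan is to handle the two bullets in sequence, the second following from the first by integrating against the law of $\Theta$. Both rely on one geometric fact about the hard instance. On $I_{i_1}\cap I_{i_1 i_2}$ the constraint $I_{i_1}$ says arm $i_1$ is the unique best arm and the constraint $I_{i_1 i_2}$ says that $i_1,i_2$ are the best two; hence $p_{i_1}(\theta)\le p_{i_2}(\theta)\le p_{i_3}(\theta)$ and $p^*(\theta)=p_{i_1}(\theta)+p_{i_2}(\theta)$, so the ``wrong pair'' $\{i_1,i_3\}$ is penalized by exactly $p_{i_3}(\theta)-p_{i_2}(\theta)$. First I would estimate this gap from below: using the cylindrical parametrization of the instance (with $r_\p=\sqrt{3/2}\,T^{-\eps}$) together with a sum-to-product identity, $p_{i_3}(\theta)-p_{i_2}(\theta)=\sqrt3\,T^{-\eps}\,\big|\sin(\theta-\theta_{i_1})\big|$, where $\theta_{i_1}\in\{\pi/3,\pi,5\pi/3\}$ is the centre of the arc $I_{i_1}$ (which coordinate plays which role, and whether $I_{i_1 i_2}$ sits to the left or right of $\theta_{i_1}$, is dictated by the permutation but is symmetric to the case $i_1=1,i_2=2,i_3=3$). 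On $I_{i_1}\cap I_{i_1 i_2}$ we have $|\theta-\theta_{i_1}|\in[T^{-1/2+2\eps},2T^{-1/2+2\eps}]$, so $\sin x\ge\tfrac2\pi x$ gives $p_{i_3}(\theta)-p_{i_2}(\theta)\succeq T^{-1/2+\eps}$.

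For the first bullet I would then keep a single nonnegative term in the decomposition \eqref{decomposition_rt}: if $f_{i_1}^A(\theta)f_{i_3}^B(\theta)\ge\tfrac1{100}$ then $r_t(\theta)\ge f_{i_1}^A(\theta)f_{i_3}^B(\theta)\big(p_{i_1}(\theta)+p_{i_3}(\theta)-p^*(\theta)\big)=f_{i_1}^A(\theta)f_{i_3}^B(\theta)\big(p_{i_3}(\theta)-p_{i_2}(\theta)\big)\ge\tfrac1{100}\big(p_{i_3}(\theta)-p_{i_2}(\theta)\big)\succeq T^{-1/2+\eps}$, and the symmetric term of \eqref{decomposition_rt} handles the case $f_{i_3}^A(\theta)f_{i_1}^B(\theta)\ge\tfrac1{100}$ after exchanging the roles of $A$ and $B$.

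For the second bullet, note that the hypothesis $f_{i_1}^A(\theta)f_{i_3}^B(\theta)+f_{i_3}^A(\theta)f_{i_1}^B(\theta)\ge\tfrac2{100}$ forces at least one of the two products to be $\ge\tfrac1{100}$ at every $\theta\in I_{i_1}\cap I_{i_1 i_2}$, so by the first bullet $r_t(\theta)\succeq T^{-1/2+\eps}$ uniformly on that interval — and here, unlike in Lemmas \ref{lem_no_collision} and \ref{lem_pick_best_arm}, no appeal to the regularity Lemma \ref{lem_regularity_f} is needed, since the lower bound already holds on the whole arc. It then suffices to integrate against the law of $\Theta$ from \eqref{law_theta}: $I_{i_1}\cap I_{i_1 i_2}$ is an interval of length $T^{-1/2+2\eps}$ contained in $I_{i_1}\subset I$, where the density of $\Theta$ is at least $\tfrac1{24\,T^{-1/2+2\eps}}$, so $\P(\Theta\in I_{i_1}\cap I_{i_1 i_2})\ge\tfrac1{24}$, and hence $\E[r_t(\Theta)]\ge\P(\Theta\in I_{i_1}\cap I_{i_1 i_2})\cdot\inf_{\theta\in I_{i_1}\cap I_{i_1 i_2}}r_t(\theta)\succeq T^{-1/2+\eps}$. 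I do not expect a genuine obstacle here: the only points requiring care are the gap estimate in the first paragraph and the verification that $I_{i_1}\cap I_{i_1 i_2}$ lies inside the reinforced support $I$ with constant $\Theta$-mass; everything else is immediate from \eqref{decomposition_rt}.
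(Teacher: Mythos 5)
Your proof is correct and follows essentially the same route as the paper: you keep the single nonnegative term $f_{i_1}^A(\theta)f_{i_3}^B(\theta)\left(p_{i_1}(\theta)+p_{i_3}(\theta)-p^*(\theta)\right)$ of \eqref{decomposition_rt}, lower-bound the gap $p_{i_3}(\theta)-p_{i_2}(\theta)\succeq T^{-1/2+\eps}$ from the definition of $I_{i_1}\cap I_{i_1 i_2}$, and for the second bullet integrate against the reinforced law of $\Theta$ on that interval. Your explicit trigonometric computation of the gap and the constant-mass bound $\P\left(\Theta\in I_{i_1}\cap I_{i_1 i_2}\right)\geq \frac{1}{24}$ merely spell out steps the paper leaves implicit.
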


\begin{proof}
Without loss of generality, assume $i_1=1, i_2=2, i_3=3$, so that $p_1(\theta)<p_2(\theta)<p_3(\theta)$ on $I_1 \cap I_{12}$. For the first point, by \eqref{decomposition_rt}, we have
\[r_t(\theta) \geq f_1^A(\theta) f_3^B(\theta) \left( p_3(\theta)-p_2(\theta)\right) \succeq T^{-1/2+\eps}, \]
where the last inequality follows from the definition of $I_{12}$.

This implies that under the assumptions of the second point, we have $r_t(\theta) \succeq T^{-1/2+\eps}$ for all $\theta \in I_1 \cap I_{12}$, so
\[ \E \left[r_t(\Theta) \right] \succeq T^{-1/2+\eps} \P \left( \Theta \in I_{12} \cap I_1 \right) \succeq T^{-1/2+\eps} \]
by \eqref{law_theta}.
\end{proof}

\subsection{Proof of Theorem \ref{thm:lower}}
We recall that $1 \leq t \leq T$ is fixed. As noted earlier, it is sufficient to check $\E \left[ r_t(\Theta) \right] \succeq \sqrt{\frac{\log T}{T}}$. For each $\theta$, let $a(\theta)$ (resp. $b(\theta)$) be the set of arms $i$ such that $f_i^A(\theta)$ (resp. $f_i^B(\theta)$) is at least $\frac{1}{10}$.

It follows from Lemma \ref{lem_no_collision} that if $\E \left[ r_t(\Theta) \right] \preceq \sqrt{\frac{\log T}{T}}$, then $a(\theta) \cap b(\theta) = \emptyset$ and clearly $a(\theta)$ and $b(\theta)$ are nonempty, so only the following situations can occur:
\begin{itemize}
\item
$a(\theta)$ and $b(\theta)$ are disjoint singletons;
\item
$a(\theta)$ is a singleton and $b(\theta)$ its complement;
\item
$b(\theta)$ is a singleton and $a(\theta)$ its complement.
\end{itemize}
We denote by $E(\theta)$ the pair $(a(\theta), b(\theta))$. We will write $E(\theta)$ in a compact form. For example, if $a(\theta)=\{1,3\}$ and $b(\theta)=\{2\}$, we will write $E(\theta)=(13,2)$. The $12$ possible values of $E(\theta)$ split the circle on which $\theta$ lives into regions. Since the functions $f_i^A$ and $f_i^B$ are analytic by Lemma \ref{lem_regularity_f}, these regions are finite unions of intervals. Moreover, Lemma \ref{lem_no_collision} shows that $a(\theta) \cap b(\theta') = \emptyset$ if $|\theta'-\theta| \leq \eta T^{\eps} \sqrt{\frac{\log T}{T}}$, so certain regions may not touch each other. More precisly, the graph of possible adjacence of these regions is summed up on Figure \ref{fig_graph_pairs_collisions}.

\begin{figure}
\begin{center}
\begin{tikzpicture}
\draw(0:2.5)--(30:2.5);
\draw(30:2.5)--(60:2.5);
\draw(60:2.5)--(90:2.5);
\draw(90:2.5)--(120:2.5);
\draw(120:2.5)--(150:2.5);
\draw(150:2.5)--(180:2.5);
\draw(180:2.5)--(210:2.5);
\draw(210:2.5)--(240:2.5);
\draw(240:2.5)--(270:2.5);
\draw(270:2.5)--(300:2.5);
\draw(300:2.5)--(330:2.5);
\draw(330:2.5)--(0:2.5);

\draw[dashed](0:2.5)to[bend left](60:2.5);
\draw[dashed](60:2.5)to[bend left](120:2.5);
\draw[dashed](120:2.5)to[bend left](180:2.5);
\draw[dashed](180:2.5)to[bend left](240:2.5);
\draw[dashed](240:2.5)to[bend left](300:2.5);
\draw[dashed](300:2.5)to[bend left](0:2.5);


\draw(0:2.5)node[white]{$1,2$};
\draw(30:2.5)node[white]{$1,23$};
\draw(60:2.5)node[white]{$1,3$};
\draw(90:2.5)node[white]{$12,3$};
\draw(120:2.5)node[white]{$2,3$};
\draw(150:2.5)node[white]{$2,13$};
\draw(180:2.5)node[white]{$2,1$};
\draw(210:2.5)node[white]{$23,1$};
\draw(240:2.5)node[white]{$3,1$};
\draw(270:2.5)node[white]{$3,12$};
\draw(300:2.5)node[white]{$3,2$};
\draw(330:2.5)node[white]{$13,2$};
\end{tikzpicture}
\end{center}
\caption{The collision graph: the vertices are the possible values of $E(\theta)$. The pairs of vertices linked by a full edge correspond to regions that may be neighbour. Note that $(1,2)$ and $(1,3)$ are not linked by a full edge, because at the boundary we would have $f_2^B(\theta)=f_3^B(\theta)=\frac{1}{10}$ but $f_1^B(\theta)<\frac{1}{10}$, which is not possible since $f_1^B+f_2^B+f_3^B=1$. The vertices not linked by any edge correspond to regions which must be separated by at least $\eta T^{\eps} \sqrt{\frac{\log T}{T}}$ to avoid the risk of a collision.}\label{fig_graph_pairs_collisions}
\end{figure}
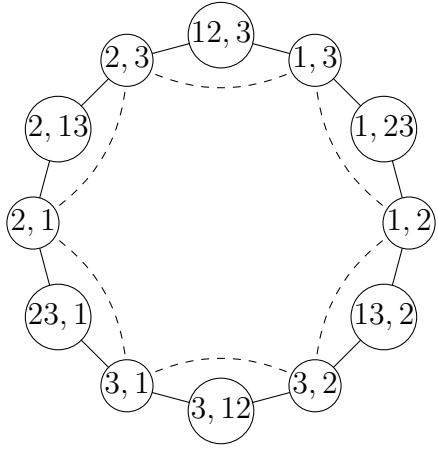

Moreover, if $\E \left[ r_t(\Theta) \right] \preceq \sqrt{\frac{\log T}{T}}$, then Lemmas \ref{lem_pick_best_arm} and \ref{lem_somewhere_best_two_arms} imply respectively the following.
\begin{enumerate}
\item\label{item_1}
For $i \in \{1,2,3\}$ and $\theta \in I_i$, we have $i \in a(\theta) \cup b(\theta)$;
\item\label{item_2}
for any permutation $i_1, i_2, i_3$ of the indices $1,2,3$, there is $\theta_{i_1 i_2} \in I_{i_1} \cap I_{i_1 i_2}$ such that $\{i_1, i_3\}$ is not included in $a(\theta_{i_1 i_2}) \cup b(\theta_{i_1 i_2})$. Since $i_1$ is always in the union by the previous item, it means that $E(\theta_{i_1 i_2})$ has to be $(i_1,i_2)$ or $(i_2,i_1)$.
\end{enumerate}

\begin{lemma}\label{lem_cyclic_argument}
There is a permutation $i_1, i_2, i_3$ of the indices $1,2,3$ such that:
\[E(\theta_{i_1 i_2})=(i_1,i_2) \mbox{ but } E(\theta_{i_2 i_1})=(i_2,i_1), \mbox{ or } E(\theta_{i_1 i_2})=(i_2,i_1) \mbox{ but } E(\theta_{i_2 i_1})=(i_1,i_2).\]
\end{lemma}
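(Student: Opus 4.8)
The plan is to argue by contradiction, exploiting the topological obstruction of Section~\ref{sec:topological}. Recall that the points $\theta_{i_1i_2}$, items~\ref{item_1} and~\ref{item_2}, and the collision graph of Figure~\ref{fig_graph_pairs_collisions} are all available under the standing assumption $\E[r_t(\Theta)]\preceq\sqrt{\log T/T}$, which moreover forces $a(\theta)\cap b(\theta)=\emptyset$ for every $\theta$ (Lemma~\ref{lem_no_collision}). Suppose the conclusion of the lemma fails, so that $E(\theta_{i_1i_2})=E(\theta_{i_2i_1})$ for every unordered pair of indices; by item~\ref{item_2} this common value is a pair of disjoint singletons, and I write $v_{12}\in\{(1,2),(2,1)\}$, $v_{23}\in\{(2,3),(3,2)\}$, $v_{31}\in\{(3,1),(1,3)\}$ for the three values. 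The goal is to derive a contradiction by tracking, around the circle, which player is responsible for each arm.

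First I would pin down the cyclic positions of the six points $\theta_{i_1i_2}$. Unwinding the interval definitions, $\theta_{i_1i_2}\in I_{i_1}\cap I_{i_1i_2}$ lies within $2T^{-1/2+2\eps}$ of the cusp of $I_{i_1}$ (namely $\frac{\pi}{3}$, $\pi$ or $\frac{5\pi}{3}$), on the side facing $I_{i_1i_2}$; hence $\theta_{12},\theta_{13}$ straddle $\frac{\pi}{3}$, $\theta_{21},\theta_{23}$ straddle $\pi$, and $\theta_{31},\theta_{32}$ straddle $\frac{5\pi}{3}$, so the cyclic order is $\theta_{12},\theta_{21},\theta_{23},\theta_{32},\theta_{31},\theta_{13}$. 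The only consequence I need is the three containments
\[
[\theta_{21},\theta_{23}]\subseteq I_2,\qquad [\theta_{32},\theta_{31}]\subseteq I_3,\qquad [\theta_{13},\theta_{12}]\subseteq I_1 ,
\]
each of which is immediate once one notes that $I_i$ is exactly the $2T^{-1/2+2\eps}$-neighbourhood of its cusp and that both endpoints of the corresponding segment lie in that neighbourhood.

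Next, for each $i\in\{1,2,3\}$ and $\theta\in I_i$ I would record which player plays arm $i$: item~\ref{item_1} gives $i\in a(\theta)\cup b(\theta)$ on $I_i$, and $a(\theta)\cap b(\theta)=\emptyset$ then forces exactly one of $f_i^A(\theta)\ge\frac1{10}$, $f_i^B(\theta)\ge\frac1{10}$ to hold, so $\epsilon_i(\theta)\in\{A,B\}$ is well defined. The crux is that $\epsilon_i$ is constant on the whole interval $I_i$: the sets $\{\theta\in I_i:f_i^A(\theta)\ge\frac1{10}\}$ and $\{\theta\in I_i:f_i^B(\theta)\ge\frac1{10}\}$ are closed (the $f$'s are continuous), disjoint (by $a\cap b=\emptyset$), and cover $I_i$ (item~\ref{item_1}); since $I_i$ is connected one of them must be empty. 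This is precisely the topological obstruction of Section~\ref{sec:topological}: near the cusp of $I_i$ the arm $i$ is so much the best arm that someone is always forced to be on it, leaving no buffer in which to hand it from one player to the other.

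Finally I would run the combinatorial chase. Reading off who plays arm $2$ at $\theta_{21}$ (where $E=v_{12}$) and at $\theta_{23}$ (where $E=v_{23}$), the constancy of $\epsilon_2$ on $[\theta_{21},\theta_{23}]\subseteq I_2$ yields $\big(v_{12}=(1,2)\big)\Leftrightarrow\big(v_{23}=(3,2)\big)$, since both amount to ``arm $2$ is played by $B$''. In the same way, the constancy of $\epsilon_3$ on $[\theta_{32},\theta_{31}]\subseteq I_3$ gives $\big(v_{23}=(3,2)\big)\Leftrightarrow\big(v_{31}=(3,1)\big)$, and the constancy of $\epsilon_1$ on $[\theta_{13},\theta_{12}]\subseteq I_1$ gives $\big(v_{31}=(3,1)\big)\Leftrightarrow\big(v_{12}=(2,1)\big)$. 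Chaining the three equivalences produces $\big(v_{12}=(1,2)\big)\Leftrightarrow\big(v_{12}=(2,1)\big)$, which is impossible since exactly one of the two holds; hence some pair must satisfy $E(\theta_{i_1i_2})\ne E(\theta_{i_2i_1})$, which is the claim. The step I expect to be the main obstacle is the geometric bookkeeping of the second paragraph — getting the cyclic order of the $\theta_{i_1i_2}$ right and verifying the three containments from the slightly intricate interval definitions; the well-definedness of $\epsilon_i$ (which genuinely needs both item~\ref{item_1} and the no-collision bound) and the closing chase are then routine.
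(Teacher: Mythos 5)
Your proof is correct and follows essentially the same route as the paper: negate the statement, use item~\ref{item_1} together with $a(\theta)\cap b(\theta)=\emptyset$ and the connectedness of $I_i$ to show that the player responsible for arm $i$ is constant on $I_i$, and then chase around the three pairs to reach a contradiction. The only cosmetic differences are that the paper runs the chase from a WLOG starting value $E(\theta_{12})=(1,2)$ rather than via your three chained equivalences, and it does not need the explicit cyclic-order/containment bookkeeping, since membership of the two relevant points in each $I_i$ already suffices.
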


\begin{proof}
Suppose this is not the case, and assume without loss of generality that $E(\theta_{12})=(1,2)$. By Item \ref{item_1} above, we know that for every $\theta \in I_1$, the arm $1$ must be in exactly one of the two sets $a(\theta)$ and $b(\theta)$. Since $I_1$ is connected, it is always in the same set, so $1 \in a(\theta)$. In particular, since $\theta_{13} \in I_1$, we have $1 \in a(\theta_{13})$, so $E(\theta_{13})=(1,3)$.

But by our assumption that we are on a counter-example to Lemma \ref{lem_cyclic_argument}, it follows that $E(\theta_{31})=(1,3)$. By the same argument using Item \ref{item_1}, this implies $E(\theta_{32})=(2,3)$, so by our assumption $E(\theta_{23})=(2,3)$. Hence $E(\theta_{21})=(2,1)$ by Item \ref{item_1} and finally $E(\theta_{12})=(2,1)$ by our assumption. This is a contradiction.
\end{proof}

We are now in position to conclude the proof of Theorem \ref{thm:lower}. We consider a counter-example to \eqref{regret_at_one_step}. By Lemma \ref{lem_cyclic_argument}, without loss of generality, we can assume $E(\theta_{12})=(1,2)$ and $E(\theta_{21})=(2,1)$, where $\theta_{12} \in I_1 \cap I_{12}$ and $\theta_{21} \in I_2 \cap I_{12}$, so $\theta_{12}<\theta_{21}$. We define
\[\widehat{\theta}=\inf \{ \theta \in [\theta_{12}, \theta_{21}] | E(\theta)=(2,1) \},\]
\[ \widetilde{\theta}=\sup \{ \theta \in [\theta_{12}, \widehat{\theta}] | E(\theta)=(1,2)\}. \]
We note that by definition of $I_{12}$, we have
\[ \frac{\pi}{3}+T^{-1/2+2\eps} \leq \widetilde{\theta} < \widehat{\theta} \leq \pi-T^{-1/2+2\eps},\]
with $\widehat{\theta}-\widetilde{\theta} \geq \eta T^{\eps}\sqrt{\frac{\log T}{T}}$ to avoid collisions (see Figure \ref{fig_graph_pairs_collisions}). By definition, for $\widetilde{\theta}<\theta<\widehat{\theta}$, we have $E(\theta) \ne (1,2), (2,1)$, so $3 \in a(\theta) \cup b(\theta)$. But note that on Figure \ref{fig_graph_pairs_collisions}, the vertices $(1,2)$ and $(2,1)$ disconnect the graph into two parts: the "top" part, where $3 \in b(\theta)$, and the "bottom" part, where $3 \in a(\theta)$. It follows that either $3 \in a(\theta)$ for all $\widetilde{\theta}<\theta<\widehat{\theta}$, or $3 \in b(\theta)$ for all such $\theta$. Without loss of generality, assume that we are in the first case.

To finish the proof, we distinguish three cases according to the values of $\widetilde{\theta}$ and $\widehat{\theta}$ in the interval $I_{12}$.
\begin{itemize}
\item[$\bullet$]
Case 1: $\frac{\pi}{3}+\frac{\pi}{100} \leq \widetilde{\theta}<\widehat{\theta}$.\\
In this case, note that by the graph of Figure \ref{fig_graph_pairs_collisions}, the region where $E(\theta)=(3,2)$ must be separated from $\widehat{\theta}$ by at least $\eta T^{\eps}\sqrt{\frac{\log T}{T}}$. Hence, there is an interval $J$ of length at least $\eta T^{\eps}\sqrt{\frac{\log T}{T}}$ where $3 \in a(\theta)$ and $1 \in b(\theta)$. For any $\theta$ in this interval, we have
\begin{align*}
r_t(\theta) & \geq f_3^A(\theta) f_1^B(\theta) \left( p_1(\theta)+p_3(\theta)-p^*(\theta) \right)\\
& \geq \frac{1}{100} \left( p_3(\theta)-p_2(\theta) \right)\\
& \succeq T^{-\eps},
\end{align*}
where the second inequality follows from the definitions of $a(\theta)$ and $b(\theta)$, and the last one from $\theta>\frac{\pi}{3}+\frac{\pi}{100}$. From the law of $\Theta$, it follows that
\[ \E \left[ r_t(\Theta) \right] \succeq T^{-\eps} \P \left( \Theta \in J \right) \geq T^{-\eps} \times \frac{1}{4\pi} \eta T^{\eps}\sqrt{\frac{\log T}{T}} \succeq \sqrt{\frac{\log T}{T}}.\]
\item[$\bullet$]
Case 2: $\widetilde{\theta} < \widehat{\theta} \leq \pi-\frac{\pi}{100}$.\\
This case is similar to the first one where we exchange the roles of the arms $1$ and $2$: there is an interval $J' \subset \left[ \frac{\pi}{3}, \pi-\frac{\pi}{100} \right]$ with length at least $\eta T^{\eps}\sqrt{\frac{\log T}{T}}$ where $3 \in a(\theta)$ and $2 \in b(\theta)$. On this interval, we have
\[r_t(\theta) \geq f_3^A(\theta) f_2^B(\theta) \left( p_3(\theta)-p_1(\theta) \right) \succeq T^{-\eps},\]
so we get $\E \left[ r_t(\Theta) \right] \succeq \sqrt{\frac{\log T}{T}}$.
\item[$\bullet$]
Case 3: $\widetilde{\theta}<\frac{\pi}{3}+\frac{\pi}{100}<\pi-\frac{\pi}{100}<\widehat{\theta}$.\\
In this case, we have $3 \in a(\theta)$ on the full interval $\left[ \frac{\pi}{3}+\frac{\pi}{100}, \pi-\frac{\pi}{100}\right]$, so for any $\theta$ in that interval we have
\[r_t(\theta) \geq f_3^A(\theta) \left( p_3(\theta)-\max \left( p_1(\theta), p_2(\theta) \right) \right) \succeq T^{-\eps}.\]
Since this interval is macroscopic, the variable $\Theta$ lands in it with probability $\succeq 1$, so $\E \left[ r_t(\Theta) \right] \succeq T^{-\eps}$, which concludes the proof.
\end{itemize}

\begin{remark}
Separating different cases was necessary in the end: for example, if the interval $[\widetilde{\theta}, \widehat{\theta}]$ is very close to $\frac{\pi}{3}$, then the arm $2$ is barely better than $3$, so we lose almost nothing on the interval where $E(\theta)=(3,1)$. However, we lose a lot when $E(\theta)=(3,2)$.
\end{remark}

\section{Bandit upper bound with collisions} \label{sec:add}
We prove here Theorem \ref{thm:add}.

\subsection{Strategy}
We denote by $q_t^X(i)$ the empirical mean for arm $i$ using the observed rewards on arm $i$ up to time $t$ by player $X$.
\newline

Initialization phase: During $t_0=40 \sqrt{T \log(T)}$ rounds, Alice stays on action $3$, and Bob alternates between action $1$ and $2$. 
We set 
\[
\mathcal{B} = \{1,2,3\} \setminus \left\{ i \in \{1,2\} : q_{t_0}^B(i) \geq 1 - \sqrt{\frac{\log(T)}{T}} \right\} \,.
\] 
to be the set of {\em valid} actions for Bob. Note that this set is always nonempty since $3 \in \mathcal{B}$.

For the rest of the game ($t > t_0$), Alice and Bob will play a phase-based strategy. Assume that $t_0 = 2^{k_0}$ for some $k_0 \in \N$. For each phase $k \geq k_0$ we define the strategies as follows.
\newline

\textbf{Alice:} 
\begin{enumerate}[label=(\roman*)]
\item \label{item_i}
At the beginning of a phase, i.e. for $t = 2^{k}+1$, if $|q_t^A(1) - q_t^A(2)| \geq 10 \sqrt{\frac{\log(T)}{t-t_0}}$ then for the rest of the game Alice stays put on $\arg\max_{i \in \{1,2\}} q_t^A(i)$. 
\item \label{item_ii}
Otherwise for all $t \in [2^k+1, 2^{k+1}]$ Alice plays action $1$ if $k$ is odd and action $2$ if $k$ is even.
\end{enumerate}

\textbf{Bob:}
\begin{enumerate}[label=(\roman*)]\setcounter{enumi}{2}
\item \label{item_iii}
If $\mathcal{B} = \{1,2,3\}$, then at the beginning of a phase $t = 2^{k}+1$, if $q_t^B(3) - \max_{i \in \{1,2\}} q_t^B(i) \geq 100 \sqrt{\frac{\log(T)}{t}}$, then we remove action $3$ from the valid actions for Bob, i.e., we set $\mathcal{B} = \{1,2\}$.
\item \label{item_iv}
On the other hand, if $q_t^B(3) - \min_{i \in \{1,2\} \cap \mathcal{B}} q_t^B(i) \leq - 100 \sqrt{\frac{\log(T)}{t}}$, then Bob plays action $3$ until the end of the game.
\item \label{item_v}
For any $t \in [2^k+1, 2^{k+1}]$, Bob plays alternatively between the actions in $\mathcal{B} \setminus \{i_k\}$, where $i_k=1$ if $k$ is odd, and $i_k =2$ if $k$ is even. (Note that $\mathcal{B} \setminus \{i_k\}$ is always non-empty since per item \ref{item_iii} we either have $3 \in \mathcal{B}$ or $\mathcal{B} = \{1,2\}$.) 
\item \label{item_vi}
If for all $t \in [2^k+1, 2^k + 40 \sqrt{T \log(T)}]$ Bob observes only losses of $1$ on action $3-i_k$ then it stops the current phase. Bob will now restart the game, forget all its previous observations (including the initialization phase) and play some single-player multi-armed bandit strategy on the set of actions $\{1,2, 3\} \setminus \{3-i_k\}$.
\end{enumerate} 

\subsection{Collisions analysis}
First we control the number of collisions. Let us denote $\tau^A$ for the time at which Alice fixates on an action according to item \ref{item_i}, and $\tau^B$ for the time at which Bob restarts according to item \ref{item_vi}. The next result means that the times $\tau^A$ and $\tau^B$ behave "as one would expect".
\begin{lemma} \label{lem:addnocoll}
With probability at least $1-\Omega(1/T)$, either Bob stops the game and play action $3$ forever (item \ref{item_iv}), or one has $\tau^B = \tau^A + 40 \sqrt{T \log(T)}$, and moreover in that case Bob restarts with the action set $\{1,2,3\}$ minus the action where Alice is fixated.
\end{lemma}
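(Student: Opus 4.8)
The plan is to isolate a single high-probability event $\mathcal{G}$ on which all the empirical quantities used by the two strategies behave as intended, and then read off the lemma as a deterministic statement by following the phase structure.

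\textbf{The good event.} For a fixed (deterministic) strategy, a fixed player $X$ and a fixed arm $i$, the \emph{genuine} losses (those not caused by a collision) that $X$ records on $i$, listed in observation order, form an i.i.d.\ $\Ber(p_i)$ sequence; so Hoeffding together with a union bound over $X$, $i$ and the at most $T$ possible sample sizes puts, up to probability $\Omega(1/T)$, every prefix average within its confidence radius of $p_i$. I would add to $\mathcal{G}$: (a)~the analogous \emph{multiplicative} Chernoff bounds on the number of observed $0$'s during the length-$t_0$ initialization --- this is needed because $t_0 = 40\sqrt{T \log(T)}$ gives an additive radius coarser than the threshold $\sqrt{\log(T)/T}$ used to prune $\mathcal{B}$, and the multiplicative bound still gives that on $\mathcal{G}$ every valid arm satisfies $p_i \le 1 - c\sqrt{\log(T)/T}$; and (b)~the event that, on any arm with $p_i \le 1 - c\sqrt{\log(T)/T}$, no window of $40\sqrt{T \log(T)}$ genuine observations is entirely made of $1$'s --- a union bound over the at most $T$ window starting times of $(1-c\sqrt{\log(T)/T})^{\Omega(\sqrt{T\log(T)})}$, which is polynomially small with a large exponent. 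The crucial \emph{one-sided} fact needs no randomness: a collision \emph{always} produces a loss of $1$, so whenever Bob keeps colliding on an arm, item~\ref{item_vi} does fire.

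\textbf{The deterministic analysis on $\mathcal{G}$.} I would argue in this order. First, by (a), every arm in $\mathcal{B}$ has $p_i$ bounded away from $1$, so by (b) Bob's item-\ref{item_vi} test never produces a false positive on an arm he plays without colliding. Second, Bob's arm-$3$ verdict: using the concentration of $q^B$, if $p_3$ lies below $\min_{i \in \{1,2\}\cap \mathcal{B}} p_i$ by more than the threshold then item~\ref{item_iv} fires at the first phase whose radius beats the gap and Bob plays $3$ forever --- the first alternative of the lemma; otherwise item~\ref{item_iv} never fires, and whether or not item~\ref{item_iii} prunes $3$ is then immaterial. Third, Alice: if $|p_1-p_2|$ is too small she never fixates, so $\tau^A = +\infty$; she then always plays $i_k$ while Bob plays in $\mathcal{B}\setminus\{i_k\}$, so there is never a collision and (by the first point) item~\ref{item_vi} never fires, whence $\tau^B = +\infty$ and the identity holds by convention. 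Otherwise Alice fixates at a well-defined phase $k'$ (the first whose radius is below roughly half the gap) on the empirically --- hence truly --- better of arms $1,2$, call it $a^*$, and no collision occurs before phase $k'$ for the same reason. Fourth, from phase $k'$ on Alice sits on $a^*$: in a phase $k \ge k'$ with $i_k = a^*$, Bob avoids $a^*$, so there is no collision and, by the first point, no item-\ref{item_vi} trigger; in a phase $k \ge k'$ with $i_k = 3-a^*$, we have $a^* \in \mathcal{B}\setminus\{i_k\}$, so Bob plays $a^*$ at least every other round, collides with Alice each such round, records only $1$'s on arm $3-i_k = a^*$, and hence item~\ref{item_vi} fires exactly $40\sqrt{T \log(T)}$ rounds into that phase, after which Bob restarts on $\{1,2,3\}\setminus\{a^*\}$, i.e.\ $\{1,2,3\}$ minus Alice's arm.

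\textbf{Main obstacle.} The delicate point, on which I would spend the most care, is the exact identity $\tau^B = \tau^A + 40\sqrt{T \log(T)}$: it requires that the first phase $k \ge k'$ with $i_k = 3-a^*$ be $k'$ itself, i.e.\ that the parity convention fixing $i_k$ line up with the phase at which Alice crosses her fixation threshold (otherwise the trigger slips by one phase and the additive term changes). A secondary nuisance is the degenerate regime in which an arm of $\{1,2\}$ is nearly deterministic and is pruned from $\mathcal{B}$ during initialization, where item~\ref{item_vi} must be read as being run only when $3-i_k \in \mathcal{B}$ so that the ``vacuously all-ones'' case does not cause a spurious restart; and, as usual, one must note that $\mathcal{G}$ nominally depends on the players' strategy but the genuine-loss sequences are i.i.d.\ regardless of it, so the union bound is legitimate.
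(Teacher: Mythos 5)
Your proposal is correct and takes essentially the same route as the paper: a Chernoff/Bernstein good event ensuring every valid arm satisfies $p_i \le 1-c\sqrt{\log(T)/T}$ and produces no all-ones run of length $\Omega(\sqrt{T\log T})$ of genuine observations, combined with the deterministic observations that no collision can occur before Alice fixates and that once she fixates, Bob's all-ones test fires within the phase where he is forced onto her arm. The parity/timing subtlety you flag is genuine but is an imprecision of the lemma's statement rather than a gap in your argument --- the paper's own proof only asserts that Bob restarts ``within the next phase'' after Alice fixates, which is all that is used downstream (namely the $O(\sqrt{T\log T})$ bound on the total number of collisions).
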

Note that this result implies that with high probability the total number of collisions is smaller than $40 \sqrt{T \log(T)}$. Indeed, by construction there can be no collisions before $\min(\tau^A, \tau^B)$, and moreover there are no collisions after $\max(\tau^A, \tau^B)$ if Bob's actions after restart do not include Alice's fixated action.

\begin{proof}
Using Bernstein's inequality, one has with probability at least $1-1/T^2$ that for $i \in \{1,2\}$,
\begin{equation} \label{firstevent}
p_i \geq 1 - \frac{1}{10} \sqrt{\frac{\log(T)}{T}} \Rightarrow i \not\in \mathcal{B} \,.
\end{equation}
Consider now an action $i$ with $p_i \leq 1 - \frac{1}{10} \sqrt{\frac{\log(T)}{T}}$, and say that Bob plays action $i$ for $20 \sqrt{T \log(T)}$ rounds while Alice is not playing it. Then, using that $\P(\mathrm{Bin}(k, 1-\epsilon) = k) = (1-\epsilon)^k \leq \exp( - k \epsilon)$, the probability that he sees only losses of $1$ during those rounds is smaller than $1/T^2$.
\newline

Observe that by construction, Alice and Bob cannot collide before $\min(\tau^A, \tau^B)$. In particular, if Alice is not fixated, then Bob can only restart using item \ref{item_vi} if he sees all $1$'s for $20 \sqrt{T \log(T)}$ trials of a valid action while Alice is there. The above statements show, via a simple union bound, that the probability of this happening at some phase is $O(1/T)$. Thus we see that with high probability, if Bob restarts, then Alice must be fixated. The converse is also clearly true, namely once Alice fixates, Bob will restart within the next phase by construction, and moreover he will restart without including Alice's action.
\end{proof}

\subsection{Proof of Theorem \ref{thm:add}}
We first note that the contribution of the initialization phase to the regret is $O(\sqrt{T \log T})$, so we can focus on the part $t>t_0$.
All statements in this proof will hold with probability $1-\Omega(1/T)$ (in particular we assume that the event of Lemma \ref{lem:addnocoll} holds true, so that there are no collisions before Alice fixates). Without loss of generality we assume that $p_1 \leq p_2$. Alice always fixates on action $1$ by item \ref{item_i} after at most $O(\log(T)/(p_2 - p_1)^2)$ rounds, and in particular her regret with respect to playing action $1$ all the time is at most $O \left( \min(\log(T)/(p_2-p_1), T (p_2 - p_1) \right)$ (since she always plays either action $1$ or $2$). Thus her regret with respect to action $1$ is always $O(\sqrt{T \log(T)})$. Moreover, after $\tau^B$, the regret of Bob is always controlled too by the same argument. Thus we only need to analyze the regret of Bob before $\tau^B$ (in fact before $\tau^A$, since he must restart within $O(\sqrt{T \log(T)})$ rounds from $\tau^A$ by Lemma \ref{lem:addnocoll}). To do so, we distinguish three cases. In case $1$ and case $2$ the best action will be action $3$, so that we need to calculate the regret of Bob with respect to playing action $3$ all the time. In those first two cases we will see that it can only help if Bob removes an arm at initialization. On the other hand in case $3$, action $1$ is the best and action $2$ the second best, so that we need to calculate the regret of Bob with respect to playing action $2$ all the time (since the regret of Alice is calculated with respect to action $1$). In that latter case we need to argue about what happens if an action is removed at the end of initialization.

\paragraph{Case 1: $p_3 \leq p_1 \leq p_2$.} We know that $\tau^A \leq O(\log(T)/(p_2-p_1)^2)$, so Bob plays action $1$ and $2$ at most that many times (before $\tau^A$). But we also know by item \ref{item_iv} that Bob will stop the game and play action $3$ after at most $O(\log(T) / (p_1-p_3)^2)$ rounds (this is true even if an arm is removed at initialization). Thus in fact he plays action $1$ and $2$ (before $\tau^A$) at most $O\left( \frac{\log(T)}{\max(p_1 - p_3, p_2-p_1)^2} \right)$, and the resulting regret is at most $O\left( \log(T) \frac{p_2-p_3}{\max(p_1 - p_3, p_2-p_1)^2} \right) = O\left(\frac{\log(T)}{p_2 - p_3} \right)$. Note also that Bob's regret is always smaller than $T (p_2 - p_3)$. Thus in this case we get again $O(\sqrt{T \log(T)})$.

\paragraph{Case 2: $p_1 \leq p_3 \leq p_2$.} Here Bob simply pays less regret than Alice before $\tau^A$ (since any suboptimal play of Alice cost $p_2-p_1$ while a suboptimal play of Bob only cost $p_2 - p_3$).

\paragraph{Case 3: $p_1 \leq p_2 \leq p_3$.} First note that the condition of item \ref{item_iv} where Bob stops the game to play action $3$ forever will not trigger. 

Next let us assume that Bob does not remove any arm at the end of initialization. Then Bob will remove action $3$ from its valid actions using item \ref{item_iii} after at most $O(\log(T)/(p_3 - p_2)^2)$ rounds, so the regret suffered by this is at most $O(\log(T)/(p_3 - p_2))$. Note also that Bob's regret is always smaller than $T (p_3 - p_2)$. Thus in this case we get again $O(\sqrt{T \log(T)})$. 

Finally let us assume that Bob does remove either arm $1$ or arm $2$ at the end of initialization. Then it must be that $p_3 - p_2 = O \left( \sqrt{\frac{\log(T)}{T}} \right)$ (by \eqref{firstevent} and the fact that $p_3 \geq p_2$), and thus the regret is automatically $O(\sqrt{T \log(T)})$.

\subsubsection*{Acknowledgment}
We are grateful to Omer Angel for a suggestion that simplified the proof of Theorem \ref{thm:lower}. We also thank three anonymous referees for useful remarks.

\bibliographystyle{plainnat}
\bibliography{newbib}

\end{document}